\documentclass[12pt]{article}

\usepackage{amssymb,amsmath}
\usepackage{graphicx,psfrag,epsf}
\usepackage{caption}
\usepackage{enumerate}
\usepackage{natbib}
\usepackage{url} 
\usepackage{algorithm}
\usepackage{algorithmic}
\usepackage{authblk}
\usepackage[utf8]{inputenc} 
\usepackage[T1]{fontenc}    
\usepackage{url}            
\usepackage{booktabs}       
\usepackage{amsfonts}       
\usepackage{nicefrac}       
\usepackage{microtype}      
\usepackage{lipsum}
\usepackage{fancyhdr}       
\usepackage{subfig}
\usepackage{float}
\usepackage{amsfonts,amssymb,graphics,epsfig,verbatim,latexsym,url,amsbsy}
\usepackage{listings} 
\usepackage{empheq}
\usepackage{mathptmx}
\usepackage{bm}
\usepackage{helvet}
\usepackage{mathtools}
\usepackage{courier}
\usepackage{imakeidx} \makeindex[options = -s svind]
\usepackage{multicol}
\usepackage[bottom]{footmisc}
\usepackage{breqn}
\usepackage{color}
\usepackage{tikz}
\usepackage[toc,page]{appendix}
\usepackage{pgf}
\usepackage{bbm}
\usepackage{subcaption}
\usepackage{caption}
\usepackage{enumitem}
\usepackage{ntheorem}

\addtolength{\oddsidemargin}{-.5in}%
\addtolength{\evensidemargin}{-.5in}%
\addtolength{\textwidth}{1in}%
\addtolength{\textheight}{1.3in}%
\addtolength{\topmargin}{-.8in}%

\graphicspath{{media/}}     
\RequirePackage[colorlinks,citecolor=blue,linkcolor=blue,urlcolor=blue,pagebackref]{hyperref}

\newtheorem{assumption}{Assumption}
\newtheorem{theorem}{Theorem}
\newtheorem{lemma}{Lemma}
\newtheorem{proposition}{Proposition}

\newtheorem{proof}{Proof}
\newtheorem{remark}{Remark}

\newtheorem*{example_un}{Example}
\providecommand{\keywords}[1]
{
  \small	
  \textbf{\textit{Keywords---}} #1
}



\font\myfont=cmr12 at 11pt
  
\title{Weak instrumental variables due to ignored nonlinearities in  panel data: A Super Learner Control Function estimator}

\author[$\dagger$]{Monika Avila M\'{a}rquez
 \thanks{I am indebted to Prof. Rajen Shah for his invaluable feedback, as well as to
  Prof. Juan-Manuel Rodriguez-Poo, and Prof. Stefan Sperlich. 
   This paper has benefited from my time at the School of Economics, 
   University of Bristol. I thank the members of the Bristol
   Econometrics Group for their constructive comments. All errors are my own.} 
}

\affil[$\dagger$]{ \myfont University of Geneva, Methods and Data Analysis\\
\myfont monika.avila@unige.ch}

\begin{document}
\maketitle

\begin{abstract}
A triangular structural panel data model with additively separable 
individual-specific effects is used to model the causal effect of a 
covariate on an outcome variable when there are unobservable confounders
 with some of them time-invariant.  In this setup, a linear specification 
 for the reduced-form equation might be problematic when the conditional 
 mean of the endogenous 
 covariate and the instrumental variables is nonlinear in the population.  
 The reason is that 
 ignoring the nonlinearity could lead to weak instruments (instruments are 
 weakly correlated with the endogenous covariate) due to misspecification
 as shown using a generalized concentration parameter for panel data.
 As a solution,  we propose a triangular simultaneous equation model 
 for panel data with additive separable individual-specific fixed effects 
 composed of a linear structural equation with a nonlinear reduced form equation.
  The parameter of interest is the structural parameter of the endogenous variable. 
   The identification of this parameter is obtained under the assumption of 
   available exclusion restrictions and using a control function approach.
     We provide an estimator that 
     we call Super Learner Control Function estimator (SLCFE).  
     The estimation procedure is composed of two main steps 
     and cross-fitting.  First, we estimate the control 
     function using a super learner.  In the following step,
      we use the estimated control function to control for 
      endogeneity in the structural equation.  Cross-fitting is done across the individual dimension.  
      The estimator is consistent and asymptotically normal achieving a parametric rate of convergence.       
        We show that
        the SLCF estimator differs from both the plug-in IV estimator and a 
        naive plug-in 2SLS estimator, with the former not being $\sqrt{N_T}$-consistent 
        without cross-fitting, and
        the latter not being $\sqrt{N_T}$-consistent even with cross-fitting. 
\end{abstract}

\keywords{Endogeneity,
Super Learner, Instrumental Variables,
Identification,
Control Function, Panel data, Individual Fixed Effects}

\section{Introduction}


A triangular structural panel data model with additive separable 
individual-specific effects is used to model the causal effect of 
a covariate on an outcome variable when there are unobservable
 confounders, and some of them are time-invariant.  In this setup,
  it is common to assume that the structural equation is linear and to use 
  a linear projection in the reduced form equation.  However,
   a linear specification for the reduced-form equation might be problematic 
   when the 
   conditional mean of the endogenous covariate and the instrumental
    variables is nonlinear in the population.  The reason is that 
    ignoring the nonlinearity 
    could lead to problems due to weak instruments since the strength of the 
    linear relationship might not be high. This paper proposes to exploit the
     nonlinearities in the relationship between the endogenous covariate and 
     the instrumental variables as a way to avoid problems of weak instruments
      due to nonlinearities. \footnote{Instruments are weak when they are
       weakly correlated to the endogenous covariate.  
       In a cross-sectional setup, weak IVs cause that
        the median of the 2SLS estimator is biased towards the OLS estimator, 
        2SLS t-test presents size inflation \citep{KeaneARE2024}. In panel data,
         Within-2SLS estimator is consistent when the time dimension grows
          \citep{CaiER2012}.}%
           It is important to note that we focus on studying the issue of weak 
           instruments due to misspecification of the reduced-form equation as 
           well as on providing solutions to this problem.  Importantly, we do not 
           solve the problem of weak-identification since strong instruments are 
still required even after allowing for a flexible specification of the 
 reduced-form equation.  

An example is the relationship between air pollution and children's 
educational outcomes, where both variables are influenced by 
unobserved confounders that can be either time-varying or time-invariant.
 For instance, neighborhood characteristics and family income are often 
 unobserved in publicly available data. While neighborhood characteristics 
 typically remain constant over time, family income fluctuates. 
 Both factors are relevant to a child's academic achievement 
 and exposure to air pollution, leading to endogeneity in pollution exposure.
  To address this, fixed effects can be used to control for unobserved neighborhood
   characteristics, while an instrumental variable can help account for time-varying 
   confounders. A commonly used instrumental variable is wind direction, 
   which has a well-established nonlinear relationship with air pollution levels
    \citep{ZabrockiN2022}. 


This paper studies a triangular structural panel data model in which the reduced 
form equation is nonlinear and individual specific effects are present. It proposes
 the use of Super Learning \footnote{It is feasible to use other machine learning 
 methods instead of Super Learning. However, Super Learning automatically combines
 different methods in an optimal way achieving a convergence rate that is lower or 
 equal to the best learner included in the library of base learners.} 
 for the estimation of the nonlinearities in the model. 
  For simplicity consider the following triangular model without covariates for 
  $i \in \{1,...,N\}$,  $t \in \{1,...,T\}$: 
 
  \begin{equation}\label{structural_equation_toy}
y_{it}=x_{1it}\beta_{1_o}+\alpha_{i,y}+\varepsilon_{it},  \quad with \quad
\mathbb{E}[\varepsilon_{it}| x_{1i} ,z_{i}, \alpha_{i,y}]=\mathbb{E}[\varepsilon_{it}| u_{it}],
\end{equation}
\begin{equation}\label{reducedform_equation_toy}
x_{1it}=g_o(z_{it})+\alpha_{i,1x}+u_{it},   \quad with \quad    \mathbb{E}[u_{it}|z_{i},\alpha_{i,1x}]=0, 
\end{equation}
\begin{equation}\label{relationship_errors_toy}
\varepsilon_{it}=\rho u_{it} + \omega_{it}, \quad with \quad 
\mathbb{E}[\omega_{it} |u_{i} ]=0,
\end{equation}

 \noindent where $y_{it}$, and $x_{it}$ are continuous and bounded, $u_i' = [u_{i1}, u_{i2}, ..., u_{iT}]$, $z_i' = [z_{i1}, z_{i2}, ..., z_{iT}]$ are vectors stacking up all the observations available per individual. $\alpha_{i,y}$, and $\alpha_{i,1x}$ represent individual specific effects. \footnote{$\alpha_{i,y}$, and $\alpha_{i,1x}$ can be equal as in the cited example but for generality we allow them to be different.}
 
 The model comprises a structural equation 
 \ref{structural_equation_toy} with an endogenous
  regressor $x_{1it}$ that presents a non-linear
   relationship with the available instrumental
    variable $z_{it}$ \ref{reducedform_equation_toy}.  Both equations present individual effects that represent time-invariant unobserved confounders.  In this setup, conditioning on the individual effects allows us to control for time-invariant unobserved confounders.  But, estimating all the individual effects leads to incidental parameter bias as the number of parameters to estimate grows with the sample size. Then, in order to eliminate the individual effects,  we propose to transform  the model (e.g. first-differencing, or a within transformation which is equivalent to obtain the deviations of the original variables from their individual mean).  On the other hand, the presence of time-varying unobserved confounders is modeled through a linear relationship between the error terms in both equations (\ref{relationship_errors_toy}).  We deal with this issue by exploiting available exclusion restrictions (instrumental variables) along with a Control Function approach on the transformed model.

To estimate the causal parameter of interest $\beta_{1_o}$,
 we provide a two-step estimator.  
  In the first step,  we use a super learner \citep{VanSAMB2007} to 
  learn the nonlinear relationship of the transformed endogenous
   variable with the instrumental variables and the observable confounders.   
   After this,  we obtain the residuals of the transformed endogenous variable.  
   In the second step,  we use the estimated residuals in the first stage as a 
   control function for the endogenous variable in the structural equation. We call this estimator Super Learner Control 
   Function (SLCF) estimator.  The main results of the paper are that the Super Learner Control Function 
   estimator is consistent and asymptotically normal with a convergence rate equal 
   to $\sqrt{N_T}$ \footnote{$N_T = \sum_i^N T_i$.}. An extensive Monte Carlo experiment is performed 
   to test the small sample properties of the proposed estimator. 
   We conclude that the estimator performs well provided that we can accurately
    learn the nuisance parameter in the first stage.     
An R package is available from the author. 

Our paper contributes to the panel data literature in several ways. 
To the best of our knowledge, 
it is the first to study and establish the following results for panel data.  
First, this paper studies the use of machine learning to estimate 
panel data models with an endogenous covariate (arising from unobserved confounders 
 or measurement error) and additive individual fixed effects using a control 
 function approach and exploiting nonlinearities. Second, it provides general theoretical results that apply to different
  panel data transformations \footnote{Panel data transformations
  are required to eliminate the individual  specific effects} while using 
  machine learning methods and control function. Third, it shows that, under suitable 
  conditions, one can identify the transformed nuisance parameter and as a result 
  the transformed control function. Fourth, it shows that a control function estimator
   with a parametric assumption in the structural equation is a double machine learning estimator.
    Finally, it shows that the control function estimator is not equivalent to a
  plug-in IV estimation. 

  In contrast, the available papers using machine learning to estimate panel data
   models focus on exogenous covariates (time-varying confounders are observed) and 
   the presence of individual fixed effects and focus on specific panel data transformations.
   For instance, \cite{klosinWP2022} propose a 
   debiased LASSO estimator to estimate heterogeneous treatment effects of a 
   continuous treatment, which is modeled using a structural nonparametric panel
    data model with additive fixed effects and strictly exogenous covariates. They
     account for the individual specific effects by taking the first-difference of
      the original model.  \cite{ SemenovaQE2023} focus on a structural equation
       model for panel data allowing for heterogeneous treatment effects with 
       unconfoundedness on observables (sequentially exogenous covariates) and individual specific fixed effects. 
In order to deal with the individual fixed effects, the authors use a Mundlak approach.  They provide an algorithm composed of three steps.
 In the first step, using cross-fitting they estimate the nuisance functions of the structural and treatment equations to obtain the residuals. In the second step, they estimate the conditional average treatment effect using lasso estimators.   In the third step,  they provide debiased inference.

This paper also contributes to the literature on weak instruments in panel
 data by being the first to focus on a setup with \textit{short time dimension} in
 the presence of nonlinearities. In contrast, existing papers on
  weak instrumental variables in panel data focus on the asymptotic properties of 
  the Within 2SLS estimator providing conditions for consistency
  under weak instruments 
  when the \textit{time dimension grows to infinity}. \cite{CaiER2012} show that the Within 2SLS is consistent when the 
 individual and time dimension grow to infinity. They also present asymptotically 
 pivotal tests (Anderson-Rubin test, Kleibergen test) for longitudinal data.




In Section \ref{S_Examples}, we present motivating examples.  In Section
\ref{S_model_line_nonlin}, we present the baseline model. In Section 
\ref{S_Identification}, we describe the identification strategy. In Section 
\ref{S_SLCF}, we present the SLCF estimator. In Section \ref{S_Properties}, we 
  present the large sample properties of the SLCF estimator. 
  In Section \ref{S_Comparison}, we present a comparison of the proposed estimator
  with IV and 2SLS plugin estimators. In Section \ref{S_Simulation}, we present
   the simulation experiment and its results, 
in Section \ref{EA} we present the results of
   the empirical application, and in 
   Section \ref{S_Conclusions} we present the
    conclusions of the paper.


\section{Other Related literature}

This paper is related to the literature on semiparametric panel data models.   
\citet{RodriguezPoo2017JES} explain that one can either use a first-differencing 
approach or a profiling technique to estimate a semiparametric model with an 
additive separable individual specific effect and a disturbance term. 
\citet{Li1996EL} explains that kernel estimation of the conditional expectation of the first-differenced endogenous regressor is not feasible when the dimension of the conditioning set is larger than 5.  As a solution,  \citet{BaltagiASF2002} presents an additive estimation of the first-differenced non-linear function.  

In addition, this paper is related to the literature on orthogonal/debiased 
estimation, which focus on providing  $\sqrt{N}$ consistent estimators of
 low-dimensional parameters in the presence of high-dimensional nuisance 
 parameters.  \citet{Chernozhukov2018EJ} define an orthogonal score as the 
 one that presents a vanishing Gateaux derivative with respect to the nuisance
  parameters when evaluated at the true finite-dimensional parameter values. 
    The orthogonal score is closely related to the score proposed by 
    \citet{Robinson1988E}.  

Finally, our work relates to the literature of mixed effects models and machine 
learning.  Our model is related to a mixed effects model assuming that covariates 
are exogenous conditioning on the cluster random effects. \cite{EmmeneggerSJS2023}
 use double machine learning to estimate a partially linear mixed-effects model.  
 Our setup differs from \cite{EmmeneggerSJS2023} as it allows for endogeneity, 
 it includes a control function leading to the study of a different score function
 than the one presented by the authors.
  \cite{Young2024JRSSB} propose a sandwich boosting for accurately estimating partially linear models for grouped data. 

\subsection{Super Learner}

The Super Learner was proposed by \cite{VanSAMB2007}, it is a heterogeneous ensemble method within 
Supervised Learning. It consists of different base learners that are trained on the training data set, then
their predictions are combined using a meta-learner by minimizing the cross-validated risk. The meta-learner
can be a linear regression of the predictions of each learner on the outcome variable.  
\cite{VanSAMB2007} show that
the Super Learner performs asymptotically as well as or better than any learner in the library of base learners.

\section{Motivating examples}\label{S_Examples}

As mentioned in the Introduction, measuring the causal effect of air pollution on
 health and educational outcomes is not straightforward as it is not randomly 
 assigned.  More specifically, time-varying unobservables such as family income might 
 affect air pollution and health or educational outcomes.  In addition, unobserved 
 time-invariant variables, such as neighborhood characteristics or genetics, 
 might also be present.  While time-invariant unobservables could be dealt with
  by including individual fixed effects in the model, the presence of time-varying
   unobservable confounders requires instrumental variables. In this setup, wind 
   direction is commonly used as an instrumental variable.  While it is well 
   documented that the relationship between air pollution and wind direction is
    nonlinear \citep{ZabrockiN2022}, it is common to see that the nonlinearity is 
    ignored or modeled with a parametric assumption.

\subsection{Air pollution and educational outcomes}

\cite{GilraineNBER2022} study the effect of air pollution on educational outcomes. They specify the following linear regression for subject $s$, cohort $c$, district $d$, year $t$:

\begin{equation}
    y_{sdct}=\beta PM2.5_{dt}+X_{sdct}'\gamma+W_{dt}'\eta+\omega_s+\alpha_{d} + \theta_{c}+\lambda_t+\epsilon_{scdt}, 
\end{equation}

\noindent where $y_{sdct}$ represents the average grade of subject $s$, cohort $c$, 
district $d$ at year $t$. $PM2.5_{dt}$ represents the average level of $PM2.5$ 
at district $d$ in year $t$, $X_{sdct}$ is a vector of control variables, $W_{dt}$ 
is a vector of weather variables, $\omega_s$ represents subject specific effects, 
$\alpha_{d}$ represents district fixed effects, $\theta_{c}$ represents cohort 
fixed effects, $\lambda_t$ are time fixed effects, and $\epsilon_{scdt}$ is the disturbance term.

As mentioned before, $PM2.5_{dt}$ is not randomly assigned and subject to measurement error.  Thus, the authors use year-to-year changes in coal production and a shift-share instrument equal to the interaction of fuel shares used for nearby production with national growth rates along with a linear first-stage regression. 

 As an alternative, following \cite{DeryuginaAER2019} we could use wind direction as an instrument of air pollution, but with a nonlinear 
 first-stage equation as follows:     

\begin{equation}    PM25_{dt}=h(WD_{dt},X_{dt})+\gamma_{d}+\varepsilon_{dt},   
\end{equation}

\noindent where $WD_{dt}$ is the average wind direction in district $d$ in year $t$, and $\gamma_{d}$ represent district specific effects.

\subsection{Air pollution and health outcomes}
 \cite{DeryuginaNBERWP2023} study the effect of air pollution on mortality rates. More precisely, they specify the following linear regression for county $c$, day $d$, month $m$ and year $y$:

\begin{equation}
    y_{cdmy}^k=\beta SO2_{cdmy}+X_{cdmy}'\gamma+ \alpha_{cm}+ \alpha_{my}+\epsilon_{cdmy}, 
\end{equation}

\noindent where $y_{cdmy}^k$ represents the cumulative mortality rate $k$ days after 
day $d$, $SO2_{cdmy}$ represents the level of $SO2$ at day $d$ in county $c$, 
$X_{cdmy}$ is a vector of control variables that include precipitation,
 wind speed, and temperature, $\alpha_{cm}$ represents county-month fixed effects, 
 $\alpha_{my}$ month-by-year fixed effects,  
 and $\epsilon_{cdmy}$ is the disturbance term.

 For identification, the authors use wind direction changes as an instrument for air pollution. The authors assume that the first-stage regression presents a parametric nonlinear specification.

\begin{equation}
SO2_{cdmy}=\sum_{g=1}^{50} f^g(\theta_{cdmy})+X_{cdmy}'\delta+  \alpha_{cm} + \alpha_{my}+\varepsilon_{cd},   
\end{equation}

\noindent where: 

\begin{equation}
    f^g(\theta_{cdmy})=\gamma_g^1 \mathbf{1}\{G_c=g\} sin(\theta_{cdmy})+\gamma_g^2\mathbf{1}\{G_c=g\} sin(\theta_{cdmy}/2).
\end{equation}

In this equation, $\mathbf{1}\{G_c=g\}$ is an indicator function that is equal to 1 
if county c is member of group $g$ and 0 otherwise. $\theta_{cdmy}$ is local wind
 direction measured in radians.  There are 100 instrumental variables.

The authors show that their results are robust to different parametric specifications of the first-stage regression but do not allow for a nonparametric specification.  However, the concern is that the parametric specifications do not fully capture the nonlinearity in the first-stage equation. Thus, we propose to use a super learner to estimate the nonlinear first-stage regression as follows:

\begin{equation}
SO2_{cdmy}=h(\theta_{cdmy},X_{cdmy}') + \alpha_{cm}+ \alpha_{my}+\varepsilon_{cd}, 
\end{equation}

\noindent where $h(\cdot)$ is an unknown nonlinear function.


\section{ The Setup}\label{S_model_line_nonlin}

We study a model composed of a linear structural equation, the outcome variable ($y_{it}$) explained by an endogenous variable ($x_{1it}$), $K-1$ exogenous regressors ($\tilde{x}_{it}$), and an additive individual-specific effect ($\alpha_{i,y}$). The model presents an external instrumental variable ($z_{it}$) for the endogenous regressor. The endogenous regressor is mean dependent on the exogenous regressors,  the instrumental variable, and the individual specific effects $\alpha_{i,1x}$. We assume that: 
\sloppy
\begin{assumption}\label{A_data}
We observe a sequence of data sets $\{O_i = (y_{i},x_{1i},\ldots,x_{Ki},z_{i})\}_{i=1}^N$,
 which are $T_i\times 1$ independent copies of the $K+2$ random vectors
  $(\mathbf{y},\mathbf{x}_1,\mathbf{x}_2, \ldots,\mathbf{x}_K,\mathbf{z})$ 
  taking values in $[-M_y,M_y]^T \times [-M_{x1},M_{x1}]^T \times [-M_{xK},M_{xK}]^T \times [-M_z,M_z]^T$ 
  with $M_y<\infty$, $M_{xk}<\infty$,  $M_z<\infty$. Thus, each element $t$ of
   $\mathbf{y}$ is bounded and contained in $[-M_y,M_y]$.  Similarly, each element of
    $x_k$ is contained in $[-M_{xk},M_{xk}]$, which means that $\mathbf{x}_{kt}$ 
    is a continuously distributed random variable with compact support $[-M_{xk},M_{xk}]$.
     In addition, each element of $\mathbf{z}$ is a continuously distributed random variable with compact support $[-M_z,M_z]$. 
     Finally, $T_i$ is uniformly bounded such that $T_i < T_{max}$.
\end{assumption}
Assumption \ref{A_data} states that each individual $i$ presents $T_i$ observations.
  The total number of observations is denoted by $N_T = \sum_i T_i$.  


\begin{assumption}\label{A_structural_eq}
 The structural equation has a linear form with an additive unobserved error term and an additive individual specific effect $\alpha_{i,y}$
 
 \begin{equation}
  \begin{gathered}
y_{it}=x_{1it}\beta_{1_o}+\tilde{x}_{it}'\beta_{2_o}+\alpha_{i,y}+\varepsilon_{it}, \quad i \in \{1,...,N\},  t \in \{1,...,T_i\}, \\
\mathbb{E}[\varepsilon_{it}|\tilde{x}_{i}, z_{i}, \alpha_{i,y}]=
\mathbb{E}[\varepsilon_{it}|\tilde{x}_{i}, \alpha_{i,y}]=0.     
 \end{gathered}
 \end{equation}

\noindent where $\tilde{x}_{it}=[x_{2it}, x_{3it}, ..., x_{Kit}]'$, 
$z_i = [z_{i1}, \cdots, z_{iT}]'$.  

\noindent  The unobserved random term $\varepsilon_{it}$ is independent and identically
 distributed with a continuous cumulative distribution function. It has a compact 
 support with zero mean conditional on  $\tilde{x}_{i}=[\tilde{x}_{i1}, \tilde{x}_{i2}, \ldots,
  \tilde{x}_{iT_i}]'$, $z_{i}=[z_{i1}, z_{i2},..., z_{iT_i}]'$, $\alpha_{i,y}$, and 
   $Var(\varepsilon_{it}|\tilde{x}_{i}, z_{i}, \alpha_{i,y})< \infty$. $\alpha_{i,y}$ 
   are individual specific
   effects with sigma-field $\mathcal{A}_y$.
    The unobserved parameter
    of interest is $\beta_{1_o}$.
\end{assumption}

\noindent Under assumption \ref{A_structural_eq}, $\tilde{x}_{it}$ are strictly exogenous in
 the structural equation conditional on $\alpha_{i,y}$.  In addition, $\{z_{is}, \quad s=1,\dots,T \}$
   are uncorrelated with the structural error term $\varepsilon_{it}$ conditional on $\alpha_{i,y}$. 

\begin{assumption}\label{A_reduced_form}
The regressor $x_{1it}$ is endogenous and has an unknown nonlinear reduced form equation with an 
additive separable individual specific effect $\alpha_{i,1x}$: 
\begin{equation}
\begin{gathered}    
x_{1it}=g_o(\tilde{x}_{it},z_{it})+\alpha_{i,1x}+u_{it}, \quad i \in \{1,...,N\},  t \in \{1,...,T_i\}, \\
\mathbb{E}[u_{it}|\tilde{x}_{i},z_{i},\alpha_{i,1x}]=0,\\
\end{gathered}
\end{equation}

 \noindent with the unknown function $g_o: [-M_x,M_x]^K \times [-M_z,M_z]\rightarrow \mathcal{S} \subset \mathbb{R}$, with 
 $g_o \in \mathcal{H}(\varsigma,C)$ where $\mathcal{H}(\varsigma, C)$ represents a H\"older-class with smoothness parameter
 equal to $\varsigma \geq 2$\footnote{If the unknown function is linear, the SLCF estimator is still consistent
 if instruments are not weak.  We argue that if the unknown function $g_o$ is linear, 
 the library of the base learners includes the linear regression, and the
 instruments are appropriately transformed, then the SLCF estimator is equivalent to the 2SLS estimator. 
 A proof of this statement is out of the scope of this paper.}. $\alpha_{i,1x}$ are individual specific
   effects measurable with respect to $\mathcal{A}_y$, therefore knowledge of $\alpha_{i,y}$ implies 
   knowledge of $\alpha_{i,1x}$. The reduced form error term $u_{it}$ is independent and identically distributed 
 with a continuous cumulative distribution function. It has a compact support 
 with zero mean conditional on  $\tilde{x}_{i}=[\tilde{x}_{i1},
  \tilde{x}_{i2}, \ldots, \tilde{x}_{iT_i}]'$, 
  $z_{i}=[z_{i1}, z_{i2},..., z_{iT_i}]'$, and  $Var(u_{it}|\tilde{x}_{i}, z_{i}, \alpha_{i,1x})< \infty$. 
 
\end{assumption} 

 \noindent Under assumption \ref{A_reduced_form}, $\tilde{x}_{it}$,  $z_{it}$ are strictly exogenous in the reduced form equation conditional on the individual specific effect $\alpha_{i,1x}$.  In addition, this assumption states that the endogenous regressor has a nonlinear relationship with the exogenous regressors $\tilde{x}_{it}$, and the variable $z_{it}$.

As a consequence of Assumptions \ref{A_structural_eq} and \ref{A_reduced_form}, $z_{it}$ is an instrumental variable.  

\noindent The decomposition of $x_{1it}$ into a conditional expectation and an additive disturbance term is appropriate under the assumption that $x_{1it}$ is continuous (Assumption \ref{A_data}).     Assumption \ref{A_reduced_form} needs to be modified when $x_{1it}$ is a non-continuous endogenous explanatory variable.  

\begin{assumption} \label{A_eps_u_omega}
The structural error term $\varepsilon_{it}$  has a linear relationship with $u_{it}$:
\begin{equation}
\begin{gathered}
\varepsilon_{it}=\rho u_{it} + \omega_{it} , \quad i \in \{1,...,N\},  t \in \{1,...,T_i\},   \\
\mathbb{E}[ \omega_{it} |\tilde{x}_{i}, z_i, \alpha_{i,y}, u_{i}] = \mathbb{E}[ \omega_{it} |u_{i}]=0.
\end{gathered}
\end{equation}

\noindent  The error term $\omega_{it}$ is independent and identically 
distributed with a continuous cumulative distribution. It has a bounded compact 
support with zero mean conditional on $u_{i}=[u_{i1}, u_{i2}, \dots, u_{iT_i}]'$,
 and  $Var(\omega_{it}|u_{it}) = \sigma^2_{\omega}< \infty$.
\end{assumption}
As a consequence of Assumption \ref{A_eps_u_omega}, $\omega_{it}$ is uncorrelated 
with $u_{it}$.

\begin{remark}
    The assumption of a linear relationship between $\varepsilon_{it}$ and $u_{it}$ is a 
    crucial assumption in panel data setups.  
    Relaxing this assumption to allow for a nonlinear relationship with $u_{it}$ is not trivial even 
    when there are only additive individual fixed effects. Since one needs to eliminate fixed effects
    through transforming the data, one can only identify the transformed reduced form errors
    under assumptions \ref{A_structural_eq}, and \ref{A_reduced_form}. 
     However, identification under a nonlinear function of $u_{it}$ requires knowledge of the 
     untransformed reduced form errors which is not feasible (See Section \ref{S_Identification}). 
     For example, in the case of first-differencing, we only identify the 
    first-differenced reduced-form errors but we would need the current and first-lag of
     the reduced form errors in the required information set. These reduced-form errors 
    are not separately identifiable under assumptions \ref{A_structural_eq}, and \ref{A_reduced_form}.   
\end{remark}

As consequence of Assumptions \ref{A_reduced_form} and \ref{A_eps_u_omega}, 
the structural error term $\varepsilon_{it}$ is correlated with $x_{1it}$. The reason 
is that $\mathbb{E}[x_{1it}\varepsilon_{it}]=\mathbb{E}[(g_o(\tilde{x}_{it},z_{it})
+\alpha_{i,1x}+u_{it})\varepsilon_{it}]$ by Assumption \ref{A_reduced_form}.  
The last expression is equal to $\mathbb{E}[(g_o(\tilde{x}_{it},z_{it})+
\alpha_{i,1x}+u_{it})(\rho_o u_{it}+\omega_{it})]=\rho_o\mathbb{E}[u_{it}^2]$ 
by Assumptions \ref{A_reduced_form} and \ref{A_eps_u_omega}.    


\noindent The error term $\omega_{it}$ has zero mean conditional on 
$x_{1it}$,  $\tilde{x}_{it}$, $\alpha_{i,y}$, and $u_{it}$
 ($\mathbb{E}[\omega_{it}|x_{1it},\tilde{x}_{it},u_{it},\alpha_{i,y}]=0$) 
 because $\mathbb{E}[ \omega_{it} |x_{1i}, \tilde{x}_{i}, z_i, \alpha_{i,y}]
 =\mathbb{E}[\omega_{it}|u_{i}, \tilde{x}_{i},z_i,\alpha_{i,y}]=\mathbb{E}[\omega_{it}\mid u_i]=0$.  The first equality 
is a consequence of the fact that $x_{1it}$ is a one-to-one function
 with $u_{it}$ conditional on $\tilde{x}_{it}$, $z_{it}$, $\alpha_{i,1x} 
 \subseteq \alpha_{i,y}$  \citep{Wooldridge2010Book}. The second equality holds by
 Assumption \ref{A_eps_u_omega}.

\subsection{Relationship with the Potential Outcomes Framework}\label{Potential}

 In this subsection, we link our setup with the potential outcomes framework.  Following \cite{holland1988causal} and \cite{guo2016control}, we define the potential
  outcome that would be observed if the treatment is continuous $x_{1,it} = 
  x_{1,it}^*$ as $y_{it}(x_{1,it}^*)$. We
 assume an infinite population setup over the cross-section. 

\begin{assumption}\label{Assumption:IH}
     Invariance to history

  \begin{equation}
  y_{it}(x_{1,i}^*) = y_{it}(x_{1,it}^*),
  \end{equation}

  with $x_{1,i} = [x_{1,i1}, ..., x_{1,iT}] $.
 \end{assumption}

 \begin{assumption}\label{Assumption:ITR}
    Individualistic treatment response (ITR)
 \begin{equation}
    y_{it}(x_{1,t}^*) = y_{it}(x_{1,it}^*),
 \end{equation}
   
 with $x_{1,t} = [x_{1,1t}, \ldots, x_{1,Nt}] $.
 \end{assumption}
Assumptions \ref{Assumption:IH} and \ref{Assumption:ITR} state that the potential outcome of 
individual $i$ at period $t$ depends only on the current level of treatment 
and on own treatment level. 
 
\begin{assumption}\label{Assumption:Consistency}
    $y_{it} = y_{it}(x_{1it}) \quad  \text{when} \quad x_{1it} = x_{1it}^*$ 
 \end{assumption}

 \begin{assumption}\label{Assumption:potential_outcome}
 The potential outcome $y_{it}(x_{it}^*)$ has the following linear additive form:    
 \begin{equation}
     y_{it}(x_{it}^*) = y_{it}(0) + x_{1it}^* \beta_{1_o}.  
 \end{equation}
 \end{assumption}
 This assumption states that the potential outcome of individual $i$ at period $t$ if treatment 
$x_{1,it}$ is equal to $x_{1,it}^*$ has a linear relationship with the 
potential outcome without treatment and the treatment.  In addition, we assume 
homogeneous treatment effects $\beta_{1o}$. The latter is a restrictive assumption, and 
relaxing it to heterogeneous treatment effects is left for future research.
Notice that the instrumental variable does not have an effect on the potential outcome. 

 \begin{assumption}\label{Assumption:notreatment_potential_outcome}
 The potential outcomes without any treatment have a linear conditional mean 
 on the pre-treatment variables $\tilde{x}_{it}$, and individual fixed effects:

 \begin{equation}
     \mathbb{E}[y_{it}(0)|\tilde{x}_{it}, \alpha_{i,y}] = \tilde{x}_{it}' \beta_{2_o} + \alpha_{i,y}.
 \end{equation}     
 \end{assumption}
This assumption states that the potential outcome of individual $i$ at period $t$ without treatment
  presents a linear relationship with the 
covariates $\tilde{x}_{it}$ and the individual fixed effects.  Importantly, the treatment does not
have an effect on the covariates. This assumption can be relaxed to allow for 
a nonparametric function of the pre-treatment covariates. In this case the estimation procedure proposed is still
valid after obtaining the deviations from the conditional expectation on the pre-treatment
covariates (\citealt{Robinson1988E}, \citealt{Chernozhukov2018EJ}).


 \begin{assumption}\label{Assumption:instruments}
       \begin{align}  
        & \{x_{1it}(z_{it}^*), y_{it}(x_{1it}^*)\} \perp z_{i} \mid \tilde{x}_{i}, \alpha_{i,1x} \label{Assumption:instruments1} \\
    & x_{1it} = x_{1it}(z_{it}) \quad  \text{when} \quad z_{it} = z_{it}^*  \label{Assumption:instruments2}   \\
    & \mathbb{E}[x_{1it}(z_{it})|\tilde{x}_{it}, \alpha_{i,1x}] = g_o(\tilde{x}_{it}, z_{it}) + \alpha_{i,1x} \label{Assumption:instruments3}
    \end{align}
\end{assumption}
This assumption states that the instrument is independent of the outcome 
potential outcomes, and the continuous treatment potential outcomes conditional
on the covariates and the individual specific effects. Condition
\ref{Assumption:instruments1} of assumption \ref{Assumption:instruments} is 
stronger than assumption \ref{A_reduced_form} as 
it requires conditional independence of the instrument.

 Assumptions \ref{Assumption:IH} to \ref{Assumption:notreatment_potential_outcome}
 , and condition \ref{Assumption:instruments1} of assumption \ref{Assumption:instruments}
 imply Assumption \ref{A_structural_eq}.
Assumption \ref{Assumption:instruments} implies assumption \ref{A_reduced_form}.
 \subsection{Estimand}
The causal estimand of interest is the average treatment effect: 
 \begin{equation}
     ATE = \mathbb{E}[y_{it}(x_{it}^*+1)] - \mathbb{E}[y_{it}(x_{it}^*)] = \beta_{1_o}.
 \end{equation}

\section{Identification: Control Function Approach} \label{S_Identification}

The identification of the parameter of interest $\beta_{1_o}$ relies on the 
presence of an instrumental variable $z_{it}$ conditioning on the 
individual-specific effects (Assumptions \ref{A_structural_eq} and 
\ref{A_reduced_form}).   Since we run into an incidental parameter problem,
 we get rid of the individual-specific effects by using an appropriate 
 transformation of the data $\tau$. The typical transformations of the data are 
 the first-differencing and within transformations.  The first-difference
  transformation is equal to $y_{it}-y_{it-1}$, and the Within transformation
   is equal to $y_{it}-\sum_{t=1}^{T_i} y_{it}/T_i$. 

\begin{equation} \label{dif_structural}
\tau y_{it}=\tau x_{1it}\beta_{1_o}+\sum_{k=2}^K \tau x_{kit}\beta_{2k_o}+\tau\varepsilon_{it}, i\in \{1, ..., N\}, t \in \{t_a, ..., T_i\},
\end{equation}

\begin{equation} \label{dif_reduced}
\tau x_{1it}=\tau g_o(\tilde{x}_{it}, z_{it})+\tau u_{it}, i\in \{1, ..., N\}, t \in \{t_a, ..., T_i\}.
\end{equation}
\noindent where $t_a$ is the first time period available after transforming the data.  In the case of first-differencing $t_a=2$, and for a Within-transformation $t_a=1$.

 Now,  it is tempting to substitute equation \ref{dif_reduced} into \ref{dif_structural} to obtain: 
\begin{equation}\label{naive_model}
\tau y_{it}=\tau g_o(\tilde{x}_{it},z_{it})\beta_{1_o}+\sum_{k=2}^K \tau x_{kit}\beta_{2k_o}+\tau u_{it}\beta_{1_o}+\tau\varepsilon_{it}, i\in \{1, ..., N\}, t \in \{t_a, ..., T_i\}.
\end{equation}
\noindent In equation \ref{naive_model},  one can believe that the estimation of the
 unknown parameters is straightforward since the error term 
 $\tau u_{it}\beta_{1_o}+\tau \varepsilon_{it}$ in \ref{naive_model} has zero mean
  conditional, on an appropriate information set, as a consequence of assumptions
   \ref{A_structural_eq}, and \ref{A_reduced_form}.   Moreover, one can think that
    the best estimation procedure is to estimate $\tau g_o(\tilde{x}_{it},z_{it})$ 
    in the first stage and in the second stage, plug it in equation 
    \ref{naive_model} and perform simple OLS.  But this is not possible because the
     regularization bias of the first stage estimator of 
     $\tau g_o(\tilde{x}_{it},z_{it})$ contaminates the estimation of $\beta_{1_o}$ 
     and $\beta_{2_o}$ \citep{Chernozhukov2018EJ, Robinson1988E, GuoWP2022} and 
     produces estimates that are not $\sqrt{N_T}$-consistent (See Section \ref{S_Comparison} for more details). 

In order to deal with this problem,  we can use a control function approach and augment the structural equation by controlling for the unobservable $\tau u_{it}$  as follows: 
\begin{equation} \label{ControlFunction_eq}
\tau y_{it}=\tau  x_{1it}\beta_{1_o}+\sum_{k=2}^K \tau x_{kit}\beta_{2k_o}+\rho_o \tau  u_{it} + \tau \omega_{it}, i\in \{1, ..., N\}, t \in \{t_a, ..., T_i\}.
\end{equation}
In model \ref{ControlFunction_eq},  under assumptions \ref{A_structural_eq} - \ref{A_eps_u_omega} we can guarantee zero correlation between 
$\tau x_{1it}$, and $\tau  \omega_{it}$ thanks to the presence of an instrumental variable in the reduced form equation that is mean independent 
of the error term $\omega_{it}$.

Stacking up the time observations and letting $M_{\tau_i}$ denote
the transformation operator mapping 
$\mathbb{R}^{T_i} \rightarrow \mathbb{R}^{T_{i,a}}$, with $T_{i,a}$ equal
 to $T_i$ under the within transformation and $T_i-1$ under the
  first-difference transformation, we obtain the following moment conditions: 
\begin{equation}\label{non_orthogonal_score}
\mathbb{E}[\phi(O_i;\theta_o,\tau g_o)]=\textbf{0},
\end{equation}
\noindent where 
$\phi(O_i;\theta_o,\tau g_o)$ is a vector of score functions equal to
 $(M_{\tau_i} H_{i})' V_i^{-1} M_{\tau_i} \omega_{i}  $,  $\theta_o=[\beta_{1_o} 
 \quad \beta_{2_o}' \quad  \rho_o]'$, $M_{\tau_i} \omega_{i}=M_{\tau_i} 
 y_{i}-M_{\tau_i} x_{1i}\beta_{1_o}-\sum_{k=2}^K M_{\tau_i} x_{ki}\beta_{2k_o}
 -\rho_o M_{\tau_i} u_{i}$,  $\textbf{0}=[0\quad \cdots \quad 0\quad 0]'$, 
  and $M_{\tau_i} H_{i}=[M_{\tau_i} x_{1i}\quad M_{\tau_i}\tilde{X}_{i}\quad 
  M_{\tau_i} u_{i}]$, $H_{i}=[x_{1i}\quad \tilde{X}_{i}\quad 
  u_{i}]$, and $V_i$ the weighting matrix corresponding to $M_{\tau_i}M_{\tau_i}'$
   under the first-difference transformation since  $\Sigma_i= \mathbb{E}[(M_{\tau_i}\omega_i)(M_{\tau_i}
  \omega_i)'| H_i, z_i]=\sigma^2_{\omega}M_{\tau_i}M_{\tau_i}'=\sigma^2_{\omega} 
  V_i $,
   and equal to the identity matrix $I_i$ under the within transformation.

\begin{remark}
If we use the first-difference transformation, $M_{\tau_i}=D_i=\begin{bmatrix} 
    1 & -1 & 0& \cdots & 0 \\
    0 &1 & -1 &  \cdots &  0  \\
    \vdots &   & \ddots & \ddots & \vdots  \\ 
    
    0 & \cdots &  0 & 1 & -1  \\ 
\end{bmatrix}$, with dimensions $(T_i-1) \times T_i  $.  If we use the within transformation, $M_{\tau_i}= W_i= I_i-1/T_i\begin{bmatrix} 
    1 &   1 &   1& \cdots & 1 \\
    1 &1 & 1 &  \cdots &  1  \\
    \vdots &  \vdots  & \ddots & \ddots & \vdots  \\ 
    
    1 & \cdots & 1 & 1 & 1  \\
\end{bmatrix}$ , with dimensions $T_i \times T_i  $, $I_i$ is the identity matrix.
\end{remark}
\begin{remark}
   The moment condition is valid even under misspecification of $V_i$ because we assume strict exogeneity.  If we would like to relax the assumption of strict exogeneity to weak exogeneity, then it is better to use $\mathbb{E}[(M_{\tau_i}H_i)'M_{\tau_i}\omega_i]=0$ as moment conditions. 
   \end{remark}
   \begin{remark}
   If $M_{\tau_i}=W_i$ (the within transformation), the moment condition
    can be simply written as $\mathbb{E}[(M_{\tau_i}H_i)'M_{\tau_i}\omega_i]=0$.   
\end{remark}

Finally, we identify $\tau u_{it}$ since it has zero mean conditional on an
 appropriate information set $I_{t}=\{\tilde x_{it-j}, z_{it-j},  L_i \leq j \leq U_i \}$
  with $L_i \geq t_i - T_i$ and $U_i \leq t_i-1$ as a consequence of Assumption
   \ref{A_reduced_form}. This identification strategy was first proposed 
   by \cite{AvilaMarquez2023}, and it was presented at the 28th International Panel Data Conference at the University of Amsterdam in 2023.

  \begin{proposition}[Identification $\theta_o$] \label{Proposition_Identification}
     If $\{O_i = (y_i, X_i, z_i) \}_{i\in [N]}$ satisfy assumptions
    \ref{A_data} to \ref{A_eps_u_omega}, $\mathbb{E}[\phi(O_i;\theta_o,\tau g_o)]$
     is continuously differentiable 
    at $\theta_o$, and $\mathbb{E}[(M_{\tau_i} H_{i})'
    V_i^{-1}M_{\tau_i} H_{i}]$ has full rank, 
    then $\theta_o$ is locally identified,
    i.e., $\exists \zeta>0$, such that $\mathbb{E}[\phi(O_i;\theta,\tau g_o)] \neq \textbf{0}$ for 
    $\theta \neq \theta_o$, $\theta \in {\Theta_o}$ with $\Theta_o = \{\theta: ||\theta - \theta_o||_P < \zeta \}$ .\\ 
  \end{proposition}

  \noindent Proposition \ref{Proposition_Identification} states that the parameters of 
interest are locally identified if the vector of moment conditions is differentiable at 
the true parameter value, and the matrix $\mathbb{E}[(M_{\tau_i} H_{i})'V_i^{-1}M_{\tau_i} 
H_{i}]$ is full rank. 

  Proposition \ref{Proposition_Conditions} provides conditions under which this 
rank condition holds.

  \begin{proposition}[Conditions]\label{Proposition_Conditions}
     If $\{O_i = (y_i, X_i, z_i) \}_{i\in [N]}$ satisfy assumptions
    \ref{A_data} to \ref{A_eps_u_omega}, $M_{\tau_i} x_{1i}$, and $M_{\tau_i}\tilde{X}_{i}$ 
    are not perfectly multicollinear, then
    
    i) $\mathbb{E}[(M_{\tau_i} H_{i})'V_i^{-1}M_{\tau_i} H_{i}]$ has a minimum eigenvalue 
    $\lambda_{min} >0$ if and only if $M_{\tau_i} g_o(\tilde{x}_{i}, z_{i}) \notin span(M_{\tau_i}\tilde{X}_{i})$. \\ 
    
    ii) $\mathbb{E}[(M_{\tau_i} H_{i})'V_i^{-1}M_{\tau_i} H_{i}]$ has a minimum eigenvalue 
    $\lambda_{min} = 0$ if and only if $M_{\tau_i} g_o(\tilde{x}_{i}, z_{i}) \in span(M_{\tau_i}\tilde{X}_{i})$. 
  \end{proposition}
  The proof is straightforward and omitted. 

  This proposition states that the matrix $\mathbb{E}[(M_{\tau_i} H_{i})'V_i^{-1}M_{\tau_i} H_{i}]$
   is ill-conditioned if $\tau g_o(x_{2i}, z_{i})$ lies in the column span of $M_{\tau_i}\tilde{X}_{i}$.
   This could happen if the variation of the transformed nuisance parameter does not provide
  any different variation from the transformed covariates.  This includes the degenerate situation 
   when the transformed nuisance is equal to the null vector which intuitively means that
   $\tau u_{it}$ approaches $\tau x_{1it}$ causing high multicollinearity.
    Alternatively, if the functional form of $g_o(.)$ is linear, ill-conditioning 
  happens if instruments are weakly correlated with the endogenous covariates. 
  In those situations, the 
   instruments do not provide additional information permitting the identification of the 
   parameter of interest. Thus,
  the Gram matrix becomes ill-conditioned as instruments are weak under the conditions
  cited in the proposition. 
  

  \begin{remark}
    Notice that if $z_{it}$ is not present in $g_o(.)$, we can still achieve 
    identification if $\tau g_o(\tilde{x}_{it})$ does not lie in the column span of $\tau \tilde{x}_{it}$. 
    However, identification would rely on a functional form assumption which 
    is a strong assumption. 
  \end{remark}

\begin{example_un}{ First-difference transformation:}

We could follow \cite{BaltagiASF2002} by first-differencing the structural 
and reduced-form equations such that we obtain:

\begin{equation} \label{dif_structural_ex}
\Delta y_{it}=\Delta x_{1it}\beta_{1_o}+\Delta\tilde{x}_{it}'\beta_{2_o}+\Delta\varepsilon_{it}, i\in \{1, ..., N\}, t \in \{2, ..., T_i\}.
\end{equation}

\begin{equation} \label{dif_reduced_ex}
\Delta x_{1it}=\Delta g_o(\tilde{x}_{it}, z_{it})+\Delta u_{it}, i\in \{1, ..., N\}, t \in \{2, ..., T_i\}.
\end{equation}
Now,  we can use the proposed control function approach and augment the structural equation by controlling for the unobservable $\Delta u_{it}$  as follows: 
\begin{equation} \label{ControlFunction_eq_ex}
\Delta y_{it}=\Delta  x_{1it}\beta_{1_o}+\Delta  \tilde{x}_{it}'\beta_{2_o}+\rho_o \Delta  u_{it} + \Delta \omega_{it}, i\in \{1, ..., N\}, t \in \{2, ..., T_i\}.
\end{equation}
Next,  we set up the following moment conditions: 
\begin{equation}\label{non_orthogonal_score_ex}
\mathbb{E}[(D_i H_{i})' \Sigma_i^{-1} D_i \omega_{i}]=\textbf{0},
\end{equation}
\noindent where 
  $D_i \omega_{i}=D_i y_{i}-D_i x_{1i}\beta_{1_o}-D_i \tilde{x}_{i}'\beta_{2_o}
  -\rho_o D_i u_{i}$,  $\textbf{0}=[0\quad \cdots \quad 0\quad 0]'$, 
  $D_i H_{i}=[D_i x_{1i}\quad D_i\tilde{x}_{i}'\quad D_i u_{i}]'$, and
   $\Sigma_i = \sigma^2_{\omega} D_iD_i'$.

Finally, we identify $\Delta u_{it}$ since it has zero mean conditional on $\tilde{x}_{it}$, $\tilde{x}_{it-1}$, $z_{it}$, $z_{it-1}$ as a consequence of Assumption \ref{A_reduced_form}.
\begin{proposition}[Identification transformed nuisance parameter]
    Under assumption \ref{A_reduced_form}, $\Delta g({\tilde{x}_{it},z_{it}})$ is identified by $\mathbb{E}[\Delta x_{1it} | \tilde{x}_{it},z_{it},\tilde{x}_{it-1},z_{it-1}]$. 
\end{proposition}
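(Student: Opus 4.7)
The plan is to directly compute the conditional expectation on the right-hand side and show that it collapses to $\Delta g_o(\tilde{x}_{it}, z_{it})$, using the strict exogeneity baked into Assumption \ref{A_reduced_form}.

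First, I would substitute the definition of the first-differenced reduced form. By equation (\ref{dif_reduced_ex}),
\begin{equation*}
\Delta x_{1it} = g_o(\tilde{x}_{it}, z_{it}) - g_o(\tilde{x}_{it-1}, z_{it-1}) + u_{it} - u_{it-1},
\end{equation*}
noting that the individual effect $\alpha_{i,1x}$ has been differenced out. Next, taking $\mathbb{E}[\,\cdot \mid \tilde{x}_{it}, z_{it}, \tilde{x}_{it-1}, z_{it-1}]$ on both sides, the two $g_o$ terms are measurable with respect to the conditioning sigma-field and therefore pass through the expectation unchanged, leaving me with
\begin{equation*}
\mathbb{E}[\Delta x_{1it} \mid \tilde{x}_{it}, z_{it}, \tilde{x}_{it-1}, z_{it-1}] = \Delta g_o(\tilde{x}_{it}, z_{it}) + \mathbb{E}[u_{it} - u_{it-1} \mid \tilde{x}_{it}, z_{it}, \tilde{x}_{it-1}, z_{it-1}].
\end{equation*}

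The remaining step is to argue that the residual conditional expectation vanishes. Under Assumption \ref{A_reduced_form}, $\mathbb{E}[u_{is} \mid \tilde{x}_i, z_i, \alpha_{i,1x}] = 0$ for every $s$, where $\tilde{x}_i$ and $z_i$ are the full within-individual histories. Since $\{\tilde{x}_{it}, z_{it}, \tilde{x}_{it-1}, z_{it-1}\}$ is a sub-sigma-algebra of $\sigma(\tilde{x}_i, z_i, \alpha_{i,1x})$, the law of iterated expectations gives $\mathbb{E}[u_{it} \mid \tilde{x}_{it}, z_{it}, \tilde{x}_{it-1}, z_{it-1}] = 0$ and analogously for $u_{it-1}$, so their difference has zero conditional mean and the identification equality follows.

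I do not expect a substantive obstacle here: the proposition is essentially a law-of-iterated-expectations calculation that exploits the strict-exogeneity form of Assumption \ref{A_reduced_form}. The only mild subtlety worth flagging in the write-up is to justify passing from conditioning on the full histories $(\tilde{x}_i, z_i, \alpha_{i,1x})$ to conditioning on just the contemporaneous and one-lag values; this is immediate from the tower property but should be stated explicitly so the reader sees that no additional serial-independence assumption on $u_{it}$ is being invoked.
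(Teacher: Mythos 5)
Your proposal is correct and follows essentially the same route as the paper's own proof: decompose $\Delta x_{1it}$ via the first-differenced reduced form, pull the $g_o$ terms through the conditional expectation by measurability, and kill $\mathbb{E}[u_{it}-u_{it-1}\mid \tilde{x}_{it},z_{it},\tilde{x}_{it-1},z_{it-1}]$ by the law of iterated expectations applied to the strict-exogeneity condition $\mathbb{E}[u_{is}\mid \tilde{x}_i,z_i,\alpha_{i,1x}]=0$. Your explicit remark that the two-period conditioning set is a sub-sigma-field of the full-history one is a point the paper leaves implicit, but it is the same argument.
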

\begin{proof}
\begin{equation}
    \mathbb{E}[\Delta x_{1it}|\tilde{x}_{it},z_{it},\tilde{x}_{it-1},z_{it-1}] =   \mathbb{E}[\Delta g(\tilde{x}_{it},z_{it})|\tilde{x}_{it},z_{it},\tilde{x}_{it-1},z_{it-1}] + \mathbb{E}[\Delta u_{it}|\tilde{x}_{it},z_{it},\tilde{x}_{it-1},z_{it-1}]        
\end{equation}
    By the linearity property of the expectation operator, we have that the first term  is equal to: 
    \begin{equation}
    \begin{split}
        \mathbb{E}[ g(\tilde{x}_{it},z_{it})|\tilde{x}_{it},z_{it},\tilde{x}_{it-1},z_{it-1}] - \mathbb{E}[ g(\tilde{x}_{it-1},z_{it-1})|\tilde{x}_{it},z_{it},\tilde{x}_{it-1},z_{it-1}] = \\ 
        g(\tilde{x}_{it},z_{it})- g(\tilde{x}_{it-1},z_{it-1})
    \end{split}        
    \end{equation}
    Similarly for the second term, by the Law of Iterated Expectations and Assumption \ref{A_reduced_form}: 
     \begin{equation}     
      \mathbb{E}[ u_{it}|\tilde{x}_{it},z_{it},\tilde{x}_{it-1},z_{it-1}] - \mathbb{E}[ u_{it-1}|\tilde{x}_{it},z_{it},\tilde{x}_{it-1},z_{it-1}] =0
     \end{equation}     
\end{proof}

\end{example_un}

\begin{remark}
A similar analysis and conclusion is obtained using a Within transformation 
which is equivalent to time demeaning the data.  In this case, the information 
set $I_{t} = \{ \tilde{x}_{it-j}, z_{it-j}, -T_i+t \leq j \leq t-1\}$. 
\end{remark}
In Section \ref{S_SLCF}, we propose an estimation method based on the Control Function approach.

\subsection{Weak identification vs weak instruments due to ignored nonlinearities}\label{Section_Weak}


In this subsection, we characterize the problem of weak instruments 
caused by ignored nonlinearities in the reduced-form equation. 
Following \cite{mikusheva2022inference} and \cite{baltagi2012estimation},  we propose the following generalized concentration 
parameter:

\begin{equation}
\pi_o = \mathbb{E} \Big[\tau g_o(\tilde{x}_i, z_i)'\Phi_{\tau u,i}^{-1}
\tau g_o(\tilde{x}_i, z_i)
\Big],
\end{equation}

\noindent with $\Phi_{\tau u,i}=\mathbb{E}[M_{\tau_i} u_i  u_i' M_{\tau_i}]$. 

If $\pi_o$ is bounded away from 0, we do not have an issue of weak identification.
Under conditions presented in proposition \ref{Proposition_Conditions}, we guarantee 
that $\pi_o$ is bounded away from 0. However, we do not impose a lower bound. 

\begin{proposition}[Concentration parameter inequality]\label{Proposition_weak_IV}
 If $\{O_i = (y_i, X_i, z_i) \}_{i\in [N]}$ satisfy assumptions, and
    we use a linear projection in the reduced form equation, i.e.
     $M_{\tau_i}Z_i a_1 + M_{\tau_i} \tilde{X}_i a_2$ after transformation, then  
    \begin{equation}
 \mathbb{E} \Big[(M_{\tau_i}Z_i a_1 + M_{\tau_i}\tilde{X}_i a_2)'\Phi_{\tau \tilde{u},i}^{-1}\
 (M_{\tau_i}Z_i a_1 + M_{\tau_i}\tilde{X}_i a_2)] \leq \pi_o.  
\end{equation}
\noindent where $a_1$, $a_2$ represent the parameters of the linear projection, 
and $\Phi_{\tau \tilde{u},i}=\mathbb{E}[M_{\tau_i} \tilde{u}_i  \tilde{u}_i'
 M_{\tau_i}]$ is the variance-covariance matrix
  of the transformed misspecified reduced form error term.
\end{proposition}

Proposition \ref{Proposition_weak_IV} states that misspecification of the relationship
between the endogenous covariate and the instruments leads to a concentration 
parameter lower or equal than the concentration parameter in the population $\pi_o$.  The equality 
holds if and only if $\tau g_o(\tilde{x}_{it}, z_{it})$ lies in the span of 
$[\tau \tilde{X}_{it}, \tau Z_{i}]$. Thus, opening the door to a situation where 
instruments could be weak even if the concentration parameter 
in the population is bounded away from 0.

\subsection{An orthogonal score  }\label{S_OrthogonalScore}

In this subsection, we show that the score function associated to the moment condition \ref{non_orthogonal_score} has a Neyman
 orthogonal equivalent. Without loss of generality, we assume that there are no covariates such that the model \ref{dif_structural} simplifies to: 

\begin{equation} \label{ControlFunction_eq_simple}
\tau y_{it}=\tau  x_{1it}\beta_{1_o}+\rho_o \tau  u_{it} + \tau \omega_{it}, i\in \{1, ..., N\}, t \in \{t_a, ..., T_i\}.
\end{equation}

If we stack up the observations in matrix form, we obtain:

\begin{equation} \label{ControlFunction_eq_simple_matrix}
M_{\tau_i} y_i=M_{\tau_i}  x_{1i}\beta_{1_o}+\rho_o M_{\tau_i}    u_i + M_{\tau_i}   \omega_i, i\in \{1, ..., N\}, t \in \{t_a, ..., T_i\}.
\end{equation}
 
Now, we premultiply \ref{ControlFunction_eq_simple_matrix} by the projection matrix
    in the orthogonal space of $\tau  u_i $ ($A_{\tau u, i} = I_{T_{ai}} -
     M_{\tau_i}   u_i(   u_i'M_{\tau_i}'M_{\tau_i} u_i)^{-1} u_i'M_{\tau_i}'$) with $I_{T_{ai}}$
     the identity matrix with dimensions $T_{a,i} \times T_{a,i}$ and $T_{a,i}$ equal
    to the number of time observations available after the transformation,  we obtain:

\begin{equation} \label{ControlFunction_eq_simple_matrix_transformed}
A_{\tau u, i} M_{\tau_i} \tau y_i= A_{\tau u, i} M_{\tau_i} x_{1i}\beta_{1_o}+
 A_{\tau u, i} M_{\tau_i} \tau \omega_i, i\in \{1, ..., N\}, t \in \{t_a, ..., T_i\}.
\end{equation}

Using the transformed model \ref{ControlFunction_eq_simple_matrix_transformed}, we know that the score for identification of $\beta_{1}$ is:

\begin{equation}\label{orthogonalScore}
    \psi(O_i;\beta_1, \tau g) =  x_{1i}'M_{\tau_i}'A_{\tau u, i} V_i^{-1} A_{\tau u, i} M_{\tau_i} \omega_i.
\end{equation}
\begin{proposition}[Orthogonal score]
The score \ref{orthogonalScore} is Neyman Orthogonal (see appendix for the proof \ref{Proof_OrthogonalScore}).    
\end{proposition}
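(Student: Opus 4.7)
The plan is to verify Neyman orthogonality directly by differentiating the expected score along an arbitrary direction in the nuisance space and invoking the conditional-mean-zero restriction on $\omega_{it}$ to conclude that the derivative vanishes. The nuisance enters the score only through the projection matrix $A_i$, since only $\tau\tilde u_i = M_{\tau_i}(x_{1i}-g(\tilde x_i,z_i))$ depends on $\tau g$.

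First I would make the dependence on the nuisance explicit. Fix a direction $h$ in the tangent space of nuisance functions, set $\tau g_{o,r} = \tau g_o + r\,h$, and write $v = M_{\tau_i} u_i$ and $\Delta = M_{\tau_i} h$, so that $\tau\tilde u_i(r) = v - r\Delta$ and $A_i(r) = I - (v-r\Delta)\bigl((v-r\Delta)'(v-r\Delta)\bigr)^{-1}(v-r\Delta)'$. The standard matrix-derivative identity for a rank-one orthogonal projection then yields
\[
\dot A \;\equiv\; \left.\frac{d A_i(r)}{dr}\right|_{r=0} \;=\; A\,\Delta\,(v'v)^{-1} v' \;+\; v\,(v'v)^{-1}\,\Delta'\,A,
\]
after simplification using $Av = 0$.

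Applying the product rule to $\psi(r) = x_{1i}'M_{\tau_i}' A(r) V_i^{-1} A(r) M_{\tau_i}\omega_i$ at $r=0$ then gives
\[
\left.\frac{d\psi(r)}{dr}\right|_{r=0} \;=\; x_{1i}'\,M_{\tau_i}'\bigl[\dot A\,V_i^{-1}A + A\,V_i^{-1}\dot A\bigr] M_{\tau_i}\omega_i.
\]
Now take the conditional expectation given $(\tilde x_i, z_i, u_i, \alpha_{i,y})$. Under this conditioning, $x_{1i}$, $v$, $A$, $\dot A$, $V_i$ and $\Delta$ are all deterministic functions of the conditioning variables, whereas Assumption~\ref{A_eps_u_omega} supplies $\mathbb{E}[M_{\tau_i}\omega_i \mid \tilde x_i, z_i, u_i, \alpha_{i,y}] = 0$. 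The conditional expectation of the derivative is therefore identically zero, and iterating expectations delivers $\frac{d}{dr}\mathbb{E}[\psi(O_i;\beta_{1o},\tau g_o + r h)]|_{r=0} = 0$ for every admissible direction $h$, which is exactly Neyman orthogonality.

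The main obstacle is the matrix calculation for $\dot A$. It is elementary but bookkeeping-intensive: one differentiates $v(r)(v(r)'v(r))^{-1} v(r)'$ using the product rule together with the identity $\partial_r (v'v)^{-1} = -(v'v)^{-2}\cdot 2\,v'\dot v$, substitutes $\dot v = -\Delta$, and then collects terms using the projection identity $Av = 0$ to reach the compact form displayed above. Once $\dot A$ is in hand the rest of the argument is essentially mechanical, with Neyman orthogonality reducing to the exclusion restriction from Assumption~\ref{A_eps_u_omega}.
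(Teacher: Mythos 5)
Your proposal is correct and follows essentially the same route as the paper's appendix proof: both compute the Gateaux (directional) derivative of the expected score in the nuisance direction, reduce it to the derivative of the projection matrix $A_i$ applied to $M_{\tau_i}\omega_i$, and kill the resulting term by conditioning on $(\tilde x_i, z_i, u_i, \alpha_{i,y})$ and invoking the mean-zero property of $\omega_{it}$ from Assumption~\ref{A_eps_u_omega} together with the law of iterated expectations. Your closed-form expression $\dot A = A\,\Delta\,(v'v)^{-1}v' + v\,(v'v)^{-1}\Delta' A$ is just a tidier packaging of the paper's explicit numerator/denominator expansion of $A_{i,h}-A_i$, and it also keeps general $V_i$ rather than setting $V_i=I_i$, but the substance is identical.
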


\begin{remark}
In the presence of covariates, the Neyman Orthogonal score is equal to 
\begin{equation}\label{orthogonalScore_covariates}
    \psi(O_i;\beta_1, \tau g) =  X_i'M_{\tau_i}'A_{\tau u, i} V_i^{-1} A_{\tau u, i} M_{\tau_i} \omega_i.
\end{equation}
with $X_i = [x_{1i} \quad \tilde{X}_{i}]$
\end{remark}

\section{Super Learner Control Function estimation}\label{S_SLCF}

As described in the previous section,  we can write the transformed structural equation for observation $i$ at period $t$ as follows: 
\begin{equation} 
\tau y_{it}=\tau x_{1it}\beta_{1_o}+\tau \tilde{x}_{it}'\beta_{2_o}+\rho_o \tau u_{it} + \tau \omega_{it} , i\in \{1, ..., N\}, t \in \{t_a, ..., T_i\}.
\end{equation}
In this transformed equation, the parameters of interest are identified because $\tau x_{1it}$, $\tau \tilde{x}_{it}$ and $\tau u_{it}$ are uncorrelated with error term $\tau\omega_{it}$ under assumptions \ref{A_structural_eq} - \ref{A_eps_u_omega}.

In addition, it is clear that if $\tau u_{it}$ is observed,  we could estimate the
 three parameters $\beta_{1_o}$, $\beta_{2_o}$,  $\rho_o$  using the sample 
 counterparts of the population moment conditions \ref{orthogonalScore_covariates}.  But since $\tau u_{it}$ is not observed,  we propose the following estimation procedure:

    \noindent \textbf{First Step}: 
    
    \noindent We partition the set $\{1, 2,  ..., N\}$ in B subsets $S_1$, $S_2$, ..., $S_B$, and denote $n_{T,b} = \sum_{i \in S_b} T_i$. 
     This sample splitting is equal to the one proposed by \cite{EmmeneggerSJS2023} and it was developed without knowledge of it. Alternatively, if the cluster structure in the data is known (individuals exhibit correlation within clusters but not across them, and both cluster membership and the number of clusters are identified) the data should be partitioned accordingly.
\noindent Then, for each $b \in{B}$ we estimate  $\tau g_o(\tilde{x}_{it},z_{it}) = \mathbb{E}[\tau x_{1it}|I_t]$ using a super learner with partition $S_b^c= \{i \in S_g , g \neq b\}$, and we call the estimation $\widehat{\tau g_o}^{S_b^c}$.  Then, we obtain the residuals $\widehat{\tau u}_{it}^{S_b}= \tau x_{1it}-\widehat{\tau g_o^{S_b^c}(\tilde{x}_{it},z_{it})}$ for partition $S_b =\{i \in S_b\}$.  

Finally,  the estimator of $\beta_o = [\beta_{1_o} \quad \beta_{2_o}']'$ for partition $S_b$ is the solution of
 the sample moment conditions such that: 
\begin{equation}\label{sample_moments}
\frac{1}{n_{T,b}}\sum_{i \in S_b}\psi(O_i;\hat{\beta}_b,\widehat{\tau g_o}^{S_b^c})=0,
\end{equation} 
\noindent where  $\psi(O_i;\hat{\beta}_b,\widehat{\tau g_o}^{S_b^c}) = 
X_i'M_{\tau_i}'\hat{A}_{\tau u, i}^{S_b} V_i^{-1}  (\hat{A}_{\tau u, i}^{S_b} M_{\tau_i} y_i - 
\tilde{H_i}^{S_b}\hat{\beta}_b)$, $\hat{\beta}_b = [\hat{\beta}_{1,b} \quad
 \hat{\beta}_{2,b}']'$, $\tilde{ H}_{i}^{S_b}=[\hat{A}_{\tau u, i}^{S_b} M_{\tau_i} x_{1i} 
\quad \hat{A}_{\tau u, i}^{S_b}  M_{\tau_i} \tilde{X}_{i}]$,  
$\hat{A}_{\tau u, i}^{S_b} = 
I_{T_{ai}} - \widehat{\tau u}_{i}^{S_b} (
    \widehat{\tau u'}_{i}^{ S_b} \widehat{\tau u}_{i}^{S_b})^{-1} 
    \widehat{\tau u'}_{i}^{ S_b} $, 
$\widehat{\tau u}_{i}^{S_b}$ is the vector stacking the residuals obtained in the first stage for partition $S_b$,  $\tilde{X}_i$ a $T_i\times (K-1)$ vector collecting all time observations of $\tilde{X}_{it}$.

This is equivalent to estimating $\beta_o$ by performing generalized least squares 
regression of $\hat{A}_{\tau u, i}\tau y_{i}$ on $\hat{A}_{\tau u, i}\tau x_{1i}$,  $\hat{A}_{\tau u, i}\tau \tilde{x}_{i}$ or minimizing the following quadratic loss function: 
$$\hat{\beta}_b=\substack{argmin\\ \beta} Q_b(O_i;\beta ,\widehat{\tau g_o}^{S_b^c}),$$
with $Q_b(O_i;\beta ,\widehat{\tau g_o}^{S_b^c})=\frac{1}{n_{T,b}}
\sum_{i \in S_b}(M_{\tau_i} y_{i}-M_{\tau_i} x_{1i}\beta_{1}-M_{\tau_i} 
\tilde{X}_{i}\beta_{2}-\widehat{\tau u_{i}}^{S_b}\rho)'\hat{A}_{\tau u, i}^{S_b}V_i^{-1}\hat{A}_{\tau u, i}^{S_b}(M_{\tau_i}
 y_{i}-M_{\tau_i} x_{1i}\beta_{1}-M_{\tau_i} \tilde{X}_{i}\beta_{2}-
 \widehat{\tau u_{i}}^{S_b}\rho)$. Importantly, we can just perform ordinary
  least squares estimation and use a sandwich type variance-covariance 
  matrix estimator for robust inference.  In addition, OLS is the most efficient if $M_{\tau_i}=D_i$, 
  $T = 2$, and $\omega_{it}$ is homoskedastic and uncorrelated across both $i$ 
  and $t$.  

    \noindent \textbf{Second Step}: 

    We average $\hat{\beta}_b$ to obtain the cross-fitting estimator of $\beta_o$ as:

    $$\hat{\beta}_o=\frac{1}{B}\sum_{b=1}^B \hat{\beta}_b.$$

    Thus, the cross-fitting estimator of our parameter of interest $\beta_{1o}$ is: 

     $$\hat{\beta}_{1o}=\frac{1}{B}\sum_{b=1}^B \hat{\beta}_{1,b}.$$

     In order to deal with the dependency of the estimator $\hat{\beta}_{1o}$ on
      the particular sample split, we 
      follow \cite{Chernozhukov2018EJ} by proposing to repeat the procedure $ss$ 
      times.  The estimates are aggregated either by using the mean or the median. 
      In addition, a correction term is added to the variance estimator of 
      $\hat{\beta}_{1o}$ (See Subsection \ref{Including_sample_splitting_uncertainty}). 


\begin{algorithm}
    \caption{Super Learner Control Function estimation}\label{Algortithm_CF1}
    \begin{algorithmic}
        \STATE  \textbf{Input:} Panel data sets $\{y_i,X_{i},z_{i}\}_{i\in \{1, 2,  ..., N\}}$ from model satisfying assumptions \ref{A_data} to \ref{A_eps_u_omega}, a natural number B.
        \STATE \textbf{Output:} An estimator of the parameter of interest $\beta_{1o}$.
        \FOR{$ss \in \{ 1, \dots, SS\} $}
       \STATE Split the individual index set $\{1, 2,  ..., N\}$ into B sets $S_1, S_2,...,S_B$. 
        \FOR{$b \in \{1, 2,  ..., B\}$}           
                \STATE Estimate the conditional expectation $\mathbb{E}[\tau x_{1it}|I_t]$ using a super learner and data corresponding to $S_b^c$.          \STATE Predict $\mathbb{E}[\tau x_{1it}|I_t]$ for        $i \in S_b$.
                \STATE Estimate the residuals $ \tau u_{it}$ for $i \in S_b$ using the prediction of $\mathbb{E}[\tau x_{1it}|I_t]$ of previous step. 
                \STATE Estimate $\beta_{o} = [\beta_{1o}, \beta_{2o}]'$ solving the sample moment  conditions \ref{sample_moments} for $i \in S_b$.                                        
        \ENDFOR
        \STATE Compute $\hat{\beta}_{ss}=\frac{\sum_1^B \widehat{\beta_{b,ss}}}{B}.$
        \ENDFOR
        \STATE  Compute $\hat{\beta}_{o}=\frac{\sum_1^{SS} \widehat{\beta_{ss}}}{SS}.$        
    \end{algorithmic}
    \end{algorithm}
\begin{example_un}{First-difference transformation (continued):}

As described in the previous section,  we can write the first-differenced structural equation for observation $i$ at period $t$ as follows: 
\begin{equation} 
\Delta y_{it}=\Delta x_{1it}\beta_{1_o}+\Delta \tilde{x}_{it}'\beta_{2_o}+\rho_o \Delta u_{it} + \Delta \omega_{it} , i\in \{1, ..., N\}, t \in \{2, ..., T_i\}.
\end{equation}
\sloppy
First, we partition the set $\{1, 2,  ..., N\}$ in B subsets $S_1$, $S_2$, ..., $S_B$, 
and for each $b \in B$ we estimate $\mathbb{E}[\Delta x_{1it}|\tilde{x}_{it},
\tilde{x}_{it-1},z_{it},z_{it-1}]$ using a super learner with partition $S_b^c$. 
 Then, we get the residuals $\widehat{\Delta u}_{it}^{S_b}=\Delta x_{1it}-
 \widehat{\Delta g_o^{S_b^c}(\tilde{x}_{it},z_{it})}$ for partition $S_b$.  
We obtain the estimator of $\beta_o$ for partition $S_b$ as the solution
 of the sample moment conditions such that: 
\begin{equation}\label{sample_moments_ex}
\frac{1}{n_{T,b}}\sum_{i \in S_b}\psi(O_i;\hat{\beta}_b,\widehat{\Delta g}_o^{S_b^c})=0,
\end{equation} 
\noindent where $\psi(O_i;\hat{\beta}_b,\widehat{\Delta g}_o^{S_b^c}) = X_i' 
\hat{A}_{\Delta u, i}^{S_b} V_i^{-1}  (\hat{A}_{\Delta u, i}^{S_b}D_i y_i - \tilde{H}_i^{S_b}
 \hat{\theta}_b) $, $\tilde{H}_{i}^{S_b}=[\hat{A}_{\Delta u, i}^{S_b}D_i x_{1i} 
 \quad \hat{A}_{\Delta u, i}^{S_b}D_i \tilde{x}_{i}]$, $\hat{A}_{\Delta u, i}^{S_b} = 
I_{T_{ai}} - \widehat{\Delta u}_{i}^{S_b} (
    \widehat{\Delta u'}_{i}^{ S_b} \widehat{\Delta u}_{i}^{S_b})^{-1} 
    \widehat{\Delta u'}_{i}^{ S_b} $, and $\widehat{\Delta u}_{i}^{S_b}$
  are the residuals obtained in the first stage for partition $S_b$. 

   Finally, since $\hat{\beta}_{1o,b} \in \hat{\beta}_{o,b}$ we estimate our parameter of interest $\beta_{1o}$ using: 

     $$\hat{\beta}_{1o}=\frac{1}{B}\sum_{b=1}^B \hat{\beta}_{1,b}.$$

\begin{algorithm}
    \caption{Super Learner Control Function Estimation: First-difference transformation }\label{Algortithm_CF1_ex}
    \begin{algorithmic}
        \STATE  \textbf{Input:} Panel data sets $\{y_i,X_{i},z_{i}\}_{i\in \{1, 2,  ..., N\}}$ from model satisfying assumptions \ref{A_data} to \ref{A_eps_u_omega}, a natural number B.
        \STATE \textbf{Output:} An estimator of the parameter of interest $\beta_{1o}$.
        \FOR{$ss \in \{1, 2,  ..., SS\}$}            
        \STATE Split the individual index set $\{1, 2,  ..., N\}$ into B sets $S_1, S_2,...,S_B$. 
        \FOR{$b \in \{1, 2,  ..., B\}$}           
                \STATE Estimate the conditional expectation $\mathbb{E}[\Delta x_{1it}|\tilde{x}_{it},\tilde{x}_{it-1},z_{it},z_{it-1}]$ using a super learner and data corresponding to $S_b^c$.          \STATE Predict $\mathbb{E}[\Delta x_{1it}|\tilde{x}_{it},\tilde{x}_{it-1},z_{it},z_{it-1}]$ for        $i \in S_b$.
                \STATE Obtain the residuals $\Delta u_{it}$ for $i \in S_b$ using the prediction of $\mathbb{E}[\Delta x_{1it}|\tilde{x}_{it},\tilde{x}_{it-1},z_{it},z_{it-1}]$ of previous step. 
                \STATE Estimate $\beta_{o} = [\beta_{1o}, \beta_{2o}]$ solving the sample moment  conditions \ref{sample_moments} for $i \in S_b$.                        
        \ENDFOR
        \STATE Compute $\hat{\beta}_{ss}=\frac{\sum_1^B \widehat{\beta_{b,ss}}}{B}.$        
        \ENDFOR
        \STATE Compute $\hat{\beta}_{o}=\frac{\sum_1^{SS} \widehat{\beta_{ss}}}{SS}.$
    \end{algorithmic}
\end{algorithm}
\end{example_un}





\section{Large sample properties of the SLCF estimator}\label{S_Properties}
\begin{theorem} {Consistency of the SLCF estimator}\label{T_Consistency} 

   \noindent If $\{O_i = (y_i, X_i, z_i) \}_{i\in [N]}$ satisfy assumptions
    \ref{A_data} to \ref{A_eps_u_omega}, assumptions of subsection 
    \ref{Annex_assumptions} in the Annex hold, i) $\Omega_i$ is a positive 
    semi-definite matrix and $\Omega_i\mathbb{E}[\phi(O_i; \theta, \tau g_o)]=0$ 
    only if $\theta = \theta_o$, ii) $\theta \in \Theta$ with $\Theta$ 
    compact, iii) $\phi(O_i; \theta, \tau g_o)$ is continuous at each
    $\theta \in \Theta$, iv) the function class $\mathcal{F}=\{\phi(O_i; \theta, 
    \tau  g_o)\}$ satisfies a uniform integrability condition
     $\mathbb{E}[\operatorname*{sup_{ \theta \in \Theta}} \left \| \phi(O_i; 
     \theta, \tau g_o)\right \|] < \infty$,  then $\hat{\theta}_o
      \overset{p}{\to} \theta_o$.
\end{theorem}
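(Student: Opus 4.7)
The plan is to reduce the claim to per-fold consistency and then aggregate. Since $\hat\theta_o = B^{-1}\sum_{b=1}^B \hat\theta_b$ with $B$ fixed (and similarly across the $SS$ outer split replications), the continuous mapping theorem makes it enough to prove $\hat\theta_b \overset{p}{\to} \theta_o$ for an arbitrary fold $b$. For that, I follow the standard Newey--McFadden template for Z-estimators: under identification (i), compactness of $\Theta$ in (ii), and continuity of $\phi$ in (iii), consistency reduces to uniform convergence of the sample moment to its population counterpart, namely
\begin{equation*}
\sup_{\theta \in \Theta}\,\Bigl\| \tfrac{1}{n_{T,b}}\sum_{i\in S_b}\phi\bigl(O_i;\theta,\widehat{\tau g_o}^{S_b^c}\bigr) \;-\; \mathbb{E}\bigl[\phi(O_i;\theta,\tau g_o)\bigr] \Bigr\| \overset{p}{\to} 0.
\end{equation*}

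I would split this quantity into an oracle term that uses the true nuisance $\tau g_o$ and a remainder due to replacing $\tau g_o$ with $\widehat{\tau g_o}^{S_b^c}$. The oracle term is handled by a uniform LLN in the Jennrich style: Assumption \ref{A_data} gives i.i.d.\ observations across $i$ with bounded support, the map $\theta \mapsto \phi(O_i;\theta,\tau g_o)$ is continuous on the compact $\Theta$ by (iii), and the envelope condition (iv) supplies the uniform integrability needed for the dominated convergence step inside the ULLN.

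The main obstacle is the remainder term, and this is where cross-fitting does essential work. Because $\widehat{\tau g_o}^{S_b^c}$ is trained only on the complement $S_b^c$, conditional on $S_b^c$ the fitted function is deterministic from the viewpoint of units $i \in S_b$, so conditional Chebyshev/Markov bounds apply. Noting that $\phi$ depends on the nuisance only through the generated regressor $\widehat{\tau u}_i^{S_b} = M_{\tau_i}x_{1i} - \widehat{\tau g_o}^{S_b^c}(\tilde{x}_i,z_i)$, which enters $\tilde H_i^{S_b}$ and $M_{\tau_i}\omega_i$ affinely, the pointwise-in-$\theta$ difference $\phi(O_i;\theta,\widehat{\tau g_o}^{S_b^c}) - \phi(O_i;\theta,\tau g_o)$ is a polynomial of degree at most two in the nuisance error $\widehat{\tau g_o}^{S_b^c} - \tau g_o$, multiplied by bounded factors (Assumption \ref{A_data}). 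The super-learner $L^2$-consistency posited in the Annex assumptions then gives $\|\widehat{\tau g_o}^{S_b^c} - \tau g_o\|_{L^2} \overset{p}{\to} 0$, which, combined with compactness of $\Theta$ and continuity in $\theta$ to pass from pointwise to uniform (e.g., a bracketing/Lipschitz argument using envelope (iv)), yields $\sup_\theta$-negligibility of the remainder.

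Combining the two terms delivers the required uniform convergence, whence identification (i) forces $\hat\theta_b \overset{p}{\to} \theta_o$; averaging over the $B$ folds (and the $SS$ outer repetitions) via the continuous mapping theorem closes the argument. The genuinely delicate piece is the remainder bound: without sample splitting one would face a stochastic-equicontinuity problem over the data-dependent function $\widehat{\tau g_o}$, which would require Donsker-type control on the super-learner class; cross-fitting circumvents this by letting us condition out the randomness of $\widehat{\tau g_o}^{S_b^c}$ and reduce the issue to a scalar $L^2$-rate, which the Annex assumptions explicitly supply.
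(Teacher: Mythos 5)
Your proposal is correct and follows essentially the same route as the paper: the paper verifies the standard Newey--McFadden conditions by citing Lemmas 12 and 15 of Emmenegger and B\"uhlmann (2023) for the oracle uniform law of large numbers, and its own Lemma \ref{lemma_1} for the nuisance remainder, where the difference $\phi(O_i;\theta,\tau g)-\phi(O_i;\theta,\tau g_o)=(M_{\tau_i}y_i)'V_i^{-1}\xi_i+\theta'\xi_i'V_i^{-1}\xi_i$ is exactly the degree-two polynomial in the nuisance error you identify, bounded uniformly over the compact $\Theta$ by the $L^2$ rate $\delta_N$. Your oracle-plus-remainder decomposition and the role you assign to cross-fitting match the paper's argument; you merely write out explicitly what the paper delegates to the cited lemmas.
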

The proof is presented in the appendix (Subsection \ref{Consistency_proof}). 

\begin{theorem}{Asymptotic normality of the SLCF estimator  of $\beta_{1o}$}\label{Theorem_AN}

    \noindent If $\{O_i = (y_i, X_i, z_i) \}_{i\in [N]}$ satisfy assumptions \ref{A_data} to \ref{A_eps_u_omega}, assumptions of subsection \ref{Annex_assumptions} in the Annex hold, 
    then, 
\begin{equation}
    \sqrt{N_T} (\hat{\beta}_{1o} - \beta_{1o}) \xrightarrow{d}  \mathcal{N}(0, \sigma_o^2),
\end{equation}

with $\sigma_o^2 = \mathbb{E}_P[x_{1i}'M_{\tau_i} A_i V_i^{-1} 
A_i M_{\tau_i} x_{1i}]^{-1} \mathbb{E}_{P} [ x_{1i}' M_{\tau_i} A_i 
V_i^{-1} A_i M_{\tau_i} \omega_i\omega_i' M_{\tau_i} A_i V_i^{-1} 
A_i M_{\tau_i}x_{1i}] \mathbb{E}_P[x_{1i}'M_{\tau_i} A_i V_i^{-1} A_i M_{\tau_i} x_{1i}] ^{-1}$, 

and $A_i = I_{T_{ai}} - [M_{\tau_i} \tilde{X}_i  \quad \tau u_i ]\big([M_{\tau_i} 
\tilde{X}_i \quad \tau u_i ]'[M_{\tau_i} \tilde{X}_i \quad 
\tau u_i ] \big)^{-1} [M_{\tau_i} \tilde{X}_i  
\quad \tau u_i  ]'$.

\end{theorem}
The proof is presented in the appendix (Subsection \ref{Proof_AsymptoticNormality}). 
\begin{theorem}{Asymptotic normality of the SLCF estimator of $\beta_{o}$}\label{Theorem_AN_theta}

    \noindent If $\{O_i = (y_i, X_i, z_i) \}_{i\in [N]}$ satisfy assumptions \ref{A_data} to \ref{A_eps_u_omega}, assumptions of subsection \ref{Annex_assumptions} in the Annex hold, 
    then, 
\begin{equation}
    \sqrt{N_T} (\hat{\beta}_{o} - \beta_{o}) \xrightarrow{d}  
    \mathcal{N}(0, \Sigma_o),
\end{equation}

with $\Sigma_o = \mathbb{E}_P[H_{i}'M_{\tau_i} A_{\tau u, i} V_i^{-1} 
A_{\tau u, i} M_{\tau_i}H_i]^{-1} \mathbb{E}_{P} [ H_{i}'M_{\tau_i}  A_{\tau u, i} V_i^{-1}  
  A_{\tau u, i} M_{\tau_i} \omega_i \omega_i '  M_{\tau_i} A_{\tau u, i}V_i^{-1} A_{\tau u, i} M_{\tau_i} H_{i}] \mathbb{E}_P[H_{i}'M_{\tau_i} 
A_{\tau u, i}
 V_i^{-1}  A_{\tau u, i} M_{\tau_i} H_i]^{-1}$.

\end{theorem}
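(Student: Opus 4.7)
The plan is to argue along the same lines as Theorem \ref{Theorem_AN}, but carry the full parameter vector $\theta_o$ rather than only $\beta_{1_o}$. For each fold $b$ I view $\hat{\theta}_b$ as the $Z$-estimator solution of
\begin{equation*}
\frac{1}{n_{T,b}}\sum_{i \in S_b}\phi(O_i;\hat{\theta}_b,\widehat{\tau g_o}^{S_b^c}) = 0.
\end{equation*}
Because $\phi$ is linear in $\theta$, the Taylor expansion around $\theta_o$ is exact:
\begin{equation*}
\sqrt{n_{T,b}}(\hat{\theta}_b - \theta_o) = -\hat{J}_b^{-1} \cdot \sqrt{n_{T,b}}\cdot \frac{1}{n_{T,b}}\sum_{i\in S_b}\phi(O_i;\theta_o,\widehat{\tau g_o}^{S_b^c}),
\end{equation*}
with $\hat{J}_b = -\frac{1}{n_{T,b}}\sum_{i\in S_b}(\tilde{H}_i^{S_b})'V_i^{-1}M_{\tau_i}\tilde{H}_i^{S_b}$. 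First I would show $\hat{J}_b \xrightarrow{p} J_o := -\mathbb{E}[H_i' M_{\tau_i} \tilde{V}_i^{-1} M_{\tau_i} H_i]/\sigma^2_{\omega}$, using consistency of the super learner estimator of $\tau g_o$ (from the annex assumptions), a uniform law of large numbers under the boundedness of Assumption \ref{A_data}, and cross-fitting which makes $\widehat{\tau u}_i^{S_b}$ based on an independent sub-sample.

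Next, I would decompose the moment at $\theta_o$ as
\begin{equation*}
\frac{1}{n_{T,b}}\sum_{i\in S_b}\phi(O_i;\theta_o,\widehat{\tau g_o}^{S_b^c}) = \frac{1}{n_{T,b}}\sum_{i\in S_b}\phi(O_i;\theta_o,\tau g_o) + R_b,
\end{equation*}
and show $\sqrt{n_{T,b}}\,R_b = o_p(1)$. Writing $\widehat{\tau u}_i^{S_b} - M_{\tau_i}u_i = M_{\tau_i}(g_o - \widehat{\tau g_o}^{S_b^c})$, the linear-in-error cross terms vanish in conditional mean by Assumption \ref{A_eps_u_omega} and the tower property (the conditioning works out precisely because cross-fitting makes $\widehat{\tau g_o}^{S_b^c}$ measurable with respect to an independent sample), while the remaining second-order pieces are bounded by $\sqrt{n_{T,b}}\,\lVert\widehat{\tau g_o}^{S_b^c}-\tau g_o\rVert_{L^2}^2$. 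The $o(N_T^{-1/4})$ rate of the super learner, implied by the smoothness $\varsigma\geq 2$ in Assumption \ref{A_reduced_form} together with the annex conditions, ensures this is $o_p(1)$. I then apply a Lindeberg-Feller CLT to $\sqrt{n_{T,b}}\cdot \frac{1}{n_{T,b}}\sum_{i\in S_b}\phi(O_i;\theta_o,\tau g_o)$ under the i.i.d.\ bounded setup of Assumption \ref{A_data}, obtaining a zero-mean Gaussian limit with covariance $\mathbb{E}[\phi\phi'] = \mathbb{E}[H_i'M_{\tau_i} \tilde{V}_i^{-1}\omega_i\omega_i'\tilde{V}_i^{-1}M_{\tau_i}H_i]/\sigma^4_\omega$.

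Combining these via Slutsky yields $\sqrt{n_{T,b}}(\hat{\theta}_b - \theta_o) \xrightarrow{d} \mathcal{N}(0, \Sigma_o)$ with $\Sigma_o = J_o^{-1}\mathbb{E}[\phi\phi']J_o^{-1}$ where the $\sigma^2_\omega$ constants cancel to deliver exactly the sandwich form stated in the theorem. Because $n_{T,b}/N_T \to 1/B$ and the fold-specific estimators are asymptotically jointly normal with identical marginal variance, aggregating to $\hat{\theta}_o = B^{-1}\sum_b \hat{\theta}_b$ upgrades the rate to $\sqrt{N_T}$ with the same $\Sigma_o$. The main obstacle is the control of the nuisance remainder $R_b$: since $\phi$ is not Neyman orthogonal as written (only its projected version in Section \ref{S_Estimation} is), one must rely on the bilinear structure of the control-function representation and on the first-stage rate to neutralise the bias; if for some coordinate of $\theta$ the non-orthogonality produces a contribution that does not absorb into $R_b$, the fallback is to work with the orthogonalised score of Section \ref{S_Estimation} for that coordinate, which yields the identical limit law.
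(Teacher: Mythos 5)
Your architecture --- exact linear expansion of the fold-wise $Z$-estimator, convergence of the Jacobian, a Lindeberg--Feller CLT for the oracle score, Slutsky, and fold averaging --- is the same skeleton the paper uses for Theorem \ref{Theorem_AN}, which is precisely what the paper defers to here (it omits this proof). The variance algebra, including the cancellation of $\sigma^2_{\omega}$ between $J_o^{-1}$ and $\mathbb{E}[\phi\phi']$, is also consistent with the stated $\Sigma_o$.

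The gap is the claim that $\sqrt{n_{T,b}}\,R_b = o_p(1)$. Write $\xi_i = \tau g_o - \widehat{\tau g_o}^{S_b^c}$, so that $\widehat{\tau u}_i = M_{\tau_i}u_i + \xi_i$ and $M_{\tau_i}y_i - \tilde{H}_i^{S_b}\theta_o = M_{\tau_i}\omega_i - \rho_o \xi_i$. Then
\begin{equation*}
\phi(O_i;\theta_o,\widehat{\tau g_o}^{S_b^c}) - \phi(O_i;\theta_o,\tau g_o)
= -\rho_o (M_{\tau_i}H_i)'V_i^{-1}\xi_i + [0,0,\xi_i]'V_i^{-1}M_{\tau_i}\omega_i - \rho_o [0,0,\xi_i]'V_i^{-1}\xi_i .
\end{equation*}
The second and third pieces behave as you say (conditionally mean zero, and $O_p(\sqrt{n_{T,b}}\,\delta_N^2)$, respectively). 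But in the first piece the rows of $(M_{\tau_i}H_i)'$ corresponding to $x_{1i}$ and $\tilde{x}_{i}$ contribute $-\rho_o(\tau g_o)'V_i^{-1}\xi_i$ and $-\rho_o(M_{\tau_i}\tilde{X}_i)'V_i^{-1}\xi_i$: these are products of two covariate-measurable quantities, so the tower property and Assumption \ref{A_eps_u_omega} give you nothing, and cross-fitting removes only the stochastic dependence on the estimation sample, not this bias. Their expectations are generically of exact order $\left\| \xi_i \right\|_{P,2} \asymp \delta_N$, while the annex only guarantees $\delta_N \geq N^{-1/4}$ with $\sqrt{N_T}\delta_N \to \infty$; hence $\sqrt{n_{T,b}}\,R_b$ diverges whenever $\rho_o \neq 0$. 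This is exactly the regularization bias that Neyman orthogonality is designed to kill, and the unprojected score $\phi$ is not orthogonal. Your fallback does not repair it: the orthogonalized score of Section \ref{S_Estimation} is obtained by projecting out $u_i$ and identifies only $\beta_1$ (it cannot identify $\rho_o$, whose regressor is the very direction being projected out), and replacing the score changes the estimator, so it cannot deliver the stated $\Sigma_o$, which is written in terms of the unprojected $H_i$. Closing the argument requires either $\delta_N = o(N_T^{-1/2})$ or a direct proof that $\mathbb{E}[(\tau g_o)'V_i^{-1}\xi_i]$ and $\mathbb{E}[(M_{\tau_i}\tilde{X}_i)'V_i^{-1}\xi_i]$ are $o(N_T^{-1/2})$, neither of which follows from the stated assumptions. (The same step is the weak point of the paper's own proof of Theorem \ref{Theorem_AN}, where it is absorbed into the substitution $\hat{A}_i^{S_b} = A_i + o_p(1)$ inside a $\sqrt{N_T}$-scaled sum, so your proposal is no worse than the benchmark --- but the step as you state it is not established.)
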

The proof is not presented as it follows a similar argument as the proof of Theorem \ref{Theorem_AN}. 
\subsection{Estimator of the variance-covariance matrix}
The estimator of the variance-covariance matrix of the SLCF estimator of $\beta_o$ is given by:

\begin{equation}
    \hat{\Sigma}_o = \hat{J}_o^{-1} 
    \frac{1}{B} \sum_{b=1}^B  \mathbb{E}_{n,b} [ \tilde{H'}_{i}^{S_b} 
    V^{-1}_i 
    (\hat{A}_{\tau u, i} ^{S_b} M_{\tau_i} y_{i}^{S_b} 
    - \tilde{H}_{i}^{S_b} \hat{\beta}_o )(\hat{A}_{\tau u, i}^{S_b}M_{\tau_i} 
    y_{i}^{S_b} -
    \tilde{H}_{i}^{S_b} \hat{\beta}_o )'V^{-1}_i 
    \tilde{H}_{i}^{S_b}  ]  \hat{J}_o^{-1} ,  
\end{equation}

where:

\begin{equation}
    \hat{J}_o = \frac{1}{B} \sum_{b = 1}^B \mathbb{E}_{n,b} 
    [\tilde{H'}_{i}^{S_b}V^{-1}_i\tilde{H}_{i}^{S_b}],
\end{equation}

and $\tilde{H}_{i}^{S_b} = [\hat{A}_{\tau u, i} ^{S_b}M_{\tau_i} x_{1i} \quad \hat{A}_{\tau u, i} ^{S_b}M_{\tau_i} \tilde{X}_{i}]$.

\begin{remark}
In the case of Within transformation and $\epsilon_{it}$ are identical and 
independently distributed, there is need of adjusting the variance-covariance matrix
by multiplying by the factor $\frac{NT-K}{NT-N-K}$. The reason is that this provides 
valid statistical inference after whiping out the individual fixed effects (\citealt{ding2021frisch}).
Otherwise, the cluster-robust variance-covariance matrix of the Within estimator 
does not need any correction (\citealt{ding2021frisch}, \citealt{colin2015practitioner}).
\end{remark}
\subsubsection{Including sample splitting uncertainty}\label{Including_sample_splitting_uncertainty}

In order to take into account the sample splitting uncertainty, we can repeat the 
cross-fitting process several times as explained in Section \ref{S_SLCF} 
and algorithm \ref{Algortithm_CF1}.  We also need to add correction term to 
the variance-covariance estimator $\hat{\Sigma}_o$ given by:

\begin{equation}
    C = \frac{1}{SS}\sum_{ss = 1}^{SS}(\hat{\beta}_{ss} - \hat{\beta}_o)(\hat{\beta}_{ss} - \hat{\beta}_o)', 
\end{equation}

with $\hat{\beta}_o = \frac{1}{SS} \sum_{ss = 1}^{SS}\hat{\beta}_{ss}$.

It is possible to use the median instead of the mean (\citealt{Chernozhukov2018EJ}).
\begin{algorithm}\label{Algortithm_VarCov}
    \caption{Estimator of the variance-covariance of the Super Learner Control Function estimator}\label{your_label}
    \begin{algorithmic}
        \STATE  \textbf{Input:} Panel data sets $\{y_i,X_{i},z_{i}\}_{i\in \{1, 2,  ..., N\}}$ 
        from model satisfying assumptions \ref{A_data} to \ref{A_eps_u_omega}, a natural number B. 
        The selected index set $\{1, 2,  ..., N\}$ for each set $S_1, S_2,...,S_B$ across all samples splits (SS) obtained 
        in algorithm \ref{Algortithm_CF1}. The estimated $\hat{\theta}_{ss}$ obtained in algorithm \ref{Algortithm_CF1}.
        \STATE \textbf{Output:} An estimator of the variance-covariance matrix of $\theta_{1o}$.                
        \FOR{$ss \in \{1, 2, ..., SS \}$}
        \FOR{$b \in \{1, 2,  ..., B\}$}           
                \STATE Estimate $\hat{J}_{o,b}^{-1} = \mathbb{E}_{n,b} 
                [\tilde{H'}_{i}^{S_b}V^{-1}_i \tilde{H}_{i}^{S_b}]$.
                \STATE Estimate $ \mathbb{E}_{n,b} [ \tilde{H'}_{i}^{S_b} 
    V^{-1}_i 
    (\hat{A}_{\tau u, i} ^{S_b} M_{\tau_i} y_{i}^{S_b} 
    - \tilde{H}_{i}^{S_b} \hat{\beta}_o )(\hat{A}_{\tau u, i}^{S_b}M_{\tau_i} 
    y_{i}^{S_b} -
    \tilde{H}_{i}^{S_b} \hat{\beta}_o )'V^{-1}_i 
    \tilde{H}_{i}^{S_b} ]$.                 
        \ENDFOR
        \STATE Compute $\hat{\Sigma}_s.$                
        \ENDFOR
        \STATE  Compute $C=\frac{1}{SS}\sum_{ss = 1}^{SS}(\hat{\beta}_{ss} - 
        \hat{\beta}_o)(\hat{\beta}_{ss} - \hat{\beta}_o)'.$         
        \STATE Compute $\hat{\Sigma}_o = \frac{1}{SS} \sum_{ss=1}^{SS} 
        (\hat{\Sigma}_s) + C.$

    \end{algorithmic}
\end{algorithm}

\newpage
\section{Comparison of the SLCFE with plug-in IV and naive plug-in 2SLS estimators}\label{S_Comparison}

\subsection{Naive plug-in 2SLS estimator is not $\sqrt{N_T}$-consistent and it is not numerically equivalent to SLCFE}
The Super Learner Control function estimator is not numerical equivalent to a 2SLS estimator that
 plugs in the estimated transformed nuisance parameter $\tau g_o(\tilde{x}_{it}, z_{it})$.  
 Without loss of generality, we assume that exogenous covariates are not present (model 
 \ref{structural_equation_toy} to \ref{relationship_errors_toy}), and $V_{i} = I_{i}$.  
 As described in Section \ref{S_Identification}, one could be tempted to estimate the parameter of
interest using the naive 2SLS estimator:

\begin{equation}
    \hat{\beta}_{2SLS, 1_o} = \sum_b^B \frac{1}{B}(\sum_{i \in S_b} \widehat{\tau g_o}^{S_b^c}( z_i)' \widehat{\tau g_o}^{S_b^c}( z_i))^{-1} (\sum_{i \in S_b}\widehat{\tau g_o}^{S_b^c}( z_i)' M_{\tau_i} y_i).
\end{equation}

\begin{theorem}
    If $\{O_i = (y_i, X_i, z_i) \}_{i\in [N]}$ satisfy assumptions
    \ref{A_data} to \ref{A_eps_u_omega}, assumptions  \ref{A_ThetaCompact} - \ref{A_several}  of subsection 
    \ref{Annex_assumptions} in the Annex hold, Naive plug-in 2SLS estimation is not 
    numerically equivalent to SLCFE when $g_o(.)$ is nonlinear
\end{theorem}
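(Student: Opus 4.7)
The plan is to compare the two estimators in closed form. Focus, as in the section, on the simplified setting without exogenous covariates and with $V_i = I_i$. Fix a fold $b$ and stack the within-fold observations as $X_1 = (M_{\tau_i} x_{1i})_{i \in S_b}$, $Y = (M_{\tau_i} y_i)_{i \in S_b}$, $\hat{G} = (\widehat{\tau g_o}^{S_b^c}(z_i))_{i \in S_b}$, and $\hat{U} = X_1 - \hat{G}$, so that $X_1 = \hat{G} + \hat{U}$ by construction. The SLCF estimator regresses $Y$ on the augmented design $[X_1, \hat{U}]$, while the naive plug-in 2SLS regresses $Y$ on $\hat{G}$ alone.

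Applying Frisch--Waugh--Lovell to the SLCF regression, together with the identity $M_{\hat{U}} X_1 = M_{\hat{U}} \hat{G}$ (which follows from $X_1 = \hat{G} + \hat{U}$ and $M_{\hat{U}} \hat{U} = 0$), the fold-$b$ SLCF coefficient on $x_1$ is $\hat{\beta}_{1,\mathrm{SLCF},b} = (\hat{G}' M_{\hat{U}} \hat{G})^{-1} \hat{G}' M_{\hat{U}} Y$, whereas the fold-$b$ naive 2SLS estimator is $\hat{\beta}_{2SLS,b} = (\hat{G}' \hat{G})^{-1} \hat{G}' Y$. A direct algebraic comparison shows these two expressions coincide if and only if $\hat{G}' \hat{U} = 0$, and this equivalence carries through the $B^{-1}\sum_b$ averaging that defines $\hat{\beta}_{1_o}$ and $\hat{\beta}_{2SLS,1_o}$. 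So the entire claim reduces to showing that $\hat G' \hat U \neq 0$ whenever $g_o$ is nonlinear.

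The crucial final step is therefore to rule out this orthogonality. In the classical 2SLS setting -- where the first stage is in-sample OLS onto a fixed linear basis in $z$ -- $\hat{G}'\hat{U}=0$ is an algebraic consequence of the first-stage normal equations. The SLCF first stage, by contrast, is (i) cross-fitted, so $\widehat{\tau g_o}^{S_b^c}$ is trained on $S_b^c$ and then evaluated out-of-sample on $S_b$, eliminating any in-sample normal-equation identity on fold $b$; and (ii) a super learner aggregating regularized and/or intrinsically nonlinear base learners, which is not a linear orthogonal projection onto any fixed feature subspace spanned by $z$. Hence $\hat{G}'\hat{U}$ is a smooth functional of the data with no algebraic identity forcing it to vanish, and by the continuity of the distributions in Assumption \ref{A_data} it equals zero only on a Lebesgue-null set.

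The main obstacle I anticipate is not the FWL derivation but making this last "no identity to force orthogonality" step watertight, since it is a negative statement about an entire class of super learners. The cleanest fallback is to exhibit a concrete counterexample -- for instance $T_i = 2$ with first-differencing, a quadratic $g_o$, and either a random-forest or spline first stage with two-fold cross-fitting -- and verify numerically that $\hat{G}'\hat{U}\neq 0$, hence $\hat{\beta}_{1,\mathrm{SLCF}} \neq \hat{\beta}_{2SLS,1_o}$. This reduces the theorem to an explicit finite-sample computation while preserving the message that the discrepancy is a structural consequence of nonlinearity combined with cross-fitted nonparametric first-stage estimation.
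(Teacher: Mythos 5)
Your proposal is correct and follows essentially the same route as the paper: both write the two estimators in closed form for a single fold and reduce the question to whether the first-stage fitted values are orthogonal to the first-stage residuals, which is exactly the identity that makes the control-function and 2SLS estimators coincide in the classical in-sample linear case. If anything your version is sharper than the paper's -- the paper merely asserts $\hat{A}_i M_{\tau_i} x_{1i} \neq \widehat{\tau g_o}(z_i)$ and states the equivalence condition as $\tau g_o(z) \in \mathrm{col}(\tau z)$ without derivation, whereas your Frisch--Waugh--Lovell computation pins down the precise condition $\hat{G}'\hat{U} = 0$ and explains why cross-fitting combined with a non-projection learner removes the normal-equation identity that would enforce it; the one step that remains informal in both treatments is the claim that nonlinearity of $g_o$ genuinely forces $\hat{G}'\hat{U} \neq 0$ rather than this merely holding outside a null set, so your fallback of exhibiting a concrete finite-sample counterexample is a legitimate way to close what is also a gap in the paper's own proof.
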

\begin{proof}

Assuming that $B = 1$, we can write the estimators as:

\begin{equation}
    \hat{\beta}_{2SLS, 1_o} = \Big(\sum_{i} \widehat{\tau g_o}( z_i)'\widehat{\tau g_o}( z_i)\Big)^{-1}
     \Big(\sum_i\widehat{\tau g_o}( z_i)' M_{\tau_i} y_i\Big),
\end{equation}

\begin{equation}
    \hat{\beta}_{SLCFE, 1_o} = \Big(\sum_i (M_{\tau_i}x_{1i})' \hat{A}_{\tau u, i} M_{\tau_i} x_{1i} \Big)^{-1}
     \Big(\sum_i (M_{\tau_i} x_{1i}) ' \hat{A}_{\tau u, i}  M_{\tau_i} y_i\Big),
\end{equation}
\noindent where $\hat{A}_{\tau u, i} = I_{T_{ai}} - \widehat{\tau u_i} (  \widehat{\tau u_i}'\widehat{\tau u_i})^{-1} \widehat{\tau u_i}'$.

\noindent Because $ \hat{A}_{\tau u, i} M_{\tau_i} x_{1i} \neq \widehat{\tau g_o}( z_i)$ , the two estimators are not equivalent.  
They are numerically equivalent if and only if $\tau g_o( z) \in \text{col}(\tau z)$. 
\end{proof}

\begin{theorem}
       If $\{O_i = (y_i, X_i, z_i) \}_{i\in [N]}$ satisfy assumptions
    \ref{A_data} to \ref{A_eps_u_omega}, assumptions \ref{A_ThetaCompact} - \ref{A_several} of subsection 
    \ref{Annex_assumptions} in the Annex hold in particular that 
    $\left\| \tau g - \tau g_o \right\|_{P,2} \leq \delta_N$
     with $\delta_N \sqrt{N_T} \to \infty$, 
     $\{\delta_N\}$ a sequence of positive numbers that converge 
     to 0 as $N \rightarrow \infty$
     , the Naive plug-in 2SLS estimator is not $\sqrt{N_T}$-consistent. 
\end{theorem}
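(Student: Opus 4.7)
The plan is to decompose $\hat\beta_{2SLS,1_o}$ explicitly, isolate the piece that survives cross-fitting because the underlying score is not Neyman-orthogonal, and exhibit the resulting regularization bias. Concretely I aim to show that $\sqrt{N_T}(\hat\beta_{2SLS,1_o}-\beta_{1_o})$ diverges in probability when $\delta_N\sqrt{N_T}\to\infty$: the bias is of order $\delta_N$ and dominates the $N_T^{-1/2}$ stochastic fluctuation, so the estimator cannot admit a non-degenerate limit distribution centered at $\beta_{1_o}$. This is the operational sense of the theorem's ``inconsistent'' --- the naive plug-in cannot support root-$N_T$ inference for $\beta_{1_o}$, in contrast with the Neyman-orthogonal SLCF score of Proposition~1.

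I would substitute $M_{\tau_i}y_i = M_{\tau_i}x_{1i}\beta_{1_o}+M_{\tau_i}\varepsilon_i$ and $M_{\tau_i}x_{1i} = \widehat{\tau g_o}^{S_b^c}(z_i) - \hat r_i + M_{\tau_i}u_i$, with $\hat r_i := \widehat{\tau g_o}^{S_b^c}(z_i)-\tau g_o(z_i)$, into the definition of $\hat\beta_{2SLS,1_o,b}$ to obtain
\begin{equation*}
\hat\beta_{2SLS,1_o,b}-\beta_{1_o}
= -\beta_{1_o}\,\frac{\mathbb{E}_{n,b}[\widehat{\tau g_o}^{S_b^c}(z)'\hat r]}{\mathbb{E}_{n,b}[\widehat{\tau g_o}^{S_b^c}(z)'\widehat{\tau g_o}^{S_b^c}(z)]}
+ \frac{\mathbb{E}_{n,b}[\widehat{\tau g_o}^{S_b^c}(z)'(M_{\tau}u\,\beta_{1_o}+M_{\tau}\varepsilon)]}{\mathbb{E}_{n,b}[\widehat{\tau g_o}^{S_b^c}(z)'\widehat{\tau g_o}^{S_b^c}(z)]}.
\end{equation*}
Cross-fitting makes $\widehat{\tau g_o}^{S_b^c}$ independent of $(u_i,\varepsilon_i)_{i\in S_b}$ given $z_i$; strict exogeneity (Assumptions~\ref{A_structural_eq}--\ref{A_eps_u_omega}) then gives the second ratio conditional mean zero, and a Chebyshev-type bound delivers an $O_p(N_T^{-1/2})$ stochastic term. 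The denominator, using $\|\widehat{\tau g_o}-\tau g_o\|_{P,2}\le\delta_N\to 0$ together with the instrument relevance implicit in Assumption~\ref{A_reduced_form}, converges in probability to $\mathbb{E}[\tau g_o(z)'\tau g_o(z)]>0$.

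The heart of the argument is the regularization-bias numerator. Writing $\widehat{\tau g_o}^{S_b^c}=\tau g_o+\hat r$ and conditioning on the auxiliary fold,
\begin{equation*}
\mathbb{E}_{n,b}[\widehat{\tau g_o}^{S_b^c}(z)'\hat r]
= \mathbb{E}_{n,b}[\tau g_o(z)'\hat r] + \mathbb{E}_{n,b}[\hat r'\hat r],
\end{equation*}
where the second piece concentrates around $\|\hat r\|_{P,2}^2$ and the first around $\langle\tau g_o,\hat r\rangle_P$. The naive plug-in moment does not impose the $L^2$-orthogonality $\langle\tau g_o,\hat r\rangle_P=0$ that a Neyman-orthogonal score would exploit, so the sum $\langle\tau g_o,\hat r\rangle_P+\|\hat r\|_{P,2}^2$ is of exact order $\|\hat r\|_{P,2}$, which saturates the bound $\delta_N$ under a mild non-degeneracy condition on the first-stage learner to be added to subsection~\ref{Annex_assumptions}.

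Putting the pieces together, $\hat\beta_{2SLS,1_o}-\beta_{1_o}$ equals a fold-deterministic component of exact order $\delta_N$ in probability plus an $O_p(N_T^{-1/2})$ noise; averaging over folds and sample-splitting repetitions preserves this decomposition. Multiplying by $\sqrt{N_T}$ and using $\delta_N\sqrt{N_T}\to\infty$ yields $\sqrt{N_T}(\hat\beta_{2SLS,1_o}-\beta_{1_o})\to\infty$ in probability, so the estimator is inconsistent in the inferential sense relevant here. The main obstacle is the matching lower bound $\|\hat r\|_{P,2}=\Theta(\delta_N)$: to upgrade the upper bound $\|\hat r\|_{P,2}\le\delta_N$ I need a non-degeneracy assumption that excludes learners whose bias happens to be $L^2$-orthogonal to $\tau g_o$, and this step also clarifies why the Neyman-orthogonal SLCF score annihilates the offending term at the population level and is therefore unaffected by the same slow-rate condition.
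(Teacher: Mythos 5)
Your proposal follows essentially the same route as the paper's own proof: substitute for $M_{\tau_i}y_i$, add and subtract $\widehat{\tau g_o}^{S_b^c}(z_i)\beta_{1_o}$ and then $\tau g_o(z_i)$, and identify the non-centered cross term $\langle \tau g_o,\hat r\rangle_P$ (plus $\|\hat r\|_{P,2}^2$) as the $O(\delta_N)$ regularization bias that survives cross-fitting and dominates after scaling by $\sqrt{N_T}$. You are somewhat more careful than the paper in treating the denominator and the stochastic terms, and in explicitly flagging that a non-degeneracy lower bound $\|\hat r\|_{P,2}=\Theta(\delta_N)$ (ruling out accidental $L^2$-orthogonality of the first-stage bias to $\tau g_o$) is needed to turn the upper bound $\|\tau g-\tau g_o\|_{P,2}\le\delta_N$ into a divergence claim --- a gap the paper's proof leaves implicit.
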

\begin{proof}
    The naive plug-in 2SLS estimator with cross-fitting is given by:
\begin{equation}
    \hat{\beta}_{2SLS,1_o} = \frac{1}{B}\sum_b^B\Big(\sum_{i \in S^b} \widehat{\tau g_o}^{S_b^c}( z_i)'\widehat{\tau g_o}^{S_b^c}( z_i)\Big)^{-1}
     \Big(\sum_{i \in S^b}\widehat{\tau g_o}^{S_b^c}( z_i)' M_{\tau_i} y_i\Big).
\end{equation}
Replacing $M_{\tau_i} y_i$ by $\tau g_o(z_i) \beta_{1_o} + M_{\tau_i}u_i \beta_{1_o} + M_{\tau_i} 
\varepsilon_i$, adding and subtracting $\widehat {\tau g_o}^{S_b^c}(z_i)\beta_{1_o}$, it can be written as: 

\begin{equation}
    \hat{\beta}_{2SLS,1_o} = \beta_{1_o} + \frac{1}{B}\sum_b^B\Big(\sum_{i \in S^b} \widehat{\tau g_o}^{S_b^c}( z_i)'\widehat{\tau g_o}^{S_b^c}( z_i)\Big)^{-1}
     \Big(\sum_{i \in S^b}\widehat{\tau g_o}^{S_b^c}( z_i)'( \xi_i^{S_b^c} \beta_{1,o} + M_{\tau_i}u_i +  M_{\tau_i} \varepsilon_i)\Big),
\end{equation}

with $\xi_i^{S_b^c} = \tau g_o(z_i) - \widehat{ \tau g_o} (z_i)^{S_b^c}$. 

\noindent Now, adding and subtracting $\tau g_o(z_i)$ to $\widehat{ \tau g_o}^{S_b^c} (z_i)$ we obtain that:

\begin{equation}
    \hat{\beta}_{2SLS,1_o} = \beta_{1_o} + \frac{1}{B}\sum_b^B\Big(\sum_{i \in S^b} \widehat{\tau g_o}^{S_b^c}( z_i)'\widehat{\tau g_o}^{S_b^c}( z_i)\Big)^{-1}
     \Big(\sum_{i \in S^b}(\xi^{S_b^c}_i + \tau g_o( z_i))'( \xi^{S_b^c}_i \beta_{1,o} + M_{\tau_i}u_i + M_{\tau_i} \varepsilon_i)\Big),
\end{equation}
The second term does not converge to zero at rate faster than $\sqrt{N_T}$ 
because $\tau g_o(z_i)$ is not centered at 0.  
Thus, the naive plug-in estimator presents a first order bias term that does 
not vanish using sample-splitting after scaling by $\sqrt{N_T}$.  Thus, the naive plug-in 2SLS estimator
 is not $\sqrt{N_T}$-consistent.
\end{proof}


\subsection{Plug-in IV estimation is not numerically equivalent to SLCFE}
The IV estimator is given by: 
\begin{equation}
    \hat{\beta}_{IV, 1_o} = \sum_b^B \frac{1}{B}(\sum_{i \in S_b} \widehat{\tau g_o}^{S_b^c}( z_i)' M_{\tau_i} x_{1i})^{-1} (\sum_{i \in S_b}\widehat{\tau g_o}^{S_b^c}( z_i)' M_{\tau_i} y_i).
\end{equation}

\begin{theorem}
    If $\{O_i = (y_i, X_i, z_i) \}_{i\in [N]}$ satisfy assumptions
    \ref{A_data} to \ref{A_eps_u_omega}, assumptions of subsection 
    \ref{Annex_assumptions} in the Annex hold, plug-in IV estimation is not 
    numerically equivalent to SLCFE when $g_o(.)$ is nonlinear
\end{theorem}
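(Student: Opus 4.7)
The plan is to mirror the approach used for the naive plug-in 2SLS theorem: reduce to the single-fold case $B=1$ without exogenous covariates and with $V_i = I_i$, rewrite both estimators in directly comparable form, and show that their quadratic forms differ unless a specific per-individual orthogonality between the first-stage fit and residual holds, which fails generically when $g_o$ is nonlinear.

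First, using the identity $M_{\tau_i} x_{1i} = \widehat{\tau g_o}(z_i) + \widehat{\tau u}_i$ together with $\hat{A}_i \widehat{\tau u}_i = 0$ and the symmetry and idempotence of $\hat{A}_i = I_i - \widehat{\tau u}_i(\widehat{\tau u}_i'\widehat{\tau u}_i)^{-1}\widehat{\tau u}_i'$, I would rewrite
\begin{equation*}
\hat{\beta}_{SLCFE,1_o} = \Bigl(\sum_i \widehat{\tau g_o}(z_i)' \hat{A}_i \widehat{\tau g_o}(z_i)\Bigr)^{-1}\sum_i \widehat{\tau g_o}(z_i)' \hat{A}_i M_{\tau_i} y_i,
\end{equation*}
while expanding $M_{\tau_i} x_{1i}$ in the plug-in IV formula gives
\begin{equation*}
\hat{\beta}_{IV,1_o} = \Bigl(\sum_i \widehat{\tau g_o}(z_i)'\widehat{\tau g_o}(z_i) + \sum_i \widehat{\tau g_o}(z_i)'\widehat{\tau u}_i\Bigr)^{-1}\sum_i \widehat{\tau g_o}(z_i)' M_{\tau_i} y_i.
\end{equation*}

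Second, substituting $\hat{A}_i \widehat{\tau g_o}(z_i) = \widehat{\tau g_o}(z_i) - \widehat{\tau u}_i(\widehat{\tau u}_i'\widehat{\tau u}_i)^{-1}\widehat{\tau u}_i'\widehat{\tau g_o}(z_i)$ shows that the SLCFE denominator differs from $\sum_i \widehat{\tau g_o}(z_i)'\widehat{\tau g_o}(z_i)$ by the non-negative subtracted correction $\sum_i (\widehat{\tau u}_i'\widehat{\tau g_o}(z_i))^2/(\widehat{\tau u}_i'\widehat{\tau u}_i)$, whereas the IV denominator has the signed correction $\sum_i \widehat{\tau g_o}(z_i)'\widehat{\tau u}_i$ added. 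A parallel calculation produces analogous but distinct corrections in the numerators. Forcing both denominators and numerators to match simultaneously reduces to the per-individual condition $\widehat{\tau u}_i'\widehat{\tau g_o}(z_i) = 0$ for every $i$, equivalently $\hat{A}_i \widehat{\tau g_o}(z_i) = \widehat{\tau g_o}(z_i)$, which is the analogue of the author's earlier condition $\tau g_o(z) \in \operatorname{col}(\tau z)$.

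Third, I would argue that this per-individual orthogonality is broken when $g_o$ is nonlinear: a super learner built from nonlinear base learners is not a linear projection onto any fixed column space, so its in-sample residuals and fitted values need not be orthogonal even on the training fold, and the cross-fitting construction, which evaluates $\widehat{\tau g_o}^{S_b^c}$ on $S_b$, further precludes any normal-equation constraint holding on $S_b$. The main obstacle I anticipate is making the notion of ``generic nonlinearity'' precise; I would address it by exhibiting, under Assumption \ref{A_reduced_form} with nonlinear $g_o$ and a super learner library containing at least one strictly nonlinear base learner, a concrete data configuration in which $\widehat{\tau u}_i'\widehat{\tau g_o}(z_i) \neq 0$ for some $i$ with positive probability, which is sufficient to conclude the two estimators are not numerically equal.
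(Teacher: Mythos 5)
Your proposal is correct and follows the same basic strategy as the paper's proof (set $B=1$, drop the exogenous covariates, write both estimators in closed form, and compare), but you push the algebra considerably further than the paper does, and this matters. The paper stops at the observation that $\hat{A}_i M_{\tau_i}x_{1i} \neq \widehat{\tau g_o}(z_i)$ and immediately concludes non-equivalence; taken literally that is a non sequitur, since two estimators built from different weight vectors can still coincide as scalar ratios. Your rewriting of the SLCFE via $M_{\tau_i}x_{1i} = \widehat{\tau g_o}(z_i)+\widehat{\tau u}_i$, $\hat{A}_i\widehat{\tau u}_i = 0$, and the symmetry and idempotence of $\hat{A}_i$ puts both numerator and denominator in the form $\widehat{\tau g_o}(z_i)'\hat{A}_i(\cdot)$, which makes the two estimators directly comparable term by term and isolates the exact discrepancy: a subtracted $\sum_i(\widehat{\tau u}_i'\widehat{\tau g_o}(z_i))^2/(\widehat{\tau u}_i'\widehat{\tau u}_i)$ in the SLCFE denominator versus an added $\sum_i\widehat{\tau g_o}(z_i)'\widehat{\tau u}_i$ in the IV denominator. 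Your per-individual orthogonality condition $\widehat{\tau u}_i'\widehat{\tau g_o}(z_i)=0$ is also a cleaner and more operational characterization of when the two coincide than the paper's condition $\tau g_o(z)\in\operatorname{col}(\tau z)$. Two caveats. First, matching denominators and numerators separately is sufficient but not necessary for the two ratios to be equal, so in your second step ``reduces to'' should be ``is implied by''; the burden of showing the ratios genuinely differ is carried by your third step. Second, that third step, like the paper's own proof, ultimately rests on a genericity claim --- that the cross-fitted residuals are not orthogonal to the fitted values individual by individual --- which would need an explicit data configuration to be fully rigorous; you at least flag this and sketch how to supply one, which the paper does not.
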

\begin{proof}

Assuming that $B = 1$, we can write the estimators as:

\begin{equation}
    \hat{\beta}_{IV, 1_o} = \Big(\sum_{i} \widehat{\tau g_o}( z_i)' M_{\tau_i} x_{1i}\Big)^{-1}
     \Big(\sum_i\widehat{\tau g_o}( z_i)' M_{\tau_i} y_i\Big),
\end{equation}

\begin{equation}
    \hat{\beta}_{SLCFE, 1_o} = \Big(\sum_i (M_{\tau_i}x_{1i})' \hat{A}_{\tau u, i} M_{\tau_i} x_{1i} \Big)^{-1}
     \Big(\sum_i (M_{\tau_i} x_{1i}) ' \hat{A}_{\tau u, i}  M_{\tau_i} y_i\Big),
\end{equation}
\noindent where $\hat{A}_{\tau u, i} = I_{T_{ai}} - \widehat{\tau u_i} (  \widehat{\tau u_i}'\widehat{\tau u_i})^{-1} \widehat{\tau u_i}'$.

\noindent Because $ \hat{A}_{\tau u, i} M_{\tau_i} x_{1i} \neq \widehat{\tau g_o}( z_i)$ , the two estimators are not equivalent.  
They are numerically equivalent if and only if $\tau g_o( z) \in \text{col}(\tau z)$. 
\end{proof}

\begin{theorem}
       If $\{O_i = (y_i, X_i, z_i) \}_{i\in [N]}$ satisfy assumptions
    \ref{A_data} to \ref{A_eps_u_omega}, assumptions of subsection 
    \ref{Annex_assumptions} in the Annex hold in particular that 
    $\left\| \tau g - \tau g_o \right\|_{P,2} \leq \delta_N$
     with $\delta_N \sqrt{N_T} \to \infty$, 
     $\{\delta_N\}$ a sequence of positive numbers that converge 
     to 0 as $N \rightarrow \infty$
     , the plug-in IV estimator without the use of cross-fitting is not $\sqrt{N_T}$-consistent. 
\end{theorem}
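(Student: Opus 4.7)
The plan is to mirror the preceding naive 2SLS proof, isolating the non-vanishing bias term that arises from own-observation dependence between the in-sample nuisance fit $\widehat{\tau g_o}$ and the transformed structural error. First I would substitute the transformed structural equation $M_{\tau_i} y_i = M_{\tau_i} x_{1i}\beta_{1_o} + M_{\tau_i}\varepsilon_i$ into the estimator to obtain
\[
\hat{\beta}_{IV,1_o}^{\text{nocf}} - \beta_{1_o} = \Big(\tfrac{1}{N_T}\sum_i \widehat{\tau g_o}(z_i)' M_{\tau_i} x_{1i}\Big)^{-1} \cdot \tfrac{1}{N_T}\sum_i \widehat{\tau g_o}(z_i)' M_{\tau_i}\varepsilon_i,
\]
then use Assumption \ref{A_eps_u_omega} to split $M_{\tau_i}\varepsilon_i = \rho_o M_{\tau_i} u_i + M_{\tau_i}\omega_i$. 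The $\omega$ component is easy: by the conditional mean independence in Assumption \ref{A_eps_u_omega} together with the i.i.d.\ across-$i$ structure from Assumption \ref{A_data}, the term $N_T^{-1}\sum_i \widehat{\tau g_o}(z_i)'M_{\tau_i}\omega_i$ is $O_p(1/\sqrt{N_T})$.

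The main work is the $\rho_o M_{\tau_i} u_i$ component. I would write $\widehat{\tau g_o}(z_i) = \tau g_o(z_i) - \xi_i^{\text{nocf}}$ with $\xi_i^{\text{nocf}} := \tau g_o(z_i) - \widehat{\tau g_o}(z_i)$. The piece involving the true nuisance, $N_T^{-1}\sum_i \tau g_o(z_i)'M_{\tau_i} u_i$, is mean zero and $O_p(1/\sqrt{N_T})$ by the reduced-form exogeneity $\mathbb{E}[u_{it}\mid \tilde x_i, z_i, \alpha_{i,1x}]=0$. The remaining piece, $-\rho_o N_T^{-1}\sum_i \xi_i^{\text{nocf}\prime} M_{\tau_i} u_i$, is the source of inconsistency: because $\widehat{\tau g_o}$ is trained on the same sample that supplies $M_{\tau_i} u_i$ via $M_{\tau_i} x_{1i} = \tau g_o(z_i) + M_{\tau_i} u_i$, the in-sample residual $\xi_i^{\text{nocf}}$ absorbs part of $M_{\tau_i} u_i$. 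Hence $\mathbb{E}[\xi_i^{\text{nocf}\prime} M_{\tau_i} u_i]$ is generically non-zero and of order $\delta_N$, and under the rate hypothesis $\delta_N\sqrt{N_T}\to\infty$ this bias dominates the $1/\sqrt{N_T}$ sampling noise and blocks root-$N_T$ convergence, matching the notion of inconsistency used for the naive 2SLS estimator in the preceding theorem. To highlight the role of cross-fitting, I would note that when $\widehat{\tau g_o}^{S_b^c}(z_i)$ is constructed on a fold independent of observation $i$, Assumption \ref{A_data} combined with the conditional mean independence in Assumption \ref{A_reduced_form} gives $\mathbb{E}[\widehat{\tau g_o}^{S_b^c}(z_i)'M_{\tau_i} u_i] = 0$, so this bias term disappears.

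The main obstacle is formalizing the claim that $\mathbb{E}[\xi_i^{\text{nocf}\prime} M_{\tau_i} u_i]$ is $\Theta(\delta_N)$ and not $o(\delta_N)$ for a generic super learner; the $L_2$ bound $\|\widehat{\tau g_o} - \tau g_o\|_{P,2}\le \delta_N$ alone only delivers $O(\delta_N)$ through Cauchy--Schwarz and in principle allows cancellation. A transparent route is to specialize to linear smoothers, where the own-observation contribution is exactly $\rho_o H_{ii}\,\mathbb{E}[(M_{\tau_i}u_i)'M_{\tau_i}u_i]$ with $H_{ii}$ the diagonal of the hat matrix, bounded below by a positive constant whenever the learner has nontrivial effective degrees of freedom; extending beyond linear smoothers would require an additional ``no accidental cancellation'' condition that I would state explicitly as part of the hypotheses.
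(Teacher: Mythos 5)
Your proposal follows essentially the same route as the paper's proof: substitute the transformed structural equation, center $\widehat{\tau g_o}(z_i)$ at $\tau g_o(z_i)$, observe that the $\tau g_o(z_i)'M_{\tau_i}\varepsilon_i$ and $\omega$-pieces are mean-zero $O_p(N_T^{-1/2})$ terms, and locate the failure in the own-observation correlation between the in-sample fitting error $\xi_i$ and $M_{\tau_i}u_i$. Your version is in fact sharper than the paper's, which stops at asserting that the remainder ``does not converge to zero at rate faster than $\sqrt{N_T}$ because $\tau g_o(z_i)$ is not learned using cross-fitting'' without isolating the $\rho_o\,\xi_i'M_{\tau_i}u_i$ term as you do. The obstacle you flag is real and applies equally to the paper's own argument: the hypothesis $\left\|\tau g-\tau g_o\right\|_{P,2}\le\delta_N$ only yields an \emph{upper} bound of order $\delta_N$ on the bias via Cauchy--Schwarz, so neither proof actually rules out cancellation; a lower-bound or ``no accidental cancellation'' condition (such as your hat-matrix argument for linear smoothers) would be needed to make the claimed failure of $\sqrt{N_T}$-consistency rigorous, and note also that a bias of order $\delta_N\to 0$ contradicts plain consistency only in the paper's loose sense of the word ``inconsistent.''
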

\begin{proof}
    The plug-in IV estimator without cross-fitting is given by:
\begin{equation}
    \hat{\beta}_{IV,1_o} = \Big(\sum_{i} \widehat{\tau g_o}( z_i)'M_{\tau_i} x_{1i}\Big)^{-1}
     \Big(\sum_{i }\widehat{\tau g_o}( z_i)' M_{\tau_i} y_i\Big).
\end{equation}
Replacing $M_{\tau_i} y_i$ by $M_{\tau_i} x_{1i} \beta_{1_o}  + M_{\tau_i} 
\varepsilon_i$, it can be written as: 

\begin{equation}
    \hat{\beta}_{IV,1_o} = \beta_{1_o} + \Big(\sum_{i } \widehat{\tau g_o}( z_i)'M_{\tau_i} x_{1i}\Big)^{-1}
     \Big(\sum_{i}\widehat{\tau g_o}( z_i)'(  M_{\tau_i} \varepsilon_i)\Big),
\end{equation}

\noindent Now, adding and subtracting $\tau g_o(z_i)$ to $\widehat{ \tau g_o} (z_i)$ we obtain that:

\begin{equation}
    \hat{\beta}_{IV,1_o} = \beta_{1_o} + \Big(\sum_{i } \widehat{\tau g_o}( z_i)'M_{\tau_i} x_{1i}\Big)^{-1}
     \Big(\sum_{i }(\xi_i + \tau g_o( z_i))'(  M_{\tau_i} \varepsilon_i)\Big),
\end{equation}
The second term does not converge to zero at a rate faster than $\sqrt{N_T}$ because $\tau g_o(z_i)$ 
is not learned using cross-fitting.  
Thus, the plug-in IV estimator  
without using sample-splitting presents a first order bias term that does not vanish.  Thus, the plug-in IV estimator
 is not $\sqrt{N_T}$-consistent when estimation of the transformed nuisance parameter is done without cross-fitting.
\end{proof}

\section{Monte Carlo simulation }\label{S_Simulation}
We test the performance of the proposed estimation methods using a Monte Carlo simulation experiment.  As competing methods, we consider Within OLS, and Within 2SLS estimators including polynomial transformations of the covariates and instruments of degree 1 and 5.
\subsection{Settings}

\subsubsection{Linear structural equation and non-linear secondary equation }

We generate 100 samples from a triangular simultaneous model with a linear structural equation and a non-linear reduced-form equation with $N=1000$ and $T=2$.    Both equations present additive disturbance terms, and individual-specific effects. The structural equation presents one endogenous regressor $x_{1it}$ and one exogenous covariate $x_{2it}$.   

\begin{equation}\label{DGP1_structural_eq}
\begin{split}
y_{it}=\beta_1 x_{1it}+\beta_2 x_{2it}+\alpha_i+\epsilon_{it},
\end{split}
\end{equation}

\noindent  where: 

\begin{equation}\label{DGP1_exog}
x_{2it}=\alpha_i+\zeta_{it},  \quad with  \quad  \zeta_{it}\sim U(-2,2), 
\end{equation}
\begin{equation}\label{DGP1_ivs}
z_{it}=\alpha_i+\nu_{it}, \quad with  \quad  \nu_{it} \sim U(-2,2), 
\end{equation}
\begin{equation}\label{DGP1_reduced_form}
x_{1it}=g(x_{2it},z_{it})+\alpha_i+ u_{it},
\end{equation}
\begin{equation}
u_{it}\sim U(-1,1),
\end{equation}
\begin{equation}
\epsilon_{it}=0.9*u_{it}+\tilde{\zeta_{it}}, \quad with  \quad  \tilde{\zeta}_{it}\sim U(-1,1).
\end{equation}
\begin{equation}
\alpha_{i}\sim U(-1,1).
\end{equation}

\noindent The unknown function $g(x_{2it},z_{it})$ is similar to the ones proposed by \citet{GuoWP2022}, and depends on the parameter $a$ that controls the nonlinearity in the functional form and the strength of the linear correlation between $x_{1it}$ and $z_{it}$. As in \citet{GuoWP2022}, a larger $a$ causes higher nonlinearity and lower linear correlation between $x_{1it}$ and $z_{it}$.  However, we cannot compare our results to theirs because they consider a model for cross-sectional data.

\begin{table}[h] 
	\addtolength\tabcolsep{2pt}
	\caption{ Monte Carlo Experiment}
	\addtolength\tabcolsep{2pt}
\resizebox{\textwidth}{!}{
	\begin{tabular}{@{}l@{\hspace{25pt}}ccccc@{}}
		\hline\noalign{\smallskip}
Scenario&  \centering $g(\cdot)$ 				& 	$\beta_1$ &$dim(x_2)$&$\beta_2$ \\
			
		\noalign{\smallskip}\hline\noalign{\smallskip}

1&$ a\cdot|z_{it}|-2tanh(x_{2it})+1/a*z_{it}$  &   1      & 1        & 1                   \\\\
		


    \noalign{\smallskip}\hline\noalign{\smallskip}

	\end{tabular}	}
	\label{TableScenariosCF}
\end{table}

\begin{figure}[H]
  \centering
       \subfloat[\centering   ]{\includegraphics[width=7cm]{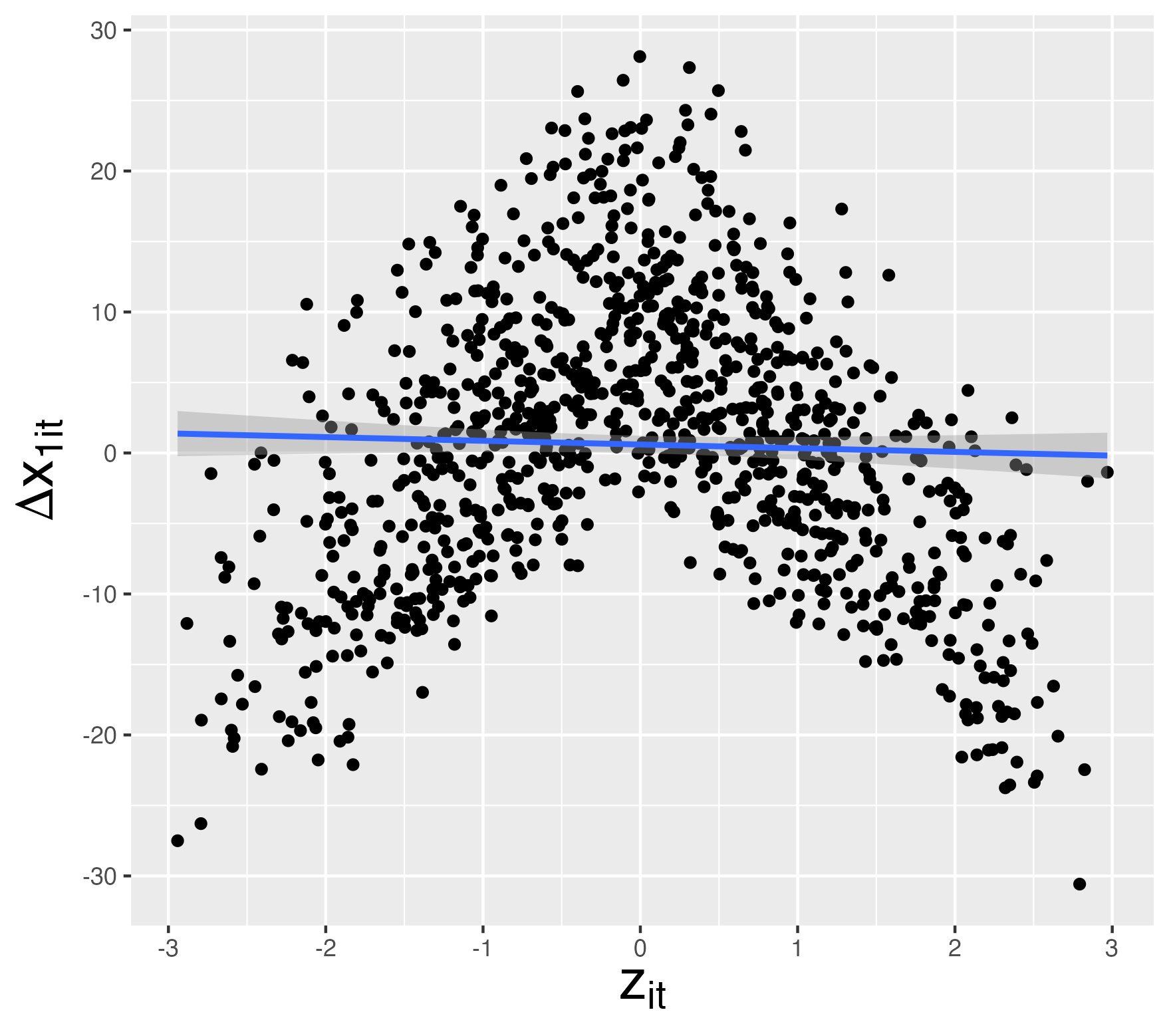}}
    \qquad
    \subfloat[\centering    ] {{  \includegraphics[width=7cm]{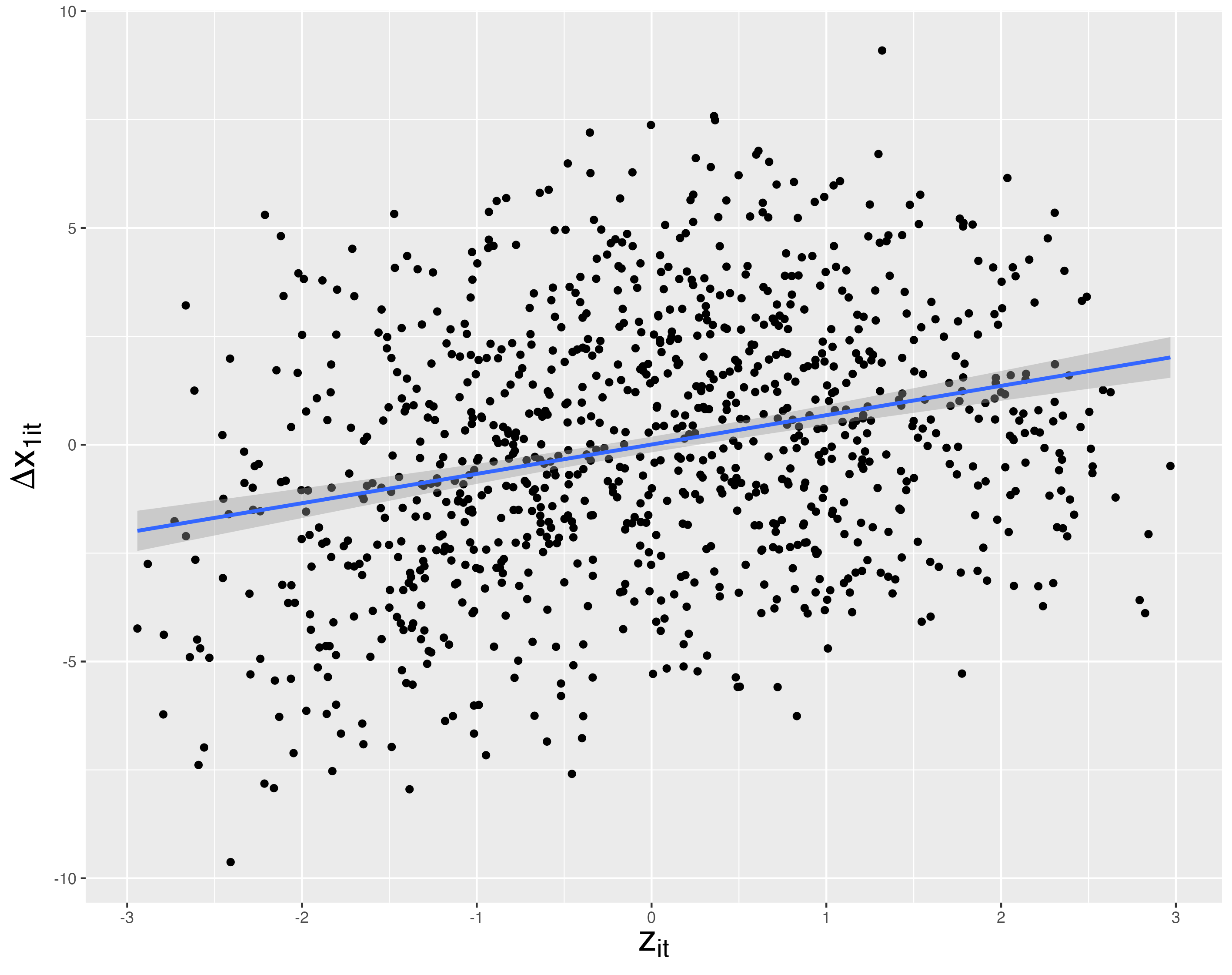}}}
    \qquad

\caption{ a)  Relationship between $\Delta x_{1it}$ and  $z_{it}$ with $a=10$, b)  Relationship between $\Delta x_{1it}$ and  $z_{it}$ with $a=1$}
{\small    }
\label{FunctionalFormDGP1}
\end{figure}

\subsection{ The results}

In this section, we present the results of the simulation experiment.  
 In figure \ref{ResultsAllaDGP1}, we present the average estimated parameter 
 of interest $\beta_{1o}$ for different values of $a$.  The methods are Within 
 OLS (WOLS), Within 2SLS (W2SLS) using polynomial of degree 1 of the covariate and 
 the instrumental variables, Within 2SLS using polynomial of degree 5 of the 
 covariate and the instrumental variables (W2SLS_polynomial), the Super Learner 
 Control Function estimator on the First-Differenced data (FDSLCF), and the Super
  Learner Control Function estimator on the Within-transformed data (WSLCF). 
    We also plot the p-value of the estimated control
   functions using the first-differenced and the within transformed data. As $a$ is 
   larger, the nonlinearity in the reduced form equation is more important.  
   This translates in a lower correlation between the endogenous regressor and 
   the instrumental variables.  As a result, the Within-2SLS estimators present
    a decay in their performance.  In contrast, the Super Learner Control Function
     estimators remain stable around the true value of the parameter of interest 
     $\beta_{1o}=1$.

In panel (a) of figure \ref{GraphResultsDGP1}, we present the boxplot of
 the estimated parameter of interest $\beta_{1o}$ for a highly nonlinear 
 reduced form equation (DGP 1 with $a=5$), and in panel (b) for a linear 
 reduced form equation (DGP 1 with $a=1$). The estimated parameters with
  the Super Learner Control Function estimators are tightly concentrated 
  around the true value of the parameter of interest.  In contrast, the
   Within estimates present wide variation around the true value.

For the estimation of the nuisance parameters, we used a Super
 Learner.  The base learners used for the Super Learner are a linear model, a neural network, and the mean. 
The architecture of the neural network is one hidden layer with two neurons, the activation function 
is the sigmoid one, and the output function is logistic, the regularization is 0, and the maximum number of 
iterations is 100.

\begin{remark}
    Implementation: in practice, a richer library of base learners requires 
    a larger number of observations. In addition, it is better to avoid overfitting 
    by using regularization.  
\end{remark}
 
\begin{figure}[H]
  \centering
  \includegraphics[width=15cm]{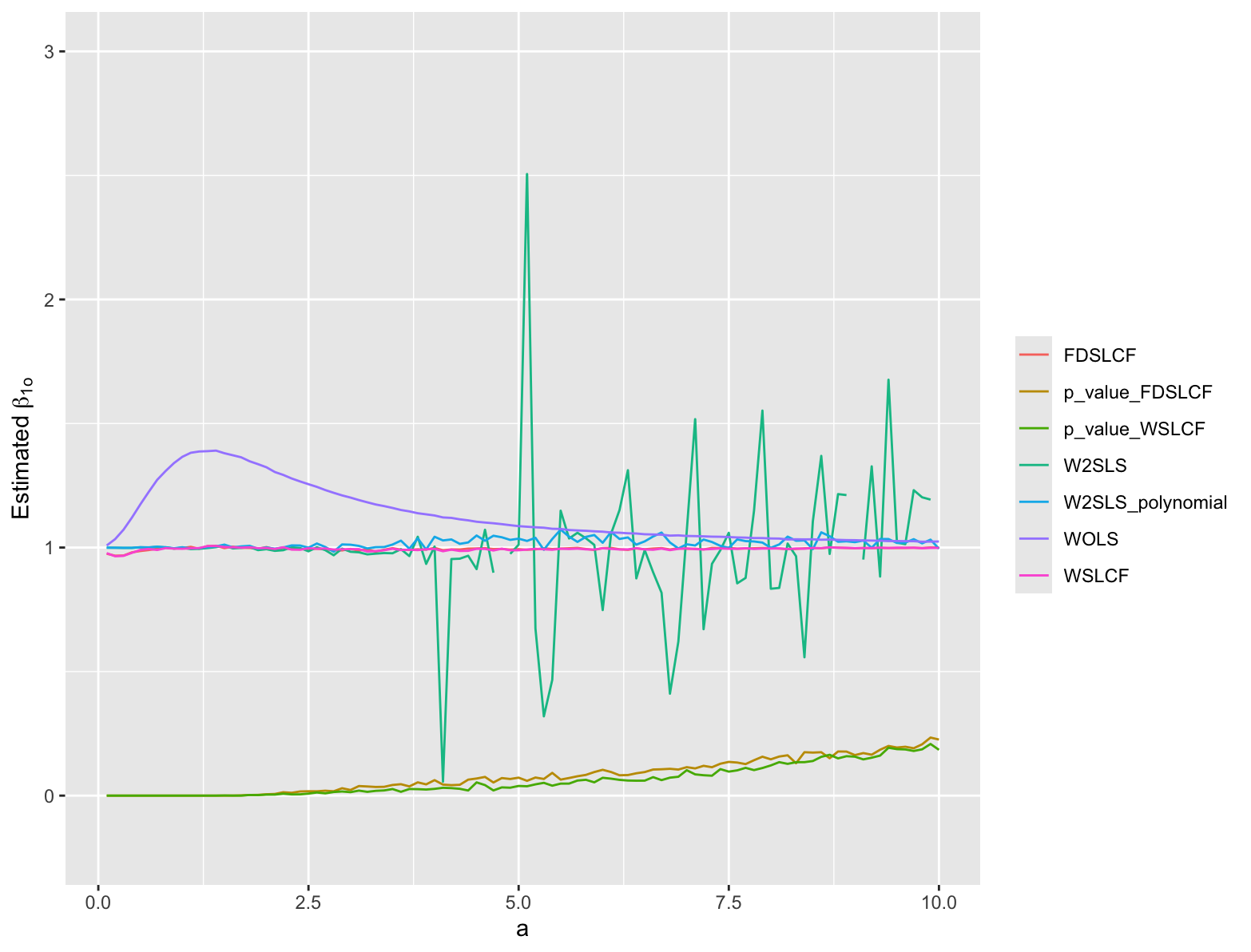}
  \caption{Average estimates of $\beta_{1o}$ for 100 samples simulated with different values of $a$, $N=1000$, $T=2$}
  \label{ResultsAllaDGP1}
\end{figure}

\begin{figure}[H]
  \centering
      \subfloat[\centering   ]{\includegraphics[width=7cm]{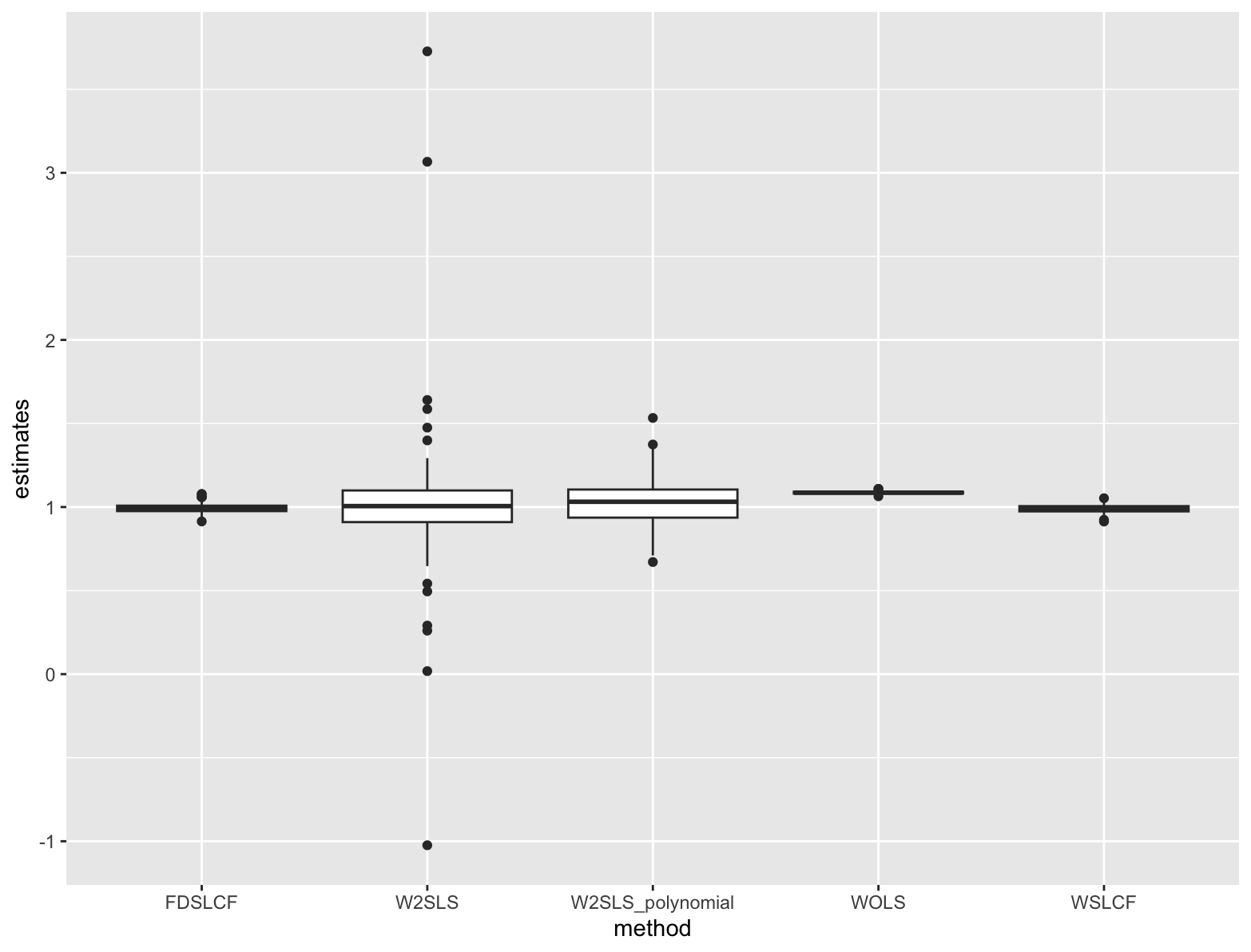}}
    \qquad
    \subfloat[\centering    ] {{  \includegraphics[width=7cm]{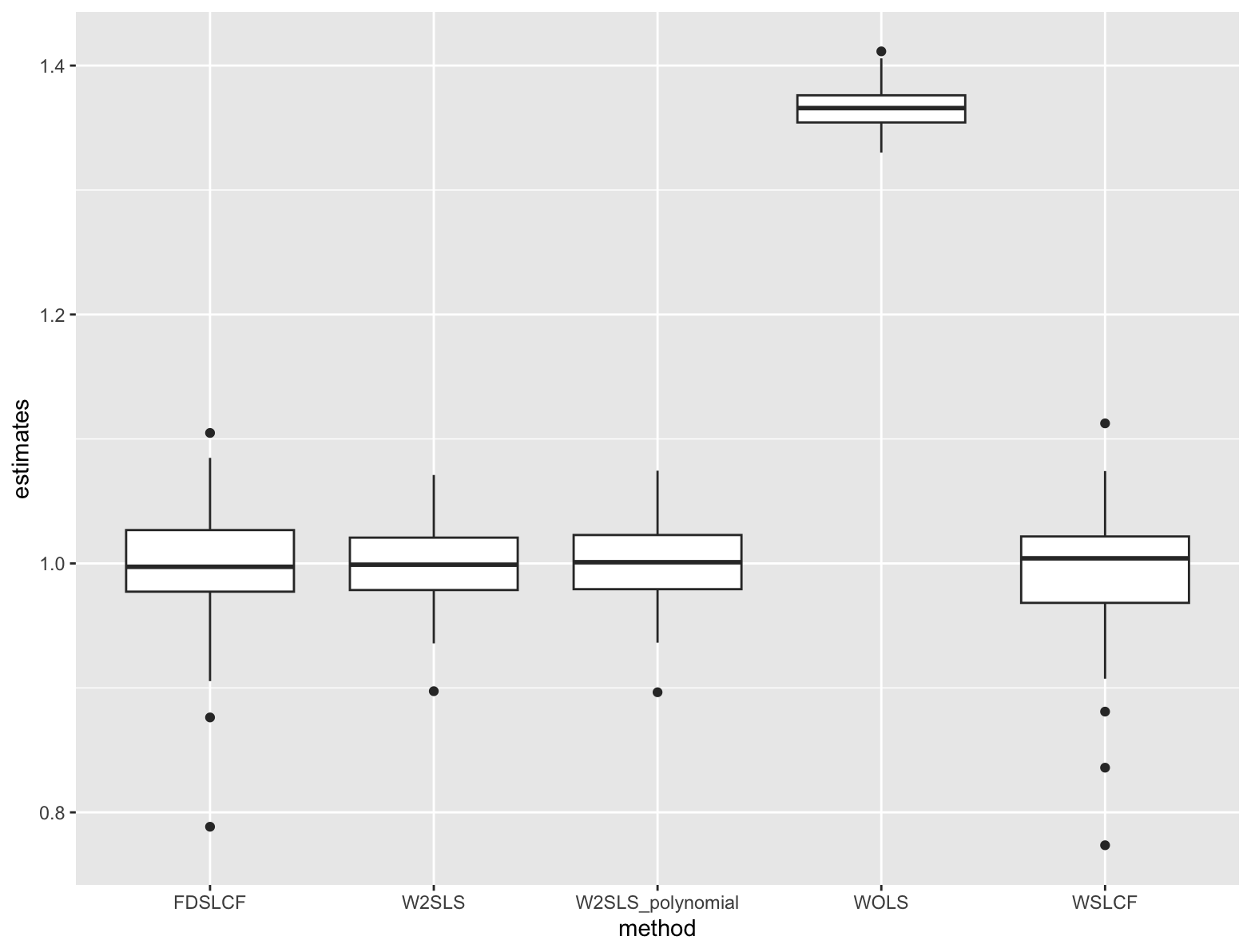}}}
\caption{a) Estimates of $\beta_{1o}$ for a DGP highly nonlinear ($a=5$),  b) Estimates of $\beta_{1o}$ for a DGP highly linear ($a=1$), $N=1000$, $T=2$}
{\small     }
\label{GraphResultsDGP1}
\end{figure}

\begin{table}[h] 
	\addtolength\tabcolsep{2pt}
	\caption{ Coverage results: Nominal coverage 95\%}
	\addtolength\tabcolsep{2pt}
\resizebox{\textwidth}{!}{
	\begin{tabular}{@{}l@{\hspace{25pt}}cccccccccccc@{}}
		\hline\noalign{\smallskip}
N & T & $a$ & SS & B & $\beta_{1o}$& WOLS & W2SLS & W2SLS polynomial & FDCF 
& WCF \\
\hline\noalign{\smallskip}
 1000 & 2  & 1  &10 & 5 & 1 & 0  & 0 &0  &75  
  &  75 \\
   1000 & 2  & 2  &10 & 5 & 1 &  0 &0  &0  &89 
  & 88  \\
  1000 & 2  & 3  &10 & 5 & 1 & 0& 0 & 0 & 97
  & 96  \\
  1000 & 2  & 4  &10 & 5 & 1 & 0   & 0  & 0  & 95 
  & 96  \\
  1000 & 2  & 5  &10 & 5 & 1 & 0  & 0 & 0 & 96
  & 96  \\
  1000 & 2  & 6  &10 & 5 & 1 & 0  & 0 & 0 & 95 
  & 96  \\
1000 & 2  & 7  &10 & 5 & 1 & 0  & 0 & 0 & 97 
  & 98  \\
1000 & 2  & 8  &10 & 5 & 1 &  0 & 0 & 0 & 99  
  &  98  \\
1000 & 2  & 9  &10 & 5 & 1 & 0  & 0 & 0 & 99 
  & 96  \\
1000 & 2  & 10  &10 & 5 & 1 & 0  & 0 & 0 & 96 
  & 96   \\

   \noalign{\smallskip}\hline\noalign{\smallskip}
	\end{tabular}	}
	\label{TableScenariosCF}
    \footnotemark{When $a$ equal to 1 and 2, the correction term of the variance-covariance matrix
was obtained using the media of the sample splits. When $a=1$, the 
variables included in the information set include the
transformed covariates and instruments.  The reason 
is that the linear part of the function is important.
}
\end{table}

\section{Empirical Application: Air Pollution and Educational Outcomes in the U.S. (Revisited)} \label{EA}

As an illustration, we estimate the causal effect of air pollution on student
 performance in the United States.  The data on air pollution are obtained from 
 \cite{DeryuginaAER2019}. This dataset presents daily data on air pollutants:
  PM 2.5, ozone ($O_3$), carbon monoxide (CO), sulfur dioxide ($SO_2$), 
  nitrogen dioxide ($NO_2$).  It also contains data on wind direction, 
  wind speed, and weather conditions at county level.  
  We aggregate the data to yearly level by averaging across days. 
   The data on student performance are obtained from Stanford Educational 
   Data Archive (SEDA, \citealt{FahleSEDA2024}) for county-yearly level. 
   We merge both datasets by county-year indicator. The final dataset is 
   a panel data set with 787 counties for the period 2009 to 2013. We transform the 
   data using the first-difference and within transformations.

We specify the following linear regression for county $d$, subject $s$, 
cohort $c$ at year $t$:

\begin{equation}
    y_{scdt}=\beta PM2.5_{dt}+X_{dt}'\gamma+\psi_s+\phi_c+\alpha_{d} +\tilde{\alpha}_g+\epsilon_{dst}, 
\end{equation}

\noindent where $y_{dst}$ represents the average grade of subject $s$, 
cohort $c$, county $d$, subject $s$ at year $t$. $PM2.5_{dt}$ represents the
 yearly average of $PM2.5$ concentration at county $d$ in year $t$, $X_{dt}$ 
 is a vector of control variables at county-year level including average school 
 characteristics, social characteristics, and weather variables,
 $\phi_c$ corresponds to cohort specific effects, $\tilde{\alpha}_g$ grade specific effects, and
 $\psi_s$ corresponds to subject specific effects. $\alpha_{d}$
  represents county fixed effects. 

As mentioned before, $PM2.5_{dt}$ is endogenous.  Thus, following
 \cite{DeryuginaAER2019} we use wind direction as an instrument of 
 air pollution along with a nonlinear first-stage equation as follows:     

\begin{equation}    PM25_{dt}=h(WD_{dt},X_{dt},\psi_s, \phi_c, \tilde{\alpha}_g)
    +\eta_{d}+\varepsilon_{dt},   
\end{equation}

\noindent with $WD_{dt}$ is the yearly average wind direction in county
 $d$ in year $t$, $\eta_{d}$ are county specific effects. 

Using a First-difference transformation, we estimate that an increase in the 
concentration of PM 2.5 causes a reduction in test scores of 0.979 
standard deviations that is statistically significant at a 5\% significance level. 
The base learners used for the Super Learner are the mean, a linear model, a neural network, and a random forest. 
The architecture of the neural network is one hidden layer with two neurons, the activation function 
is the sigmoid one, and the output function is linear, the regularization is 0, and the maximum number of 
iterations is 100. The random forest is composed of 100 trees with minimum leaf size of 5, and two 
variables to select at each split.  The number of sample splits is equal to 10, 
and the number of folds is 5. We use $V_i = I_i$ which is the most efficient in the case of the 
Within transformation.  Inference is still valid as we used a sandwich type estimator
of the variance-covariance matrix of the coefficients of the structural model.

    \begin{table}[H]
        \centering
        \caption{Estimated effect of air pollution (PM 2.5 concentration) on student performance}
        \begin{tabular}{cccccc}
\hline
    WOLS & W2SLS  & FD SLCF & W SLCF   \\
\hline
  -0.072 & -0.844 &   -0.979  & -0.859 \\
     (0.011) & (0.762) &  (0.101) &  (0.085) \\
\hline
\end{tabular}
        
        \caption*{Note: US counties, (N = 787).  }\label{tab:my_label}
    \end{table}

\section{Conclusions}\label{S_Conclusions}

This paper proposes a triangular simultaneous equation model for panel data 
with additive separable individual-specific fixed effects composed of a 
linear structural equation with a nonlinear reduced form equation to deal with
 the issue of weak instrumental variables due to nonlinearities in panel data 
 settings. The parameter of interest is the structural parameter of the endogenous
  variable.  The identification of this parameter is obtained under the assumption
   of available exclusion restrictions and using a control function approach. 
    Estimating the parameter of interest is done using an estimator that we call
     Super Learner Control Function estimator (SLCFE).  The estimation procedure 
     is composed of two steps and cross-fitting.  
     We estimate the control function with a super learner using sample 
     splitting.  In the following step, we use the estimated control function
      to control for endogeneity in the structural equation. 
       Cross-fitting is done across the individual dimension.  
        The estimator is consistent and asymptotically normal achieving a
         parametric rate of convergence. 
        We perform a Monte Carlo simulation to test the performance of
        the estimators proposed.  We conclude that the estimator performs 
        well, provided that we can accurately learn  the nuisance parameter
         in the first stage.

\section{Appendix}\label{Appendix}

\subsection{Notation}
A set $\{1, 2, ..., N \}$ is denoted as $[N]$. $|\cdot|$ refers to the cardinality of a set.  $\left \|\cdot \right \|_{L^p}$ is the $L^p$ norm defined on the space $L^p(\Omega, P)$. $\left \|\cdot \right \|_p$ is the p-norm.
\subsection{Definitions}
We follow \cite{EmmeneggerSJS2023}  and provide the following definitions.  
 Each individual $i$ presents $T_i$ observations. We assume that $T_i$ is uniformly bounded such that $T_i < T_{max}$. The total number of observations is denoted by $N_T = \sum_i T_i$.  The number of sample splits $B$ is not random and independent of $N$. The data is defined as $\{ O_i = (X_i, z_i, y_i) \}_{i \in [N]}$ with $X_i=[x_{1i}, \tilde{X}_{i}]$, and denote the probability distribution $P$ of the grouped data $O_i$.  $\{ \mathcal{P}_N\}_{N\geq 1}$ is a sequence of probability distributions. 
A partition of the set $[N]$ is $S_1$, $S_2$, ..., $S_B$.  The total number of observations belonging to a partition $S_b \in [N]$ is equal to $n_{T,b} = \sum_{i \in S_b} T_i$.  We assume that the subsets $S_b$ have similar size such that $B n_{T,b} = N_T + o(1), \quad \forall b \in [B]$.  We denote $S_b^c = \{ O_i = (y_i, X_i, z_i)\}_{i \in S_b^c}$, and the estimator of the nuisance parameter $\widehat{\tau g}_o^{S_b^c}$ with data from $S_b^c$.
The empirical mean $\mathbb{E}_{n_{T,b}} = \frac{1}{n_{T_b}} \sum_{i\in S_b} \phi(O_i; \theta, \tau g)$. 
We define $\{\delta_N\}_{N \geq B}$ as a sequence of positive numbers that converge to 0 as $N \rightarrow \infty$, and $\{\delta_N\} \geq N^{-1/4}$.  
\subsubsection{Population Loss Function}

We define the following population loss function:

$$Q(\theta, \tau g_o) = \mathbb{E} [(M_{\tau_i} (y_i - H_i \theta))' V_i ^{-1} M_{\tau_i} (y_i - H_i \theta)],$$ 

\noindent where $ V_i$ is equal to the identity matrix $I_i$ under the within transformation, 
and $\Sigma = \mathbb{E} [(M_{\tau_i} \omega_i)(M_{\tau_i} \omega_i) '| H_i]$ under first-difference
transformation, also notice that $V_i$ is a scalar if $T_i=2, \forall i \in [N]$, 
$M_{\tau_i} = D_i$, and the error term $\omega_{it}$ is homoskedastic and
 uncorrelated across $i$ and $t$. Finally, $\theta = [\beta_1, \quad \beta_2', \quad \rho]'$

\subsubsection{Empirical Loss Function}

We define the empirical loss function:

$$L_N(\theta, \tau g_o) = \sum_i^N \frac{1}{N}[(M_{\tau_i}y_i - M_{\tau_i}H_{i}\theta )' V_i ^{-1} (M_{\tau_i}y_i -M_{\tau_i} H_i \theta)].$$

\subsubsection{Population moment condition}

We define the population moment condition:

$$m(O_i; \theta, \tau g_o) = \mathbb{E} [(M_{\tau_i}y_i - M_{\tau_i}H_{i}\theta )'V_i^{-1} M_{\tau_i}H_i ]$$ 
with $m(O_i; \theta_o, \tau g_o)=0$.


\subsubsection{Score function }
The score function is $\phi(O_i; \theta,\tau g_o)=(M_{\tau_i}y_i - M_{\tau_i}H_{i}\theta )' V_i^{-1}M_{\tau_i}H_i $.

\subsubsection{Sample moment condition}

We define the sample moment condition:

$$\hat{m}(O_i; \theta, \tau g_o) = \frac{1}{N}\sum_i (M_{\tau_i}y_i - M_{\tau_i}H_{i}\theta )'V_i^{-1} M_{\tau_i}H_i .$$ 
\subsection{Assumptions}\label{Annex_assumptions}
\begin{assumption}\label{A_ThetaCompact} $\theta \in \Theta$, the set $\Theta$ is bounded, contains $\theta_o$. 
\end{assumption}
This assumption is  similar to assumption 4.1 of \cite{EmmeneggerSJS2023}.

\begin{assumption} \label{A_nuisance}
The set $\mathcal{G}$ consists P-integrable functions
 $\tau g$ with pth moment that exists and it
  contains $\tau g_o$,  
  $$\left\| \tau g - \tau g_o \right\|_{P,2} \leq \delta_N$$,
    $$\left\| \tau g - \tau g_o \right\|^2_{P,2} \leq N_T^{-1/2}\delta_N$$.      
\end{assumption}

\begin{assumption}\label{A_several} For all N, all $i \in [N]$, all $P \in \mathcal{P}_N$, all $b \in [B]$, and $p >8$ we have: 
\begin{enumerate}[label=\theassumption.\arabic*]
    \item At the true \( \theta_0 \) and the true \( g_0 \), the data \( \{ O_i = ( X_i, Z_i, Y_i) \}_{i \in [N]} \) satisfies the identifiability condition.
    $$\mathbb{E}_P \Big[\mathbb{E}_{n_{T,b}} [\phi(O_i;\theta_o, \tau g_o)]\Big]=0.$$ \label{A_identification}

    \item $\mathbb{E}_P \Big[\mathbb{E}_{n_{T,b}}[\phi(O_i;\theta_o, \tau g_o)
    \phi(O_i;\theta_o, \tau g_o)'] \Big]= \Omega_i$ with $\Omega_i$ a positive
    semi-definite matrix, with eigenvalues bounded away from 0 and 
     finite, with the minimum eigenvalue $\tilde{\lambda}_T>0$, and the maximum 
     eigenvalue $\tilde{\lambda}_1<\infty$. \label{A_varcov_moments}

     \item $V_i$ is a positive semi-definite matrix, with eigenvalues bounded away
      from 0 and finite, with the minimum eigenvalue $\lambda_T>0$, and the maximum 
     eigenvalue $\lambda_1<\infty$. \label{A_varcov_moments_Vi}
    
    \item There exists finite real constants $C_1$,$C_2$, $C_3$, $C_4$ satisfying $|| y_i||_{P,p} \leq C_1$, $|| z_i||_{P,p} \leq C_2$, $|| x_{1,i}||_{P,p} \leq C_3$, $|| \tilde{x}_{i}||_{P,p} \leq C_4$ for all $k >2$.\label{A_bound}

\end{enumerate}
  
\end{assumption}

 \begin{assumption} \label{A_centered}
    $\mathbf E[\tau x_{k,i}] = 0$ for all $i \in [N]$, and all $k \in [K]$. 
    
 \end{assumption}

Assumption \ref{A_identification} states that the true parameter vector is identifiable. 
Assumption \ref{A_bound} states that the data has finite moments. 
Assumption \ref{A_centered} states that the unconditional expectation of the
 transformed regressors is equal to 0.  This is plausible 
if the covariates are stationary. 

This assumption is equivalent to Assumption 5.1 of \cite{EmmeneggerSJS2023}, and \cite{Newey1994Chapter}.

\subsection{Proof of Proposition Identification}

For the proof, we follow \cite{chen2014local}.

The first-derivative of $\mathbb{E}[\phi(O_i;\theta,\tau g_o)]$ with respect 
$\theta_o$ is $\mathbb{E}[\phi'(O_i;\theta_o,\tau g_o)] =-\mathbb{E}[(M_{\tau_i} H_{i})'
    V_i^{-1}M_{\tau_i} H_{i}]$. Since the first-derivative of 
    the moment $\mathbb{E}[\phi(O_i;\theta,\tau g_o)]$ 
has full rank, then $\mathbb{E}[\phi'(O_i;\theta_o,\tau g_o)'\phi'(O_i;\theta_o,\tau g_o)]$  
has smallest eigenvalue $\lambda_{min}$ is positive. Also, 
$|\mathbb{E}[\phi'(O_i;\theta_o,\tau g_o)] '\ell| \geq \lambda_{min}\ell$. Then, $\exists \zeta$ such that 

\begin{equation}
\frac{|\mathbb{E}[\phi(O_i;\theta,\tau g_o)] - \mathbb{E}[\phi(O_i;\theta_o,\tau g_o)] -
\mathbb{E}[\phi'(O_i;\theta_o,\tau g_o)(\theta - \theta_o)]|}
{|\mathbb{E}[\phi'(O_i;\theta_o,\tau g_o)](\theta - \theta_o)|} < 1
\end{equation}

Thus, $\mathbb{E}[\phi(O_i;\theta,\tau g_o)] \neq 0$ for any $\theta \neq \theta_o$.
\subsection{Proof of Proposition Concentration parameter inequality }

\textbf{Step 1} By the Projection Theorem, we can re-write the function 
$\tau g_o(\tilde{x}_{it}, z_{it})$ as $[\tau \tilde{x}_{it} \quad \tau z_{it}]' a 
+ \tau \tilde{g}_o(\tilde{x}_{it}, z_{it})$.   \\

\noindent \textbf{Step 2} By the Pythagorean Theorem, we have that:

\begin{equation}
|| [ \tau \tilde{x}_{it} \quad \tau z_{it} ]' a||_{P,2} \leq 
 || \tau g_o(\tilde{x}_{it}, z_{it}) ||_{P,2}
\end{equation}

\noindent \textbf{Step 3} Define the error term $\tau\tilde{u}_{it} = \tau x_{1it} 
- [\tau\tilde{x}_{it} \quad \tau z_{it} ]' a$.

 Notice that 
 \begin{equation}
    \begin{split}
    \tau\tilde{u}_{it} 
    & =  \tau x_{1it}  - [\tau\tilde{x}_{it} \quad \tau z_{it} ]' a \\
    & =  \tau g_o(x_{it}, z_{it}) + \tau u_{it} - [\tau\tilde{x}_{it} \quad \tau z_{it} ]' a \\
    & = \tau \tilde{g}_o(x_{it}, z_{it}) + \tau u_{it} 
    \end{split}
\end{equation}

Thus, the variance of the error term $\tau\tilde{u}_{it}$ is larger or equal than the variance of 
 $\tau u_{it} = \tau x_{1it}  - \tau g_o(x_{it}, z_{it})$ since 
 $\mathbb{E} [\tau \tilde{g}_o(x_{it}, z_{it}) \tau u_{it} ]=0$ by L.I.E. and Assumption
 \ref{A_reduced_form}.  Stacking up the time observations, we have that: 
 
 \begin{equation}
 \begin{split}
    \mathbb{E}[(\tau \tilde{g}_o(\tilde{x}_i, z_i) + \tau u_i)
    (\tau \tilde{g}_o(\tilde{x}_i, z_i) + \tau u_i)'] = & \\
    \mathbb{E}[\tau \tilde{g}_o(\tilde{x}_i, z_i) \tau \tilde{g}_o(\tilde{x}_i, z_i)' + \tau u_i)
   + 2 \tau \tilde{g}_o(\tilde{x}_i, z_i) \tau u_i' + \tau u_i \tau u_i' ] = & \\
   \mathbb{E}[\tau \tilde{g}_o(\tilde{x}_i, z_i) \tau \tilde{g}_o(\tilde{x}_i, z_i)']+
   \mathbb{E}[\tau u_i \tau u_i'],  
 \end{split}
\end{equation}

since the $\mathbb{E}[\tau \tilde{g}_o(\tilde{x}_i, z_i) \tau u_i']=0$ by Assumption 
\ref{A_reduced_form}. 

Then, $\forall \mathbf{l} \in \mathbb{R}^{T_{ai}}$, 

\begin{equation}
\mathbf{l}'\Big[\mathbb{E}[\tau \tilde{g}_o(\tilde{x}_i, z_i) \tau \tilde{g}_o(\tilde{x}_i, z_i)'+
\tau u_i \tau u_i'] - \mathbb{E}[\tau u_i \tau u_i']  \Big] \mathbf{l} \geq 0 
\end{equation}

\textbf{Step 4} By step 3, we have that 

    \begin{equation}
\mathbb{E} \Big[ (M_{\tau_i}Z_i a_1 + M_{\tau_i}\tilde{X}_i a_2)'
\Phi_{\tau \tilde{u},i}^{-1}
 (M_{\tau_i}Z_i a_1 + M_{\tau_i}\tilde{X}_i a_2) \Big] 
 \leq \mathbb{E} \Big[ (M_{\tau_i}Z_i a_1 + M_{\tau_i}\tilde{X}_i a_2)'
 \Phi_{\tau u,i}^{-1}
 (M_{\tau_i}Z_i a_1 + M_{\tau_i}\tilde{X}_i a_2) \Big].  
\end{equation}

And by Step 2 we have that 

    \begin{equation}
\mathbb{E} \Big[ (M_{\tau_i}Z_i a_1 + M_{\tau_i}\tilde{X}_i a_2)'
\Phi_{\tau \tilde{u},i}^{-1}
 (M_{\tau_i}Z_i a_1 + M_{\tau_i}\tilde{X}_i a_2) \Big] 
 \leq \pi_o.  
\end{equation}

The equality holds when  
 $g_o(\tilde{x}_{it}, z_{it}) \in span(\tilde{x}_{it}, z_{it})$.

\subsection{Proof of the Orthogonal Score}\label{Proof_OrthogonalScore}

Without loss of generality, we assume that $V_i = I_i$. Then, we define the Gateaux derivative of the score function $\mathbb{E}[\psi(O_i; \beta_{1o}, \tau g_o)]$ as follows: 

\begin{equation}
D_h[\tau g - \tau g_o] := \lim_{h \to 0} \frac{\mathbb{E}_P[\psi(O_i; \beta_{1o}, \tau g_o + h(\tau g - \tau g_o))] - \mathbb{E}_P[\psi(O_i; \beta_{1o}, \tau g_o)]}{h}.
\end{equation}

Then, the Gateaux derivative is given by: 

\begin{equation}
    D_h[\tau g - \tau g_o] = \lim_{h \to 0} \frac{\mathbb{E}_P[x_{1i}'M_{\tau_i}'
    A_{\tau u, i,h} M_{\tau_i} \omega_i ] - \mathbb{E}_P[x_{1i}M_{\tau_i}'A_{\tau u, i} M_{\tau_i} \omega_i]}{h}.
\end{equation}

with: 

\begin{equation}
    A_{\tau u, i,h} = I_{T_{ai}} - \frac{  M_{\tau_i}(u_i - h \tilde{g}(\tilde{x}_i, z_i))(u_i - h \tilde{g}(\tilde{x}_i, z_i))'M_{\tau_i}'}{(u_i - h \tilde{g}(\tilde{x}_i, z_i))'M_{\tau_i}'M_{\tau_i}(u_i - h \tilde{g}(\tilde{x}_i, z_i))}, 
\end{equation}

where $\tilde{g} = g - g_o$.

\begin{equation}
    A_{\tau u,i} = I_{T_{ai}} - \frac{  M_{\tau_i}u_iu_i 'M_{\tau_i}'}{u_i'M_{\tau_i}'M_{\tau_i}u_i }. 
\end{equation}

Now, we can re-write $A_{i,h}$ as follows: 

\begin{equation}
    A_{\tau u, i,h} = I_{T_{ai}} - \frac{M_{\tau_i}(u_iu_i' - h \tilde{g}(\tilde{x}_i, z_i)u_i' 
    - u_i h \tilde{g}(\tilde{x}_i, z_i)' + h^2  \tilde{g}(\tilde{x}_i, z_i) \tilde{g}(\tilde{x}_i, z_i)')M_{\tau_i}'}
    {u_i'M_{\tau_i}'M_{\tau_i}u_i - u_i'M_{\tau_i}'M_{\tau_i}h \tilde{g}(\tilde{x}_i, z_i) - h \tilde{g}(\tilde{x}_i, z_i)'M_{\tau_i}'M_{\tau_i}u_i + h^2  \tilde{g}(\tilde{x}_i, z_i)'M_{\tau_i}'M_{\tau_i} \tilde{g}(\tilde{x}_i, z_i)}, 
\end{equation}

such that recalling terms, we obtain: 

\begin{equation}
    A_{\tau u, i,h} = I_{T_{ai}} - \frac{M_{\tau_i}(u_iu_i' - B(h))M_{\tau_i}'}
    {u_i'M_{\tau_i}'M_{\tau_i}u_i - a(h)}, 
\end{equation}

subtracting $A_{\tau u,i}$ from $A_{\tau u,i,h}$, we obtain: 
\begin{equation}
    A_{\tau u,i,h} - A_{\tau u,i} = \frac{ M_{\tau_i}(B(h))M_{\tau_i}'u_i'M_{\tau_i}'M_{\tau_i}u_i-a(h)M_{\tau_i}u_iu_i 'M_{\tau_i}'}
    {[u_i'M_{\tau_i}'M_{\tau_i}u_i - a(h)]u_i'M_{\tau_i}'M_{\tau_i}u_i}
\end{equation}

Then, we obtain that the Gateaux derivative evaluated at $h=0$ is equal to 0 by L.I.E., assumptions \ref{A_reduced_form} and \ref{A_eps_u_omega}:
\begin{equation}
       D_h[\tau g - \tau g_o] = \mathbb{E}_P[x_{1i}'M_{\tau_i}' 
       \frac{M_{\tau_i} (- \tilde{g}(\tilde{x}_i, z_i)u_i' 
    - u_i   \tilde{g}(\tilde{x}_i, z_i)')M_{\tau_i}'u_i'M_{\tau_i}'M_{\tau_i}u_i+
    2  \tilde{g}(\tilde{x}_i, z_i)'M_{\tau_i}'M_{\tau_i}u_iM_{\tau_i}u_iu_i 'M_{\tau_i}'} 
       {(u_i'M_{\tau_i}'M_{\tau_i}u_i)^2} M_{\tau_i} \omega_i] = 0.
\end{equation}








\subsection{Consistency}\label{Consistency_proof}
The proof of consistency of $\hat{\theta}_o$ follows \cite{EmmeneggerSJS2023}, and \cite{Newey1994Chapter}.

\begin{proof} {Proof of Theorem \ref{T_Consistency}} \\
    Condition i) is satisfied by Assumptions \ref{A_identification}, and \ref{A_varcov_moments}, condition ii) is satisfied by Assumption \ref{A_ThetaCompact}, condition iii) and iv) are satisfied by lemma 12, lemma 15 of \cite{EmmeneggerSJS2023}, and lemma 1.  This implies that the estimator $\hat{\theta}$ is consistent \citep{EmmeneggerSJS2023}. 
\end{proof}

\begin{lemma}\label{lemma_1}
We prove that lemma 13 of \cite{EmmeneggerSJS2023} holds in our set up.
\end{lemma}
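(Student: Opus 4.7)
The plan is to verify that Emmenegger and Spohler's Lemma 13, which in their setup establishes that the empirical score process converges uniformly in probability to its population counterpart, continues to hold after replacing the mixed-effects projection by the panel transformation $M_{\tau_i}$ (either $D_i$ or $W_i$). Concretely, I would establish that
\begin{equation}
\sup_{\theta \in \Theta} \left\| \mathbb{E}_{n_{T,b}} [\phi(O_i; \theta, \widehat{\tau g}_o^{S_b^c})] - \mathbb{E}_P[\phi(O_i; \theta, \tau g_o)] \right\| \xrightarrow{p} 0,
\end{equation}
which is exactly the stochastic equicontinuity/ULLN input needed by conditions (iii)--(iv) of Theorem \ref{T_Consistency}.

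The first step is the standard three-term decomposition: add and subtract both the population moment at the estimated nuisance and the empirical moment at the true nuisance, obtaining (a) an empirical-process term $\mathbb{E}_{n_{T,b}}[\phi(\cdot,\widehat{\tau g}_o^{S_b^c})] - \mathbb{E}_P[\phi(\cdot,\widehat{\tau g}_o^{S_b^c})]$, (b) a bias term $\mathbb{E}_P[\phi(\cdot,\widehat{\tau g}_o^{S_b^c})] - \mathbb{E}_P[\phi(\cdot,\tau g_o)]$, and (c) the usual ULLN term at the true nuisance. Term (c) follows from Assumption \ref{A_bound} (bounded $p$-th moments with $p>8$), compactness of $\Theta$, and linearity of $\phi$ in $\theta$, using a standard bracketing argument on the finite-dimensional class $\{\phi(\cdot;\theta,\tau g_o):\theta\in\Theta\}$. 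Term (a) is controlled by conditioning on $S_b^c$: because $\widehat{\tau g}_o^{S_b^c}$ is estimated on an independent subsample, the cross-fitted empirical process behaves as if $\widehat{\tau g}_o^{S_b^c}$ were deterministic; together with the uniform moment bounds of Assumption \ref{A_several} and Markov's inequality this term is $O_p(n_{T,b}^{-1/2})$. Term (b) is the key probabilistic content and requires the Lipschitz behaviour of $\phi$ in $\tau g$; using Cauchy--Schwarz with the bounded moments on $y_i,x_{1i},\tilde x_i,z_i$ and the bounded eigenvalues of $V_i$ (Assumption \ref{A_varcov_moments_Vi}), I would bound term (b) by $C \cdot \| \widehat{\tau g}_o^{S_b^c} - \tau g_o\|_{P,2}$, which converges to zero at rate $\delta_N$ by Assumption \ref{A_nuisance}.

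The main obstacle is verifying that the panel transformation $M_{\tau_i}$ does not spoil the Lipschitz-in-$\tau g$ bound: because the score involves $(M_{\tau_i}H_i)'V_i^{-1}M_{\tau_i}\omega_i$ and $H_i$ now contains the first-step residuals $\widehat{\tau u}_i^{S_b}=M_{\tau_i}x_{1i}-\widehat{\tau g}_o^{S_b^c}$, a perturbation of $\tau g$ enters the score both inside and outside $V_i^{-1}$. I would handle this by showing that the operator norm of $M_{\tau_i}$ is bounded uniformly in $T_i$ (true for both $D_i$ and $W_i$ under Assumption \ref{A_data} since $T_i<T_{\max}$), so that $\|M_{\tau_i}(\widehat{\tau g}_o^{S_b^c}-\tau g_o)\|_{P,2}\leq \|M_{\tau_i}\|_{\mathrm{op}}\cdot \delta_N$. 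Combined with the boundedness from Assumption \ref{A_data} and the fact that $V_i$ depends only on $M_{\tau_i}$ (not on the estimated nuisance), this reduces the panel case to the cross-sectional argument of Emmenegger and Spohler.

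Finally, I would verify the continuity condition (iii) of Theorem \ref{T_Consistency} by noting that $\phi(O_i;\theta,\tau g_o)$ is linear, hence continuous, in $\theta$, and the uniform integrability condition (iv) by applying the $p>8$ moment bound of Assumption \ref{A_bound} together with compactness of $\Theta$ to dominate $\sup_{\theta\in\Theta}\|\phi(O_i;\theta,\tau g_o)\|$ by an integrable envelope. Collecting these three ingredients gives the required uniform convergence, which is the panel-data analogue of Lemma 13 of \cite{EmmeneggerSJS2023}.
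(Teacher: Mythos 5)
Your treatment of the bias term (b) --- reducing the perturbation of the score to the single column $\xi_i=\widehat{\tau u}_i-M_{\tau_i}u_i$, then bounding it via Cauchy--Schwarz, the bounded eigenvalues of $V_i$, the moment bounds, and the rate $\|\widehat{\tau g}-\tau g_o\|_{P,2}\leq\delta_N$ --- is exactly the paper's proof of this lemma, which establishes only that nuisance-continuity bound. The additional empirical-process and ULLN terms (a) and (c) that you sketch are not re-proved in the paper but imported directly from Lemmas 12 and 15 of \cite{EmmeneggerSJS2023}, so your proposal is correct and, on the part the lemma actually covers, takes the same route.
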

\begin{proof}
   \begin{equation}
   \begin{split}        
       \phi(O_i;\theta,\tau g)-\phi(O_i;\theta, \tau g_o) &= (M_{\tau_i}y_i - \check{H}_i \theta)'V_i^{-1}\check{H}_i -(M_{\tau_i}y_i - M_{\tau_i}H_i \theta)'V_i^{-1}M_{\tau_i}H_i \\ 
       &=  (M_{\tau_i}y_i)'V_i^{-1}[\check{H}_i  - M_{\tau_i}H_i ]+ [(\check{H}_i  - M_{\tau_i}H_i) \theta)]'V_i^{-1}[(\check{H}_i  -M_{\tau_i} H_i) ],      
\end{split}       
   \end{equation}
Now, notice that $(\check{H}_i  -M_{\tau_i} H_i) = [M_{\tau_i}x_{1i}, M_{\tau_i}\tilde{X}_i, \widehat{\tau u}_i] - [M_{\tau_i}x_{1i}, M_{\tau_i}\tilde{X}_i,M_{\tau_i} u_i]  = [0, 0, \widehat{\tau u}_i - M_{\tau_i} u_i ]  =[0, 0, \xi_i ] $.  Then, we are left with:
   \begin{equation}
       (M_{\tau_i}y_i)'V_i^{-1}\xi_i+ \theta'\xi_i'V_i^{-1}\xi_i.
   \end{equation}

\noindent 
 With $P \in \{\mathcal{P}_N\}_{N\geq 1}$, then:
   \begin{equation}
   \begin{split}
      \operatorname*{sup}_{\theta \in \Theta, \tau g \in \mathcal{G}} \left \|\mathbb{E}_P[\mathbb{E}_{n_{T,b}}[(M_{\tau_i} y_i)'V_i^{-1}\xi_i+ \theta'\xi_i'V_i^{-1}\xi_i]] \right\| & \leq   \operatorname*{sup}_{\theta \in \Theta,\tau g \in \mathcal{G}}  \left\|\mathbb{E}_P[(M_{\tau_i} y_i)'V_i^{-1}\xi_i ]\right\|+  \operatorname*{sup}_{\theta \in \Theta, \tau g \in \mathcal{G}}  \left \|\mathbb{E}_P[\theta'\xi_i'V_i^{-1}\xi_i]\right \|  \\ &  \leq \operatorname*{sup}_{\theta \in \Theta, \tau g \in \mathcal{G}} \left \|\mathbb{E}_P[\sum_t \lambda_t (q_t'M_{\tau_i} y_i)(q_t'\xi_i)]\right \|+\\ & \operatorname*{sup}_{\theta \in \Theta, \tau g \in \mathcal{G}} \left \|\theta \mathbb{E}_P[\sum_t \lambda_t (q_t'\xi_i')(q_t'\xi_i)]\right \| \\ &\leq \operatorname*{sup}_{\theta \in \Theta,\tau  g \in \mathcal{G}} \left \|\mathbb{E}_P[\lambda_1 \sum_t(q_t'M_{\tau_i} y_i)(q_t'\xi_i)]\right \|+ \\ & \operatorname*{sup}_{\theta \in \Theta, \tau g \in \mathcal{G}} [\lambda_1 \left \|\theta\mathbb{E}_P \sum_t(q_t'\xi_i)(q_t'\xi_i)]\right \| \\ &\leq \operatorname*{sup}_{\theta \in \Theta, \tau g \in \mathcal{G}} \left\|\lambda_1 \left \|M_{\tau_i} y_i\right \|_{P,2}\left \|\xi_i\right \|_{P,2}\right\|+ \\ &\operatorname*{sup}_{\theta \in \Theta, \tau g \in \mathcal{G}} \lambda_1 \left \| \theta \right \|\left \|  \xi_i\right \|_{P,2}
   \end{split}       
   \end{equation}
\end{proof}
Since $\Theta$ is bounded, $\left \| \tau g - \tau g_o \right \|_{P,2} \leq \delta_N$, the last expression is bounded by $\delta_N$ by assumptions \ref{A_ThetaCompact}, and \ref{A_nuisance}.

Finally, by Frisch-Waugh-Lovell Theorem, we know that the estimated $\beta_o$ using
the moment condition \ref{orthogonalScore_covariates} is equal to the one obtained using \ref{non_orthogonal_score}.  

\subsection{Asymptotic Normality}\label{Proof_AsymptoticNormality}
\begin{proof} 
In order to prove the asymptotic normality of the SLCF estimator of $\beta_{1o}$, we write the estimator as: 

\begin{equation}
  \hat{\beta}_{1o} = \frac{1}{B}\sum_{b=1}^B \Big(argmin_{\beta_1 \in \mathbb{R}} \frac{1}{n_{T,b}} \sum_{i \in S_b} 
(M_{\tau_i}y_i - M_{\tau_i} x_{1i}\beta_{1} - M_{\tau_i} \tilde{X}_{i}'\beta_2 -  \widehat{\tau u_i}^{S_b}\rho)'\hat{A}_i^{S_b} V_i^{-1} \hat{A}_i^{S_b}(M_{\tau_i}y_i - M_{\tau_i} x_{1i}\beta_{1} - M_{\tau_i} \tilde{X}_{i}'\beta_2 - \widehat{\tau u_i}^{S_b}\rho)\Big),
\end{equation}

\noindent with $\hat{A}_i^{S_b} = I_{T_{ai}} - [M_{\tau_i} \tilde{X}_i  \quad \widehat{\tau u_i}^{S_b} ]\big([M_{\tau_i} \tilde{X}_i \quad \widehat{\tau u_i}^{S_b} ]'[M_{\tau_i} \tilde{X}_i \quad \widehat{\tau u_i}^{S_b} ] \big)^{-1} [M_{\tau_i} \tilde{X}_i  \quad \widehat{\tau u_i}^{S_b} ]'$.

Such that: 

\begin{equation}
    \hat{\beta}_{1o} = \frac{1}{B}\sum_{b=1}^B \Big(\frac{1}{n_{T,b}}\sum_{i \in S^b}  x_{1i}'M_{\tau_i} \hat{A}_i^{S_b} V_i^{-1}\hat{A}_i^{S_b} M_{\tau_i} x_{1i}\Big)^{-1} \Big(\frac{1}{n_{T,b}}\sum_{i \in S^b} x_{1i}' M_{\tau_i} \hat{A}_i^{S_b} V_i^{-1} \hat{A}_i^{S_b} M_{\tau_i}y_i \Big).
\end{equation}


This is equal to: 

\begin{equation}
   \sqrt{N_T} (\hat{\beta}_{1o} - \beta_{1o}) = \sum_{b=1}^B \Big(\frac{1}{n_{T,b}}\sum_{i \in S^b}  x_{1i}'M_{\tau_i} \hat{A}_i^{S_b} V_i^{-1}\hat{A}_i^{S_b} M_{\tau_i} x_{1i}\Big)^{-1} \Big(\frac{\sqrt{N_T}}{n_{T,b}}\sum_{i \in S^b} x_{1i}' M_{\tau_i} \hat{A}_i^{S_b} V_i^{-1} \hat{A}_i^{S_b} (M_{\tau_i} u_i \rho_o + M_{\tau_i} \omega_i) \Big).
\end{equation}

 $\hat{A}_i^{S_b} = A_i + o_{p}(1)$ by Lemma
 \ref{Lemma_Convergence_A_i}.
In addition, $\hat{A}_i^{S_b}$ is equal to $A_i$ plus
a matrix that is a function of 
cross-products of $\xi_i$, $M_{\tau_i}u_i$, $\tau \tilde{X}_i$, and
 $\xi_i$ itself. 
By cross-fitting and assumption \ref{A_reduced_form}
 the cross-products of $\xi_i$ and $M_{\tau_i}u_i$ have
 mean 0 conditional on $S_b^c$, 
 and the quadratic terms of $\xi_i$ are asymptotically 
 negligible since the 
 squared error of the nuisance parameter
converges to 0 after scaling by $\sqrt{N_T}$ by assumption
\ref{A_nuisance}. Note that the cross-products of $\xi_i$
 with $M_{\tau_i}\tilde{X}_i$ are multiplied by the structural
 error terms in the following equation.  Thus, they are asymptotically negligible by 
 assumptions \ref{A_structural_eq}, \ref{A_reduced_form},  
  and  cross-fitting. Then, we have that:

\begin{equation}
    \begin{split}      
    \frac{\sqrt{N_T}}{n_{T,b}}\sum_{i \in S^b} x_{1i}' M_{\tau_i} \hat{A}_i^{S_b} V_i^{-1} \hat{A}_i^{S_b} (M_{\tau_i} u_i \rho_o + M_{\tau_i} \omega_i) 
   = & \frac{\sqrt{N_T}}{n_{T,b}}\sum_{i \in S^b} x_{1i}' M_{\tau_i} A_i V_i^{-1} A_i (M_{\tau_i} u_i \rho_o + M_{\tau_i} \omega_i) + o_p(1) \\
   = & \frac{\sqrt{N_T}}{n_{T,b}}\sum_{i \in S^b} x_{1i}' M_{\tau_i} A_i V_i^{-1} A_i M_{\tau_i} \omega_i + o_p(1)
    \end{split}
\end{equation}

We define: 

\begin{equation}
\sigma^2_{1i} = \mathbb{E}_{P_N} [ x_{1i}' M_{\tau_i} A_i V_i^{-1}
 A_i M_{\tau_i} \omega_i \omega_i' M_{\tau_i} A_i 
 V_i^{-1} A_i M_{\tau_i}x_{1i}] ,
\end{equation}

\begin{equation}
\bar{\sigma^2_{1}} = \frac{ 1}{N_T}\sum_i^N\sigma^2_{1i}.    
\end{equation}
 Now, we verify the Lindeberg condition: 

\begin{equation}
    \begin{split}
    & \frac{1}{\bar{\sigma^2_{1}}N_T}\sum_i^N \mathbb{E}_{P_N} \Big[ \| x_{1i}' M_{\tau_i} A_i V_i^{-1} A_i M_{\tau_i} \omega_i \|^2 \mathbbm{1}_{\{\| x_{1i}' M_{\tau_i} A_i V_i^{-1} A_i M_{\tau_i} \omega_i\| \geq \epsilon \sqrt{N_T\bar{\sigma^2_{1}}}  \} }\Big]  \\
     & \leq \frac{1}{\bar{\sigma^2_{1}}N_T}\sum_i^N \| x_{1i}' M_{\tau_i} A_i V_i^{-1} A_i M_{\tau_i} \omega_i \|_{P_N}^2 P_N\Big(\| x_{1i}' M_{\tau_i} A_i V_i^{-1} A_i M_{\tau_i} \omega_i\| \geq \epsilon \sqrt{N_T\bar{\sigma^2_{1}}}  \Big)    \\    
     & \leq  \frac{1}{\bar{\sigma^2_{1}}N_T}\sum_i^N \| x_{1i}' M_{\tau_i} A_i V_i^{-1} A_i M_{\tau_i} \omega_i \|_{P_N}^2 \| x_{1i}' M_{\tau_i} A_i V_i^{-1} A_i M_{\tau_i} \omega_i\|_{P_N} \frac{1}{ \epsilon \sqrt{N_T\bar{\sigma^2_{1}}} }     \\    
     & \leq  \frac{1}{\bar{\sigma^2_{1}}N_T}\sum_i^N \| x_{1i}' M_{\tau_i} A_i V_i^{-1} A_i M_{\tau_i} \omega_i \|_{P_N}^3 \frac{1}{ \epsilon \sqrt{N_T\bar{\sigma^2_{1}}} }      \\
     & \leq \frac{1}{\sqrt{N_T}} \rightarrow 0 \text{ as } N_T \rightarrow \infty \forall \epsilon >0.
    \end{split}    
\end{equation}

by Markov inequality, and Assumption \ref{A_varcov_moments}.  Then, by the Lindeberg-Feller Central Limit Theorem we have that:

\begin{equation}
    \bar{\sigma^2_{1}}^{-1/2} \frac{1}{\sqrt{N_T}}\sum_{i} x_{1i}' M_{\tau_i} A_i V_i^{-1} A_i M_{\tau_i} \omega_i \xrightarrow{d} \mathcal{N}(0, 1).
\end{equation}

We define:

\begin{equation}
    \begin{split}
\frac{1}{n_{T,b}}  \sum_{i\in S^b} x_{1i}'M_{\tau_i} \hat{A}_i^{S_b} V_i^{-1}\hat{A}_i^{S_b} M_{\tau_i} x_{1i} 
=  \mathbb{E}_{n_T,b} [ x_{1i}'M_{\tau_i} \hat{A}_i^{S_b} V_i^{-1}\hat{A}_i^{S_b} M_{\tau_i} x_{1i}]    
    \end{split}
\end{equation}

Also, we have that:
\begin{equation}
    \begin{split}
 \mathbb{E}_{n_T,b} \Big[ x_{1i}'M_{\tau_i} \hat{A}_i^{S_b} V_i^{-1}\hat{A}_i^{S_b} M_{\tau_i} x_{1i} - \mathbb{E}_P[x_{1i}'M_{\tau_i} A_i V_i^{-1}A_i M_{\tau_i} x_{1i}]\Big] \\
  =      \mathbb{E}_{n_T,b} \Big[ x_{1i}'M_{\tau_i} \hat{A}_i^{S_b} V_i^{-1}\hat{A}_i^{S_b} M_{\tau_i} x_{1i} - \mathbb{E}_P[x_{1i}'M_{\tau_i} A_i V_i^{-1}A_i M_{\tau_i} x_{1i}]\Big]+ \\
 \mathbb{E}_{n_T,b} \Big[ x_{1i}'M_{\tau_i} A_i V_i^{-1}A_i M_{\tau_i} x_{1i} - \mathbb{E}_P[x_{1i}'M_{\tau_i} A_i V_i^{-1}A_i M_{\tau_i} x_{1i}]\Big] = o_P(1).
    \end{split}
\end{equation}
by assumptions \ref{A_nuisance}, \ref{A_bound}, and $\frac{1}{n_{T,b}}\sum_{i \in S^b}  x_{1i}'M_{\tau_i} \hat{A}_i^{S_b} V_i^{-1}\hat{A}_i^{S_b} M_{\tau_i} x_{1i} = \frac{1}{n_{T,b}}\sum_{i \in S^b}  x_{1i}'M_{\tau_i} A_i V_i^{-1}A_i M_{\tau_i} x_{1i} + o_P(1)$ by Lemma \ref{Lemma_Convergence_A_i}. 
Then,

\begin{equation}
    \frac{1}{B}\sum_b \frac{1}{n_{T,b}}\sum_{i \in S^b} x_{1i}'M_{\tau_i} \hat{A}_i^{S_b} V_i^{-1}\hat{A}_i^{S_b} M_{\tau_i} x_{1i} = \mathbb{E}_P[x_{1i}'M_{\tau_i} A_i V_i^{-1} A_i M_{\tau_i} x_{1i}].    
\end{equation}

Thus, 
\begin{equation}
    \sqrt{N_T} (\hat{\beta}_{1o} - \beta_{1o}) \xrightarrow{d}  \mathcal{N}(0, \sigma^2_o),
\end{equation}

with $\sigma^2_o = \mathbb{E}_P[x_{1i}'M_{\tau_i} A_i V_i^{-1} A_i M_{\tau_i} x_{1i}]^{-1} \mathbb{E}_{P} [ x_{1i}' M_{\tau_i} A_i V_i^{-1} A_i M_{\tau_i} \omega_i\omega_i' M_{\tau_i} A_i V_i^{-1} A_i M_{\tau_i}x_{1i}] \mathbb{E}_P[x_{1i}'M_{\tau_i} A_i V_i^{-1} A_i M_{\tau_i} x_{1i}] ^{-1}$.
\end{proof}
\begin{lemma}\label{Lemma_Convergence_A_i}

Notice that $[M_{\tau_i} \tilde{X}_i  \quad \widehat{\tau u_i}^{S_b} ] = [M_{\tau_i} \tilde{X}_i \quad M_{\tau_i} u_i + M_{\tau_i} g(\tilde{x}_i, z_i) - \widehat{M_{\tau_i} g(\tilde{x}_i, z_i)}^{S_b}]$, then by Assumption \ref{A_bound} we have that: 
                
                \begin{equation}
                    [M_{\tau_i} \tilde{X}_i  \quad \widehat{\tau u_i}^{S_b} ] =
                     [M_{\tau_i} \tilde{X}_i  \quad M_{\tau_i} u_i] + o_p(1).                
                \end{equation}                  

\noindent The linear map $a(L_i) = I - L_i(L_i'L_i)^{-1}L_i'$ is a continuous function since matrix addition, multiplication and inversion are continuous functions. 

\noindent Then, by the continuous Mapping Theorem for convergence in probability we have that $\hat{A}_i^{S_b} = A_i + o_p(1)$.

\end{lemma}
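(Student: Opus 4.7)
The plan is to prove the result in two stages: first show that the augmented regressor matrix $L_i^{S_b} := [M_{\tau_i}\tilde{X}_i,\; \widehat{\tau u_i}^{S_b}]$ converges in probability to $L_i := [M_{\tau_i}\tilde{X}_i,\; M_{\tau_i} u_i]$, and second apply the Continuous Mapping Theorem to the projection-onto-orthogonal-complement map $L \mapsto I - L(L'L)^{-1}L'$.

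For the first stage, I would use the decomposition
\begin{equation}
\widehat{\tau u_i}^{S_b} \;=\; M_{\tau_i} x_{1i} - \widehat{\tau g_o}^{S_b^c}(\tilde{x}_i, z_i) \;=\; M_{\tau_i} u_i + \bigl(\tau g_o(\tilde{x}_i, z_i) - \widehat{\tau g_o}^{S_b^c}(\tilde{x}_i, z_i)\bigr).
\end{equation}
By Assumption \ref{A_nuisance}, the nuisance estimator satisfies $\|\widehat{\tau g_o}^{S_b^c} - \tau g_o\|_{P,2} \leq \delta_N \to 0$. Because cross-fitting makes $\widehat{\tau g_o}^{S_b^c}$ independent of the observations in $S_b$, conditioning on $S_b^c$ and applying Markov's inequality on the squared residual gives $\widehat{\tau u_i}^{S_b} - M_{\tau_i} u_i = o_p(1)$ entrywise, so $L_i^{S_b} = L_i + o_p(1)$. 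The boundedness provided by Assumption \ref{A_data} guarantees that this convergence holds without further moment conditions.

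For the second stage, the map $a(L) = I - L(L'L)^{-1}L'$ is continuous on the open set where $L'L$ is nonsingular, because matrix addition, multiplication, and inversion are continuous on their natural domains. Under the identification assumptions (\ref{A_structural_eq}, \ref{A_reduced_form}, and \ref{A_identification}) the transformed design $L_i$ has full column rank with probability one, so $L_i$ lies in that open set almost surely. The Continuous Mapping Theorem for convergence in probability then yields $\hat{A}_i^{S_b} = a(L_i^{S_b}) \xrightarrow{p} a(L_i) = A_i$, which is the claim.

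The main obstacle is verifying nonsingularity of $L_i'L_i$ at the limit, so that the matrix inverse defining $a(\cdot)$ is continuous there. This is a rank condition requiring that $M_{\tau_i}\tilde{X}_i$ and $M_{\tau_i} u_i$ are not collinear after the panel transformation; although it is implicit in the identifiability Assumption \ref{A_identification}, it merits being stated explicitly. A secondary subtlety is that the rate $\delta_N$ is a population $L^2$ bound on the function estimator, and turning it into a pointwise probabilistic statement about the evaluated residual relies on the boundedness of $(\tilde{x}_i, z_i)$ in Assumption \ref{A_data} together with the cross-fitting independence.
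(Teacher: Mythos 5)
Your proposal follows essentially the same two-stage route as the paper: decompose $\widehat{\tau u_i}^{S_b}$ as $M_{\tau_i}u_i$ plus the nuisance estimation error, conclude the stacked regressor matrix converges, and then push the convergence through the map $L \mapsto I - L(L'L)^{-1}L'$ via the Continuous Mapping Theorem. Where you differ is in the justification of the first step, and your version is the sounder one: the paper attributes the $o_p(1)$ claim to Assumption \ref{A_bound} (finite moments of the data), but bounded moments alone cannot make the estimation error $\tau g_o - \widehat{\tau g_o}^{S_b^c}$ vanish; what actually drives it is Assumption \ref{A_nuisance} (the $\delta_N$ consistency of the super learner in $\|\cdot\|_{P,2}$) combined with the cross-fitting independence of $\widehat{\tau g_o}^{S_b^c}$ from the evaluation fold and a Markov/Chebyshev step, exactly as you argue. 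You also flag a genuine soft spot the paper passes over in silence: continuity of $a(\cdot)$ at the limit requires $L_i'L_i$ to be nonsingular, i.e.\ that $[M_{\tau_i}\tilde{X}_i \ \ M_{\tau_i}u_i]$ has full column rank after the panel transformation, which is not explicitly guaranteed by the stated assumptions (it is implicit in the invertibility of the Jacobian needed for Theorem \ref{Theorem_AN_theta}, but deserves to be stated). In short, your proof is correct, structurally identical to the paper's, and repairs two points where the paper's one-line justifications are too thin.
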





\end{document}